\newtheorem{thm}{Theorem}[section]
\newtheorem{cor}[thm]{Corollary}
\newtheorem{lemma}[thm]{Lemma}
\newtheorem{prop}[thm]{Proposition}
\theoremstyle{definition}
\newtheorem{defn}[thm]{Definition}
\theoremstyle{remark}
\newtheorem{remark}[thm]{Remark}
\newtheorem{example}[thm]{Example}
\newcommand{\PB}{\left\{\cdot\,,\cdot\right\}}
\newcommand{\PBm}{\left\{\cdot\,;\cdot\right\}}
\newcommand{\pB}[1]{\left\{#1,\cdot\right\}}
\newcommand{\pBm}[1]{\left\{#1\,;\cdot\right\}}
\newcommand{\Pb}[1]{\left\{\cdot\,,#1\right\}}
\newcommand{\pb}[1]{\left\{#1\right\}}
\newcommand{\pbm}[2]{\left\{#1\,;#2\right\}}
\newcommand{\lb}[1]{\[#1\]}
\newcommand{\LB}{[\cdot\,,\cdot]}
\newcommand{\lB}[1]{\left[#1,\cdot\right]}
\renewcommand{\(}{\left(}
\renewcommand{\)}{\right)}
\renewcommand{\[}{\left[}
\renewcommand{\]}{\right]}
\newcommand{\set}[1]{\left\{#1\right\}}
\newcommand{\cA}{\mathcal A}
\newcommand{\fA}{\mathfrak A}
\newcommand{\cB}{\mathcal B}
\newcommand{\cH}{\mathcal H}
\newcommand{\fH}{\mathfrak H}
\renewcommand{\H}[2]{{\mathfrak H}^{(#1)}_{#2}}
\newcommand{\cI}{\mathcal I}
\newcommand{\cJ}{\mathcal J}
\newcommand{\fg}{\mathfrak g}
\newcommand{\cO}{\mathcal O}
\newcommand{\cS}{\mathcal S}
\newcommand{\bbC}{\mathbb C}
\newcommand{\bbK}{\mathbb K}
\newcommand{\bbN}{\mathbb N}
\newcommand{\bbT}{\mathbb T}
\newcommand{\bbZ}{\mathbb Z}
\newcommand{\bF}{\mathbf F}
\newcommand{\bG}{\mathbf G}
\newcommand{\bH}{\mathbf H}
\newcommand{\cl}[1]{\overline{#1}}
\newcommand{\pr}{p_\Pi}
\newcommand{\prp}{p_{\Pi'}}
\newcommand{\h}{\nu}
\newcommand{\q}{{\mathbf q}}
\newcommand{\Sym}{\mathop{\rm Sym}\nolimits}
\newcommand{\leqs}{\leqslant}
\newcommand{\tns}{\otimes}
\newcommand{\pp}[2]{\frac{\partial#1}{\partial#2}}
\newcommand{\p}{\partial}
\newcommand{\Span}{\mathop{\rm Span}\nolimits}
\newcommand{\Id}{\mathrm{Id}}
\newif\ifprivate
 \numberwithin{equation}{section}
\def\???{\ifprivate {\bf {???}} \marginpar{{\Huge {\bf ?}}}\else \fi}
\numberwithin{equation}{section}
\begin{document}  

\nocite{*} 

\title[Poisson algebras from deformations]{Commutative Poisson algebras from 
deformations of noncommutative algebras}

\author[Mikhailov]{Alexander V. Mikhailov} \address{Alexander V. Mikhailov, School of Mathematics, University of Leeds, UK}\email{a.v.mikhailov@leeds.ac.uk}

\author[Vanhaecke]{Pol Vanhaecke} \address{Pol Vanhaecke, Laboratoire de Math\'ematiques et
  Applications, UMR 7348 CNRS, Universit\'e de Poitiers, 11 Boulevard Marie et Pierre Curie, Téléport 2 - BP 30179,
  86 962 Chasseneuil Futuroscope Cedex, France}\email{pol.vanhaecke@math.univ-poitiers.fr}

\thanks{Partially supported by a the LMS (research in pairs) grant \#42218}

\date{\today}
\subjclass[2000]{53D17, 37J35}

\keywords{}

\begin{abstract}
It is well-known that a formal deformation of a commutative algebra $\cA$ leads to a Poisson bracket on $\cA$ and
that the classical limit of a derivation on the deformation leads to a derivation on $\cA$, which is Hamiltonian
with respect to the Poisson bracket. In this paper we present a generalisation of it for formal deformations of an
arbitrary noncommutative algebra $\cA$. The deformation leads in this case to a Poisson algebra structure on
$\Pi(\cA):=Z(\cA)\times(\cA/Z(\cA))$ and to the structure of a $\Pi(\cA)$-Poisson module on $\cA$. The limiting
derivations are then still derivations of $\cA$, but with the Hamiltonian belong to $\Pi(\cA)$, rather than to
$\cA$. We illustrate our construction with several cases of formal deformations, coming from known quantum
algebras, such as the ones associated with the nonabelian Volterra chains, Kontsevich integrable map, the quantum
plane and the quantised Grassmann algebra.

\end{abstract}

\maketitle

\setcounter{tocdepth}{2}



\section{Introduction}

By a well-known procedure, usually referred to as ``taking the classical limit'', quantum systems become classical
systems, equipped with a Hamiltonian stucture (symplectic or Poisson).  Dirac~\cite{Dirac25} observed that
Heisenberg's noncommutative multiplication of operators in a quantum algebra~$\cA_\hbar$ is a deformation of the
commutative multiplication of functions on phase space. As the deformation parameter, the Planck constant $\hbar$,
tends to zero, the algebra $\cA_\hbar$ tends to the commutative algebra~$\cA$ of functions of phase space and the
commutator of operators $\hat a,\hat b\in\cA_\hbar$ converges to the Poisson bracket
\begin{equation*}
  \{a,b\}=\lim\limits_{\hbar\to 0}\frac{i}{ \hbar}[\hat a,\hat b]
\end{equation*}
of the corresponding functions $a,b\in\cA $ on phase space. Heisenberg's equation with a quantum Hamiltonian $\hat
H\in\cA_\hbar$ tends to the Hamiltonian equation
\begin{equation*}
  \frac{d\hat a}{dt}=\frac{i}{\hbar}[\hat H,\hat a]\quad \xrightarrow{\hbar\to 0}\quad\frac{d a}{dt}=\{ H,a\}
\end{equation*}
and defines the Hamiltonian derivation $\p_H:=\{H,\cdot\}\,:\cA\to\cA$.

A novel approach to quantum algebras has recently be introduced by the first author \cite{AvM20}. These algebras,
which will here be simply called \emph{quantum algebras}, admit by definition a basis of ordered monomials (see
Definition \ref{def:quantum_algebra}). They appear naturally in the study of nonabelian systems
\cite{CMW22,buchstaber2023kdv,CMW24} and by taking the classical limit one obtains the underlying
classical Hamiltonian system, under the assumption that the limiting algebra is a \emph{commutative} algebra.

An example of where the method of taking the classical limit fails is given by the quantum algebra \cite{AvM20}
\begin{equation}\label{eq:CA_intro}
  \cA_q:=\frac{\bbC(q)\langle x_i\rangle_{i\in\bbZ}}{\langle x_{i+1}x_i-(-1)^iqx_ix_{i+1},x_ix_j+x_jx_i\rangle_{|i-j|\neq1}}\;,
\end{equation}
where $\bbC(q)\langle x_i\rangle_{i\in\bbZ}$ is the free algebra on $\dots,x_{-1},x_0,x_1,\dots$. It is clear that
for no value of the deformation parameter $q$  this algebra becomes
commutative, or even graded commutative.  On $\cA_q$ there is a well defined differential-difference equation
\begin{align*}
  \p_{2} x_\ell&=\frac1{q^{2 }-1}\lb{\fH_{2 },x_\ell}=x_\ell x_{\ell+1}^2-x_{\ell-1}^2x_\ell+x_\ell^2x_{\ell+1}-x_{\ell-2}x_{\ell-1}x_\ell+x_{\ell}x_{\ell+1}x_{\ell+2}-
  x_{\ell-1}x_\ell^2\;
\end{align*}
which is the first member of an infinite hierarchy of odd-degree Volterra type systems. All equations of the
hierarchy can be presented for $m=1,2,\dots$ in the Heisenberg form \cite{CMW24}
\begin{equation*}
  \p_{2m} x_\ell=\frac1{q^{2m}-1}\lb{\fH_{2m},x_\ell}\;.
\end{equation*}
They have non-trivial limits when $q$ goes to any $2m$-th root of unity. The problem is that the Hamiltonian
structure of these limits cannot be obtained by taking the conventional classical limit. It was one of the motivations to undertake this study. We solved this problem and in Section \ref{par:another} we use this example to illustrate our method.

There were several attempts to develop a Hamiltonian description of differential equations on associative algebras
which are not commutative. In the case of a free associative algebra $\cA$ a Lie (``Poisson'') bracket can be
defined on the quotient space $\cA^\natural=\cA/ [\cA,\cA]$ of the algebra $\cA$ over the linear space $[\cA,\cA]$
spanned by all commuators in $\cA$, and elements of $\cA^\natural$ define Hamiltonian derivations of $\cA$
\cite{MS2000CMP}. The linear space $\cA^\natural$ does not admit a structure of a Poisson algebra since a
compatible multiplication is missing.  Farkas and Letzter demonstrated that for any prime Poisson algebra $\cA$,
which is {\em not commutative}, the Poisson bracket must be the commutator in $\cA$, up to an appropriate scalar
factor \cite{FL}.  Consequently, it became widely acknowledged that the definition of a Poisson algebra in the
noncommutative case is too restrictive. Several modifications of the definition, inspired by noncommutative Poisson
and differential geometries, as well as Hamiltonian description of noncommutative differential equations were
explored in \cite{Vandenbergh, Kac, CEG}.

In this paper we propose another approach, where the structure that we put on $\cA$ is that of a Poisson module
over a commutative Poisson algebra $\Pi(\cA)$ which we also construct from the deformation. Specifically, suppose
that $(\cA[[\h]],\star)$ is a (formal) deformation of an associative algebra~$\cA$, whose center is denoted
$Z(\cA)$. The (noncommutative) Poisson algebra $(\cA[[\h]],\LB_\star)$ admits $\cH_\h :=Z(\cA)+\h\cA[[\h]]$ as a
Poisson subalgebra, and on it we can define a rescaled Lie bracket $\LB_\h:=\frac1\h\LB_\star$, which makes
$(\cH_h,\LB_\h)$ into a Poisson algebra. The latter admits $\h\cH_\h$ as a Poisson ideal, so that $\cH_\h/\h\cH_\h$
inherits the structure of a Poisson algebra, i.e., a multiplication and a Poisson bracket. Since
$\cH_\h/\h\cH_\h\simeq Z(\cA)\times(\cA/Z(\cA))$, in a natural way, $\Pi(\cA):=Z(\cA)\times(\cA/Z(\cA))$ is a
Poisson algebra. It is in fact a \emph{commutative} Poisson algebra. The commutative multiplication and the Poisson
bracket on $\Pi(\cA)$ will be denoted by $\cdot$ and $\PB$. Like any Poisson bracket on a commutative algebra, the
bracket is completely specified on generators and can be computed for arbitrary elements by using derivatives (see
\eqref{eq:PQ_bracket}), which is one of the main virtues of Poisson brackets on 
commutative algebras. 

In order to construct the $\Pi(\cA)$-Poisson module structure on $\cA$, we first consider the (noncommutative)
Poisson algebra $(\cH_\h/\h^2\cA[[\h]],\LB_\h)$ as a Poisson module over itself, and then show that
$\h\cA[[\h]]/\h^2\cA[[\h]]\simeq\cA$ is a Poisson submodule. Now $\Pi(\cA)$ can be identified with the quotient of
$\cH_\h/\h^2\cA[[\h]]$ with respect to the Poisson ideal $\h Z(\cA)$ and by reduction $\cA$ becomes a Poisson
module over $\Pi(\cA)$, with actions denoted by $\cdot$ and $\left\{\cdot\,;\cdot\right\}$. Moreover we show that
for any $\bH\in\Pi(\cA)$, $\pBm\bH$ is a derivation of $\cA$, making it into a \emph{Hamiltonian} derivation of $\cA$.
Since $\Pi(\cA)$ is commutative, the Poisson module structure can again easily be computed from the formulas, given
in terms of generators for $\Pi(\cA)$ and of $\cA$.

The construction of $\Pi(\cA)$ and of the Poisson module structure on $\cA$ will be illustrated at length in
several examples (Section \ref{sec:ex1}), associated with quantum algebras; its applications to nonabelian
 systems, related to these quantum algebras will be illustrated in Section \ref{sec:ex2}. We will in this
introduction only describe shortly one example, which is worked out in detail in Sections \ref{par:quantum_plane}
and~\ref{par:quantum_torus_2}.

For concreteness, we illustrate our approach in a simple example, related to the integrable Kontsevich equation
\cite{Wolf2012}. Recall that the (two-dimensional) quantum torus is defined as the quantum algebra
\begin{equation}\label{eq:quantum_torus_intro}
  \cA_q:=\bbT_q[x,y]=\frac{\bbC(q)\langle x,y,x^{-1},y^{-1}\rangle}{\langle yx-qxy\rangle}\;.
\end{equation}
It is a localization of the quantum plane $\frac{\bbC(q)\langle x,y\rangle}{\langle yx-qxy\rangle}$.  On $\cA_q$
there is a hierarchy of commuting derivations, given for $n=1,2,\dots$ by
\begin{equation}\label{Khier}
  \p_nx=\frac{1}{1-q^n}[\fH^{(n)},x]\;,\quad   \p_ny=\frac{1}{1-q^n}[\fH^{(n)},y]\;.
\end{equation}
The quantum Hamiltonians $\fH^{(n)}$ are given for $n=1,2$ by
\begin{align*}
  \fH^{(1)}&=qx^{-1} y^{-1}+q y^{-1}+y+qx+x^{-1}\;,\\
  \fH^{(2)}&=\(\fH^{(1)}\)^2-(1+q)\fH^{(1)}-4q\;.
\end{align*}
The coefficients in the right-hand side of equations (\ref{Khier}) are Laurent polynomials in the
variable~$q$. Consequently, the derivation $\p_n$ possesses a well-defined limit as $q\to\xi$, where $\xi$
represents a primitive $n$-th root of unity. By setting $q=\xi+\h$, we can regard $\cA_{q}$ as an algebra of power
series $\cA[[\h]]$, where $\cA=\cA_\xi$. In the algebra $\cA_\xi$ we have $yx=\xi xy$, implying that the center of
$\cA$ is generated by $x^n$ and $y^n$. Moreover, $X:=\(x^n,\cl0\)$, $Y:=\(y^n,\cl0\)$ and
$W_{i,j}:=\(0,\cl{x^iy^j}\)$, where $0\leqslant i,j< n,\ i+j\neq0$ generate~$\Pi(\cA)$, while the
$\Pi(\cA)$--module $\cA$ is generated by $1,\, x$ and $y$. Table~\ref{tab:intro_1} describes in this case the
Poisson structure $\PB$ on $\Pi(\cA)$ in terms of these generators. The~$\cdot$ action and Lie action $\PBm$ of
$\Pi(\cA)$ on $\cA$ are given by Table~\ref{tab:intro_2}.  These tables can be used to describe the Hamiltonian
structure of the limiting derivation of $\p_n$ on $\cA$.

\bigskip

\begin{table}[h!]
  \def\arraystretch{1.8}
  \setlength\tabcolsep{0.4cm}
  \centering
\begin{tabular}{c|ccc}
   $\PB$&$X$&$Y$&$W_{k,\ell}$\\
  \hline
  $X$&$0$&$-\xi^{-1}n^2XY$&$-\xi^{-1}n\ell  XW_{k,\ell}$\\
  $Y$&$\xi^{-1}n^2XY$&$0$&$\xi^{-1} nkYW_{k,\ell}$\\
  $W_{i,j}$&$\xi^{-1} njXW_{i,j}$&$-\xi^{-1}niYW_{i,j}$&$\(\xi^{jk}-\xi^{i\ell}\)W_{i+k,j+\ell}$\\
\end{tabular}
\bigskip
\caption{}\label{tab:intro_1}
\end{table}
\begin{table}[h!]\label{t2i}
  \def\arraystretch{1.8}
  \setlength\tabcolsep{0.4cm}
\begin{tabular}{c|cc}
   $\cdot$&$x$&$y$\\
  \hline
  $X$&$x^{n+1}$&$x^ny$\\
  $Y$&$xy^n$&$y^3$\\
  $W_{i,j}$&$0$&$0$\\
\end{tabular}
\qquad\qquad
\begin{tabular}{c|cc}
   $\PBm$&$x$&$y$\\
  \hline
  $X$&$0$&$-\xi^{-1}nx^ny$\\
  $Y$&$\xi^{-1}nxy^n$&$0$\\
  $W_{i,j}$&$\(\xi^j-1\)x^{i+1}y^j$&$\(1-\xi^i\)x^iy^{j+1}$\\
\end{tabular}
\bigskip
\caption{}\label{tab:intro_2}
\end{table}
\goodbreak
\noindent Let us consider the case $n=2,\ q(\h)=-1+\h$ and $\cA=\bbC\langle x,y,x^{-1},y^{-1}\rangle/\langle
xy+yx\rangle$. Then

\begin{equation*}
  \frac{\fH^{(2)}}{q(\h)^2-1}=\frac1{\h}\(H^{(2)}_0+\h H^{(2)}_1\)\pmod{\cH_\h}\;,
\end{equation*}
where
\begin{align*}
  H^{(2)}_0&=-\frac1{2}\(x^2+y^2+x^{-2}+y^{-2}-x^{-2}y^{-2}\)\;,\\
  H^{(2)}_1&=-\frac12\(x-y-xy-xy^{-1}-x^{-1}y+x^{-1}y^{-2}-x^{-2}y^{-1}\)\;.
\end{align*}
Let $\bH^{(2)}:=(H^{(2)}_0,H^{(2)}_1)$. Then, writing $U,V$ and $W$ respectively as a shorthand for
$W_{1,0},W_{0,1}$ and $W_{1,1}$,
\begin{equation*}
  \bH^{(2)}=-\frac12\(X+Y+X^{-1}+Y^{-1}-X^{-1}Y^{-1}+U-V-W-Y^{-1}W+X^{-1}Y^{-1}U-X^{-1}Y^{-1}V\)\;.
\end{equation*}
Using Table \ref{tab:intro_2}, we obtain the Hamiltonian equation on the algebra 
$\bbC\langle x,y,x^{-1},y^{-1}\rangle/\langle xy+yx\rangle$
\begin{align*}
  \p_{\bH^{(2)}}x&=
  \pbm{\bH^{(2)}}x=\pp{\bH^{(2)}}Y\cdot(-2xy^2)+\pp{\bH^{(2)}}V\cdot(-2xy^2)+\pp{\bH^{(2)}}W\cdot(-2x^2y)\\
  &=(1-Y^{-2}+X^{-1}Y^{-2})\cdot xy^2-(1+X^{-1}Y^{-1})\cdot xy-(1+Y^{-1}+X^{-1})\cdot x^2y\\
    &=xy^2-xy^{-2}+x^{-1}y^{-2}-xy-x^{-1}y^{-1}-x^2y-x^2y^{-1}-y\;,
\end{align*}
and similarly for $\p_{\bH^{(2)}}y$ (see the detailled calculations in Section \ref{par:quantum_torus_2}).

\goodbreak

The structure of the paper is as follows. We show in Section \ref{sec:poisson} how deformations of an associative
algebra $\cA$ lead to a Poisson algebra $\Pi(\cA)$ and to a $\Pi(\cA)$-Poisson module structure on $\cA$. We show
that the construction is functorial and show its relevance constructing Hamiltonian derivation on~$\cA$ from
Heisenberg derivations on $\cA[[\h]]$. We show in Section \ref{sec:ideals} how a quantised algebra, depending on
one or several parameters, can be viewed naturally as a formal deformation of some associative algebra. Examples of
quantised algebras, the corresponding deformations, Poisson algebras and Poisson modules will be given in Section
\ref{sec:ex1}, together with an application. Examples of nonabelian systems related to these quantised algebras
will be given in \ref{sec:ex2}; we illustrate how the Hamiltonian structure of their limit derivations is obtained
by our methods.

Lastly, in the paper, we do not explicitly consider the complex or Hermitian structure of the Hamiltonians and the
equations. While it can be straightforwardly incorporated in each case, doing so might compromise the clarity and
readability of the expressions.

\section{Poisson algebras and Poisson modules from deformations}\label{sec:poisson}

In this section we show how any deformation of a (not necessarily commutative) associative algebra $\cA$ leads in a
natural way to  a commutative Poisson algebra $\Pi(\cA)$ and  a $\Pi(\cA)$-Poisson module structure on $\cA$.

\subsection{Poisson algebras and deformations}

We first recall the definition of a (not necessarily commutative) Poisson algebra (see for example \cite{FL}). Let
$\cA$ be any (unitary) associative algebra over a commutative ring $R$. For $a,b\in\cA$ their product $a\cdot b$ in
$\cA$ will simply be denoted by $ab$ and their commutator $ab-ba$ by $\lb{a,b}$.

\begin{defn}
A skew-symmetric $R$-bilinear map $\PB:\cA\times\cA\to\cA$ is said to be a \emph{Poisson bracket} on $\cA$ when it
satisfies the Jacobi and Leibniz identities: for all $a,b,c\in\cA$,
\begin{align*}
    (1)&\pb{\pb{a,b},c}+\pb{\pb{b,c},a}+\pb{\pb{c,a},b}=0\;, \quad\hbox{(Jacobi identity)}, \\
    (2)&\pb{a,bc}=\pb{a,b}c+b\pb{a,c}\;, \qquad\qquad\qquad\quad\hbox{(Leibniz identity)}.
\end{align*}
$(\cA,\PB)$ or $(\cA,\cdot,\PB)$ is then said to be a \emph{Poisson algebra (over~$R$)}.  When $\cA$ is commutative
one says that the Poisson algebra $(\cA,\PB)$ is \emph{commutative}.

\end{defn}

\begin{example}\label{exa:commutator}
  Any associative algebra $\cA$ has a natural Poisson bracket, given by the commutator $\pb{a,b}:=[a,b]$. Indeed,
  it is well-known that $\LB$ is a Lie bracket on $\cA$ and one easily checks that the Leibniz identity also
  holds, so $(\cA,\LB)$ is a Poisson algebra. This Poisson bracket is trivial if and only if $\cA$ is commutative.
\end{example}

\begin{example}
  When $(\cA,\PB)$ is a Poisson algebra and $\cB$ is a subalgebra of $\cA$ which is also a Lie subalgebra of
  $(\cA,\PB)$, then $(\cB,\PB)$ is also a Poisson algebra; we say that it is a \emph{Poisson subalgebra} of
  $\cA$. Similarly, if $\cI$ is an ideal of $\cA$ which is also a Lie ideal of $(\cA,\PB)$ then $\cI$ is a
  \emph{Poisson ideal} of $\cA$ and $\cA/\cI$ is a Poisson algebra; we say that it is a \emph{quotient Poisson
  algebra} of $\cA$. The inclusion $\cB\to\cA$ and the projection $\cA\to\cA/\cI$ are then \emph{morphisms of
  Poisson algebras}, that is they are algebra as well as Lie algebra morphisms.
\end{example}

These examples will be used in what follows to construct, by Poisson reduction, commutative Poisson algebras from
deformations of a (not necessarily commutative) associative algebra, a notion which we first recall (see for
example \cite[Ch.\ 13]{PLV} for the commutative case).  Let $\cA$ be any associative algebra over $R$ and consider
$\cA[[\h]]$, the $R[[\h]]$-module of formal power series in some formal variable $\h$ with the elements of $\cA$ as
coefficients. By definition, any element $A$ of $\cA[[\h]]$ can be written in a unique way as $A=a_0+\h
a_1+\h^2a_2+\cdots,$ where $a_i\in\cA$ for all~$i$.

\begin{defn}\label{def:defo}
Suppose that $\cA[[\h]]$ is equipped with the structure of an associative algebra over~$R[[\h]]$, with product
denoted by~$\star$. Then $(\cA[[\h]],\star)$ (or more simply $\cA[[\h]]$) is said to be a \emph{(formal)
deformation} of $\cA$ if for any $a,b\in \cA$, $a\star b=ab+\cO(\h)$, i.e., $a\star b-ab\in\h \cA[[\h]]$.
\end{defn}
Said differently, the latter condition states that under the natural identification of $\cA$ with
$\cA[[\h]]/\h\cA[[\h]]$ the canonical projection $p:(\cA[[\h]],\star)\to (\cA,\cdot)$ is a morphism of
algebras. One naturally views $p$ as evaluation at $\h=0$.

\begin{example}\label{exa:moyal}
  A first classical example is the Moyal product which yields a non-trivial deformation of the algebra of functions
  on any symplectic manifold $(M,\omega)$. If we denote by $\Gamma$ the inverse to $\omega$, then $\Gamma$ is a
  Poisson structure on $M$ and the Moyal product of $f,g\in C^\infty(M)$ is given by
  \begin{equation*}
     f\star g=m\circ e^{\h\Gamma/2}(f\otimes g)\;,
  \end{equation*}
  where $m$ denotes the usual multiplication in $C^\infty(M)$.
\end{example}

\begin{example}\label{exa:Lie}
  A second classical example is the standard deformation of the algebra $\Sym\fg$ of polynomial functions on a Lie
  algebra $\fg$. If we denote by $T^\bullet\fg$ the tensor algebra of $\fg$ and by $\cI$ the two-sided ideal of
  $T^\bullet\fg[[\h]]$ generated by all $x\tns y-y\tns x-\h[x,y]$ with $x,y\in\fg$, then by the
  Poincaré-Birkhoff-Witt Theorem $T^\bullet\fg[[\h]]/\cI\simeq\Sym\fg[[\h]]$ making $\Sym\fg[[\h]]$ with the
  transported product $\star$ into a deformation of $\Sym\fg$.
\end{example}

See \cite{BFFLS1,BFFLS2} for more information on these examples and for the relevance of deformation theory to
quantisation. Notice that in both of these examples the associative algebra $\cA$, which is deformed, is
commutative.

The commutator in $(\cA[[\h]],\star)$ is denoted by $\LB_\star$: $\lb{A,B}_\star:=A\star B-B\star A$ for
$A,B\in\cA[[\h]]$.  We also introduce for all $i\in\bbZ_{>0}$, $R$-bilinear maps
$(\cdot,\cdot)_i,\PB_i:\cA\times\cA\to\cA$ by setting, for all $a,b\in\cA\subset\cA[[\h]]$,
\begin{align}
  a\star b&=ab+\h(a,b)_1+\h^2(a,b)_2+\cdots,\label{eq:prod_exp}\\
  \lb{a,b}_\star&=\lb{a,b}+\h\pb{a,b}_1+\h^2\pb{a,b}_2+\cdots,\label{eq:comm_exp}
\end{align}
where the values of the leading terms follow from $p(a\star b)=ab$ and $p(\lb{a,b}_\star)=ab-ba=\lb{a,b}$. Of
course, $\pb{a,b}_i=(a,b)_i-(b,a)_i$ for all $i$ and all $a,b\in\cA$.

\subsection{Commutative Poisson algebras from deformations}

It is well-known that when the associative algebra $\cA$ is commutative, $\cA$ inherits from any deformation
$(\cA[[\h]],\star)$ of $\cA$ a  Poisson bracket (see for example 
\cite{BFFLS1,PLV}). Recall that this
Poisson bracket is classically defined for $a,b\in \cA\subset\cA[[\h]]$ by
\begin{equation}\label{eq:classical_poisson}
  \pb{a,b}=\lim_{\h\to0}\frac{a\star b-b\star a}\h\;,
\end{equation}
and that the fact that it is a Poisson bracket follows from the associativity of $\star$.
In order to anti\-cipate the construction of the Poisson bracket in the noncommutative case, we first reformulate
the construction of the Poisson bracket~\eqref{eq:classical_poisson} in a different, more abstract way. In view of
Example~\ref{exa:commutator}, $(\cA[[\h]],\LB_\star)$ is a Poisson algebra over $R[[\h]]$. Since $\cA$ is
commutative, the commutator $\LB_\star$ takes values in $\h\cA[[\h]]$ and we can also consider on $\cA[[\h]]$ its
rescaling, defined for $A,B\in\cA[[\h]]$ by
\begin{equation}\label{eq:bracket_rescaled}
  \lb{A,B}_\h:=\frac1{\h}\lb{A,B}_\star=\frac{A\star B-B\star A}{\h}\in\cA[[\h]]\;.
\end{equation}
It is clear that this rescaling does not affect the Leibniz and Jacobi identities, so that $(\cA[[\h]],\LB_\h)$ is
also a Poisson algebra over the ring $R[[\h]]$. Since
\begin{equation*}
  \lb{\h\cA[[\h]],\cA[[\h]]}_\h=\lb{\cA[[\h]],\cA[[\h]]}_\star\subset\h\cA[[\h]]\;,
\end{equation*}
the (associative) ideal $\h\cA[[\h]]$ of $\cA[[\h]]$ is also a Lie ideal of $(\cA[[\h]],\LB_\h)$, hence is a
Poisson ideal of it. The quotient $\cA[[\h]]/\h\cA[[\h]]$ is therefore a Poisson algebra. Under the natural
identification of $\cA$ with $\cA[[\h]]/\h\cA[[\h]]$, we recover from \eqref{eq:bracket_rescaled} the Poisson
bracket \eqref{eq:classical_poisson} on $\cA$.

\begin{example}
  In the case of Example \ref{exa:moyal} the Poisson structure that one obtains is $\Gamma$. In the case
  of Example \ref{exa:Lie} one obtains the canonical Lie-Poisson structure on $\Sym\fg$ (see \cite[Ch.\ 7]{PLV}).
\end{example}

We now consider the more general case in which $\cA$ is not necessarily commutative. The center of $\cA$ is denoted
$Z(\cA)$. We suppose that $\cA[[\h]]$ is a deformation of $\cA$ and consider again the Poisson algebra
$(\cA[[\h]],\LB_\star)$. In this case we cannot define the rescaled bracket on $\cA[[\h]]$ as in
\eqref{eq:bracket_rescaled} because $\LB_\star$ does not take values in $\h\cA[[\h]]$ (in general). Let us
define $$\cH_\h:=Z(\cA)+\h\cA[[\h]]\;,$$ the $R[[\h]]$-submodule of $\cA[[\h]]$ generated by $Z(\cA)$ and
$\h\cA$. It consists of those elements of $\cA[[\h]]$ whose $\h$-independent term belongs to the center $Z(\cA)$
of~$\cA$.
\begin{lemma}\label{lma:ch_Poisson}
  $\cH_\h$ is a Poisson subalgebra of $(\cA[[\h]],\LB_\star)$. Moreover, the commutator $\LB_\star$, restricted to
  $\cH_\h$ takes values in $\h\cH_\h$, so that $(\cH_\h,\LB_\h)$ is also a Poisson algebra.
\end{lemma}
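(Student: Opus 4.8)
The plan is to verify the two defining properties of a Poisson subalgebra, namely closure of $\cH_\h$ under the associative product $\star$ and under the commutator $\LB_\star$, and to notice that the sharper ``Moreover'' statement is what really carries the lemma. I claim it suffices to prove the single refined inclusion $\lb{A,B}_\star\in\h\cH_\h$ for all $A,B\in\cH_\h$. Granting this, closure under $\LB_\star$ is automatic since $\h\cH_\h\subset\h\cA[[\h]]\subset\cH_\h$; moreover the rescaled bracket $\LB_\h=\frac1\h\LB_\star$ then maps $\cH_\h\times\cH_\h$ back into $\cH_\h$, which is exactly where the refinement from $\h\cA[[\h]]$ to $\h\cH_\h$ is needed, since naive division by $\h$ would only land in $\cA[[\h]]$. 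As rescaling a commutator by a scalar affects neither the Jacobi nor the Leibniz identity, it follows that $(\cH_\h,\LB_\h)$ is a Poisson algebra. Thus everything reduces to (i) that $\cH_\h$ is $\star$-closed, and (ii) that $\lb{A,B}_\star\in\h\cH_\h$ for $A,B\in\cH_\h$.

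Closure under $\star$ is the routine part. Writing two generic elements of $\cH_\h$ as $A=z+\h A'$ and $B=w+\h B'$ with $z,w\in Z(\cA)$ and $A',B'\in\cA[[\h]]$, I expand $A\star B$ and extract its $\h$-independent term: every summand carrying a factor $\h$ lies in $\h\cA[[\h]]$, so the constant term comes solely from $z\star w$, whose $\h$-independent part is $zw$ by Definition \ref{def:defo}. Since $Z(\cA)$ is closed under multiplication, $zw\in Z(\cA)$, whence $A\star B\in\cH_\h$.

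For (ii), expanding $\lb{A,B}_\star$ with the same $A,B$ shows its $\h$-independent term is $\lb{z,w}=zw-wz=0$, so $\lb{A,B}_\star\in\h\cA[[\h]]$ at once. To improve this to $\h\cH_\h=\h Z(\cA)+\h^2\cA[[\h]]$ I must show that the order-$\h$ coefficient of $\lb{A,B}_\star$ lies in $Z(\cA)$. A bookkeeping of the order-$\h$ contributions shows that the cross terms pairing $z$ or $w$ with a non-central factor all cancel, precisely because $z$ and $w$ are central, and that the surviving coefficient is exactly $\pb{z,w}_1$ from \eqref{eq:comm_exp}. Hence the main obstacle is the purely algebraic claim: $\pb{z,w}_1\in Z(\cA)$ whenever $z,w\in Z(\cA)$.

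I would settle this by a derivation argument. Fix $z\in Z(\cA)$ and put $D_z:=\pb{z,\cdot}_1:\cA\to\cA$, the order-$\h$ coefficient of $\lb{z,\cdot}_\star$, whose $\h$-independent coefficient $\lb{z,\cdot}$ vanishes by centrality. Applying the Leibniz identity for $\LB_\star$ in the form $\lb{z,c\star d}_\star=\lb{z,c}_\star\star d+c\star\lb{z,d}_\star$ and comparing order-$\h$ coefficients on both sides (using $c\star d=cd+\cO(\h)$ and the vanishing of the constant term of each $\lb{z,\cdot}_\star$) gives $D_z(cd)=D_z(c)d+cD_z(d)$, so $D_z$ is a derivation of $\cA$. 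Finally, for $w\in Z(\cA)$ and arbitrary $c\in\cA$, applying $D_z$ to the identity $wc=cw$ and using that $w$ commutes with $D_z(c)$ yields $D_z(w)c=cD_z(w)$ for every $c$, that is $\pb{z,w}_1=D_z(w)\in Z(\cA)$. This establishes (ii) and completes the proof.
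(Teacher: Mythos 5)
Your proof is correct, and it diverges from the paper's at exactly the one nontrivial point: showing that $\pb{z,w}_1\in Z(\cA)$ for $z,w\in Z(\cA)$. The paper gets this from the \emph{Jacobi} identity for $\LB_\star$: writing out $\lb{\lb{z,w}_\star,c}_\star+\lb{\lb{w,c}_\star,z}_\star+\lb{\lb{c,z}_\star,w}_\star=0$ and observing that the last two summands are $\cO(\h^2)$ (each inner bracket already being $\cO(\h)$ by centrality), so that the order-$\h$ term of the first summand, namely $\lb{\pb{z,w}_1,c}$, must vanish for every $c$. You instead use the \emph{Leibniz} identity for $\LB_\star$ to show that $D_z=\pb{z,\cdot}_1$ is a derivation of $\cA$, and then apply $D_z$ to $wc=cw$. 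Both identities are consequences of the associativity of $\star$, so the two arguments are on an equal logical footing; everything before this step (the reduction to the single inclusion $\lb{A,B}_\star\in\h\cH_\h$, the computation that the order-$\h$ coefficient of $\lb{A,B}_\star$ is exactly $\pb{z,w}_1$, matching the paper's equation \eqref{eq:star_com}, and the closure of $\cH_\h$ under $\star$) coincides with the paper's. A pleasant side effect of your route is that the derivation property of $\pb{z,\cdot}_1$ for central $z$ is precisely what the paper later reproves in Proposition \ref{prp:derivation} (by the same comparison of two expansions of $\lb{H_0,ab}_\star$), so your argument establishes that fact once and reuses it, whereas the paper's Jacobi argument is slightly shorter here but does not anticipate the later proposition.
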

\begin{proof}
Since $Z(\cA)$ is a subalgebra of $\cA$, $\cH_\h$ is a subalgebra of $\cA[[\h]]$. But~$\cH_\h$ is also a Lie
subalgebra of $(\cA[[\h]],\LB_\star)$ because
\begin{equation*}
  \lb{\cH_\h,\cH_\h}_\star=\lb{Z(\cA)+\h\cA[[\h]],Z(\cA)+\h\cA[[\h]]}_\star\subset\h\cA[[\h]]\subset\cH_\h\;,
\end{equation*}
where we have used that $\lb{Z(\cA),Z(\cA)}_\star\subset\h\cA[[\h]]$.  It follows that $\cH_\h$ is a Poisson
subalgebra of $(\cA[[\h]],\LB_\star)$, which is the first statement, and that the bracket $\LB_\star$, restricted
to~$\cH_\h$ takes values in $\h\cA[[\h]]$. In order to prove the second statement, we need to show that the
restriction of $\LB_\star$ to $\cH_\h$ actually takes values in $\h\cH_\h$, i.e., that when $A,B\in\cH_\h$ then
$\lb{A,B}_\star\in\h\cH_\h$. Writing $A=a+\h a_1+\cO(\h^2)$ and $ B=b+\h b_1+\cO(\h^2)$ we have, using that $a$
and~$b$ belong to the center of~$\cA$ and using \eqref{eq:comm_exp}, that
\begin{equation}\label{eq:star_com}
  \lb{A,B}_\star=\lb{a,b}_\star+\h\lb{a,b_1}+\h\lb{a_1,b}+\cO\(\h^2\)=\h\pb{a,b}_1+\cO\(\h^2\)\;,
\end{equation}
so we need to show that $\pb{a,b}_1\in Z(\cA)$ for any $a,b\in Z(\cA)$. Let $c$ be any element of $\cA$. In view of
the Jacobi identity for $\LB_\star$ (which follows from the associativity of $\star$),
\begin{equation}\label{eq:Jacobi}
  \lb{\lb{a,b}_\star,c}_\star+  \lb{\lb{b,c}_\star,a}_\star+  \lb{\lb{c,a}_\star,b}_\star=0\;.
\end{equation}
Using \eqref{eq:star_com} and that $a\in Z(\cA)$, the first term reads
\begin{equation*}
  \lb{\lb{a,b}_\star,c}_\star=\h\lb{\pb{a,b}_1,c}_\star+\cO\(\h^2\)=\h(\pb{a,b}_1c-c\pb{a,b}_1)+\cO\(\h^2\)\;.
\end{equation*}
Now since $a,b\in Z(\cA)$,
\begin{equation*}
  \lb{\lb{b,c}_\star,a}_\star=\h\lb{\pb{b,c}_1,a}_\star+\cO\(\h^2\)=\h^2\pb{\pb{b,c}_1,a}_1+\cO\(\h^2\)=\cO\(\h^2\)\;,
\end{equation*}
and similarly $ \lb{\lb{c,a}_\star,b}_\star=\cO(\h^2)$. Substituted in \eqref{eq:Jacobi} we get
$\pb{a,b}_1c-c\pb{a,b}_1=0$ for all $c\in\cA$, which shows that $\pb{a,b}_1\in Z(\cA)$.
\end{proof}
We are now ready to construct the commutative Poisson algebra which is naturally associated with a deformation.
\begin{prop}\label{prop:poisson_from_deformation}
  Let $(\cA[[\h]],\star)$ be a deformation of an associative algebra $\cA$.
  \begin{enumerate}
    \item [(1)] $\h\cH_\h$ is a Poisson ideal of $(\cH_\h,\LB_\h)$.
    \item [(2)] The quotient algebra $\cH_\h/\h\cH_\h$ is a commutative Poisson algebra.
  \end{enumerate}
\end{prop}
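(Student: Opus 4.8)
The plan is to deduce both statements directly from Lemma~\ref{lma:ch_Poisson}, which already establishes that $(\cH_\h,\LB_\h)$ is a Poisson algebra and, crucially, that $\lb{A,B}_\star\in\h\cH_\h$ whenever $A,B\in\cH_\h$. This last fact is the engine behind everything that follows. For part (1) I would first check that $\h\cH_\h$ is an associative two-sided ideal of $\cH_\h$. This is immediate from $R[[\h]]$-bilinearity of $\star$: for $A\in\cH_\h$ and $\h B\in\h\cH_\h$ one has $A\star(\h B)=\h(A\star B)$ and $(\h B)\star A=\h(B\star A)$, and both lie in $\h\cH_\h$ since $\cH_\h$ is a subalgebra. (Concretely $\h\cH_\h\subseteq\h\cA[[\h]]\subseteq\cH_\h$, so the bracket $\LB_\h$ is indeed defined on it.)

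It then remains to verify that $\h\cH_\h$ is a Lie ideal for the rescaled bracket. Here the key observation is that the rescaling absorbs exactly the factor of $\h$ produced when multiplying into the ideal: for $A,B\in\cH_\h$,
\begin{equation*}
  \lb{A,\h B}_\h=\tfrac1\h\lb{A,\h B}_\star=\tfrac1\h\,\h\lb{A,B}_\star=\lb{A,B}_\star\;,
\end{equation*}
and Lemma~\ref{lma:ch_Poisson} guarantees that this already lies in $\h\cH_\h$. Hence $\lb{\cH_\h,\h\cH_\h}_\h\subseteq\h\cH_\h$, so $\h\cH_\h$ is simultaneously an associative ideal and a Lie ideal of $(\cH_\h,\LB_\h)$, i.e.\ a Poisson ideal. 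This settles (1).

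For part (2), combining (1) with the general construction of quotient Poisson algebras recalled earlier in this section immediately makes $\cH_\h/\h\cH_\h$ into a Poisson algebra, with multiplication and bracket induced from $\star$ and $\LB_\h$. The one remaining point is commutativity of the induced multiplication. Writing $\cl A$ for the image of $A\in\cH_\h$ in the quotient, commutativity means $\cl A\,\cl B=\cl B\,\cl A$, i.e.\ $A\star B-B\star A\in\h\cH_\h$ for all $A,B\in\cH_\h$. But $A\star B-B\star A=\lb{A,B}_\star$, which lies in $\h\cH_\h$ by Lemma~\ref{lma:ch_Poisson}. Therefore the induced product is commutative, completing (2).

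I do not expect a serious obstacle here: the entire substance of the argument was already isolated in Lemma~\ref{lma:ch_Poisson}, and the present proposition is essentially clean bookkeeping. The only step demanding a little care is the rescaling in the Lie-ideal computation—one must keep track of the fact that $\LB_\h=\tfrac1\h\LB_\star$ cancels precisely the extra factor of $\h$ coming from multiplication into $\h\cH_\h$, so that the membership $\lb{A,B}_\star\in\h\cH_\h$ supplied by the lemma is exactly what is needed, with nothing to spare.
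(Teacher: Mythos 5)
Your proof is correct and follows essentially the same route as the paper: both arguments reduce everything to the fact, established in Lemma~\ref{lma:ch_Poisson}, that $\LB_\star$ restricted to $\cH_\h$ takes values in $\h\cH_\h$, use the rescaling identity $\lb{\cH_\h,\h\cH_\h}_\h=\lb{\cH_\h,\cH_\h}_\star\subset\h\cH_\h$ for the Lie-ideal property, and invoke the same inclusion once more for commutativity of the quotient. The only difference is that you spell out the (routine) verification that $\h\cH_\h$ is an associative ideal, which the paper takes for granted.
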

\begin{proof}
$\h\cH_\h\subset\cH_\h$, which is an ideal of $\cH_\h$, is also a Lie ideal for the bracket $\LB_\h$ on $\cH_\h$
because
\begin{equation*}
  \lb{\h\cH_\h,\cH_\h}_\h=\h\lb{\cH_\h,\cH_\h}_\h\subset\h\cH_\h\;.
\end{equation*}
It follows that $\h\cH_\h$ is a Poisson ideal of $(\cH_\h,\LB_\h)$ and that the quotient algebra $\cH_\h/\h\cH_\h$
is a Poisson algebra. According to Lemma \ref{lma:ch_Poisson}, $A\star B-B\star A=\lb{A,B}_\star\in\h\cH_\h$ when
$A,B\in\cH_\h$, which shows that the Poisson algebra $(\cH_\h/\h\cH_\h,\LB_\h)$ is commutative.
\end{proof}

By construction, $\cH_\h/\h\cH_\h$ is a Poisson algebra over $R[[\h]]$, in particular, we can view
$\cH_\h/\h\cH_\h$ as a Poisson algebra over $R$, which is what we will do in what follows.  Moreover, we will write
elements of $\cH_\h/\h\cH_\h$ as pairs $(a,\cl b):=(a,b+Z(\cA))$ with $a\in Z(\cA)$ and $b\in\cA$, and identify
$\cH_\h/\h\cH_\h$ with
$$
  \Pi(\cA):=Z(\cA)\times \frac{\cA}{Z(\cA)}\;,
$$
which we call the \emph{Poisson algebra associated to the deformation} $(\cA[[h]],\star)$ of $\cA$.  Notice that
the quotient $\cA/Z(\cA)$ is not a quotient of algebras ($Z(\cA)$ is in general not an ideal of $\cA$) but of
$R$-modules. Under this identification and notation, the canonical projection $\pr:\cH_\h\to \Pi(\cA)$ is given by
$A=a_0+\h a_1+\h^2 a_2+\dots\mapsto (a_0,\cl{a_1})$.  The associative product and Poisson bracket on $\Pi(\cA)$ are
denoted by $\cdot$ and $\PB$ respectively. If we denote the unit of $\cA$ by $1$, then $(1,\cl0)$ is the unit of
$\Pi(\cA)$. By construction, we have the following corollary of Proposition \ref{prop:poisson_from_deformation}:
\begin{prop}\label{prp:pr_poisson}
  $\pr:(\cH_\h,\LB_\h)\to (\Pi(\cA),\PB)$ is a surjective morphism of Poisson algebras.\qed
\end{prop}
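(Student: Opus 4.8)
The plan is to read this off as a direct consequence of Proposition~\ref{prop:poisson_from_deformation}, together with the standard fact (recorded among the examples following the definition of a Poisson bracket) that the canonical projection onto a quotient Poisson algebra is simultaneously an algebra and a Lie-algebra morphism, hence a morphism of Poisson algebras. The only genuine content is to verify that, under the stated identification $\cH_\h/\h\cH_\h\simeq\Pi(\cA)$, the abstract quotient projection agrees with the concrete map $A=a_0+\h a_1+\h^2a_2+\cdots\mapsto(a_0,\cl{a_1})$.

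First I would make that identification explicit by computing the ideal $\h\cH_\h$. Since $\cH_\h=Z(\cA)+\h\cA[[\h]]$, every $A\in\cH_\h$ can be written as $A=a_0+\h a_1+\h^2a_2+\cdots$ with $a_0\in Z(\cA)$ and $a_i\in\cA$ for $i\geqs1$, and
\begin{equation*}
  \h\cH_\h=\h\(Z(\cA)+\h\cA[[\h]]\)=\h Z(\cA)+\h^2\cA[[\h]]\;.
\end{equation*}
Hence two such elements of $\cH_\h$ represent the same class modulo $\h\cH_\h$ exactly when their $\h^0$-terms coincide in $Z(\cA)$ and their $\h^1$-terms coincide modulo $Z(\cA)$. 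This produces a well-defined bijection $\cH_\h/\h\cH_\h\to Z(\cA)\times(\cA/Z(\cA))=\Pi(\cA)$ sending the class of $A$ to $(a_0,\cl{a_1})$, and it is precisely this bijection that is used to transport the multiplication~$\cdot$ and bracket~$\PB$ onto $\Pi(\cA)$.

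With this in hand the conclusion is immediate. By Proposition~\ref{prop:poisson_from_deformation}, $\h\cH_\h$ is a Poisson ideal of $(\cH_\h,\LB_\h)$ and the quotient $(\cH_\h/\h\cH_\h,\LB_\h)$ is a commutative Poisson algebra, so the canonical projection $\cH_\h\to\cH_\h/\h\cH_\h$ is a surjective morphism of Poisson algebras. Composing it with the identification above, which by construction is a Poisson isomorphism onto $(\Pi(\cA),\cdot,\PB)$, shows that $\pr$ is a surjective morphism of Poisson algebras. I do not expect any serious obstacle, since the statement is essentially a bookkeeping corollary; the one step deserving care is the explicit description of $\h\cH_\h$ and the attendant coset computation, which is what pins down the formula for $\pr$ and guarantees its independence of the chosen representative $A$.
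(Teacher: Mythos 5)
Your proposal is correct and follows exactly the route the paper intends: the paper states this proposition with a \qed as an immediate "by construction" corollary of Proposition~\ref{prop:poisson_from_deformation}, relying on the identification $\cH_\h/\h\cH_\h\simeq\Pi(\cA)$ described just before it. You merely spell out the bookkeeping (the computation $\h\cH_\h=\h Z(\cA)+\h^2\cA[[\h]]$ and the resulting coset description), which is accurate and is precisely what justifies the formula $A\mapsto(a_0,\cl{a_1})$ for $\pr$.
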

Explicit formulas for $\cdot$ and $\PB$ are given in the following proposition:
\begin{prop}\label{prp:PiA_products}
Let $(a,\cl{a_1})$, $(b,\cl{b_1})\in\Pi(\cA)$. Then
\begin{equation}\label{eq:prod}
  (a,\cl{a_1})\cdot(b,\cl{b_1})=\(ab,\cl{ab_1+a_1b+(a,b)_1}\)\;,
\end{equation}
and
\begin{align}\label{eq:PB}
  \pb{(a,\cl{a_1}),(b,\cl{b_1})}=\(\pb{a,b}_1,\cl{\pb{a,b}_2+\pb{a_1,b}_1+\pb{a,b_1}_1+\lb{a_1,b_1}}\)\;.
\end{align}
\end{prop}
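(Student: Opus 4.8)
The plan is to exploit that, by Proposition \ref{prp:pr_poisson}, the projection $\pr:(\cH_\h,\LB_\h)\to(\Pi(\cA),\PB)$ is a surjective morphism of Poisson algebras. Consequently both operations on $\Pi(\cA)$ are computed by lifting, operating inside $\cH_\h$, and projecting: for any $A,B\in\cH_\h$ with $\pr(A)=(a,\cl{a_1})$ and $\pr(B)=(b,\cl{b_1})$ one has $(a,\cl{a_1})\cdot(b,\cl{b_1})=\pr(A\star B)$ and $\pb{(a,\cl{a_1}),(b,\cl{b_1})}=\pr(\lb{A,B}_\h)$, and the result is independent of the chosen lifts and representatives. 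I would therefore pick the most economical lifts, the truncations $A=a+\h a_1$ and $B=b+\h b_1$, which lie in $\cH_\h$ precisely because $a,b\in Z(\cA)$, and carry out the two computations by hand using the expansions \eqref{eq:prod_exp} and \eqref{eq:comm_exp} together with the $R[[\h]]$-bilinearity of $\star$.

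For the product \eqref{eq:prod} I would expand $A\star B=a\star b+\h(a\star b_1+a_1\star b)+\cO(\h^2)$, then substitute $a\star b=ab+\h(a,b)_1+\cO(\h^2)$ and replace $a\star b_1,\,a_1\star b$ by their leading terms $ab_1,\,a_1b$. Collecting the coefficients of $\h^0$ and $\h^1$ gives $A\star B=ab+\h\((a,b)_1+ab_1+a_1b\)+\cO(\h^2)$, and since $ab\in Z(\cA)$ (as $Z(\cA)$ is a subalgebra) applying $\pr$ yields exactly \eqref{eq:prod}.

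The bracket \eqref{eq:PB} is the more delicate computation and is where the main work lies, because dividing by $\h$ forces me to expand $\lb{A,B}_\star=A\star B-B\star A$ one order further, up to $\h^2$. At order $\h^0$ the term is $ab-ba=0$. At order $\h^1$ the contributions are $(a,b)_1+ab_1+a_1b$ from $A\star B$ and $(b,a)_1+ba_1+b_1a$ from $B\star A$; here I would invoke the centrality of $a$ and $b$ to cancel $ab_1-b_1a=\lb{a,b_1}=0$ and $a_1b-ba_1=\lb{a_1,b}=0$, so that only $(a,b)_1-(b,a)_1=\pb{a,b}_1$ survives (this reproduces the order-$\h$ computation \eqref{eq:star_com}). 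At order $\h^2$ the truncated lifts give $(a,b)_2+(a,b_1)_1+(a_1,b)_1+a_1b_1$ from $A\star B$; subtracting the symmetric expression coming from $B\star A$ and writing the symmetric differences as first-order brackets leaves $\pb{a,b}_2+\pb{a,b_1}_1+\pb{a_1,b}_1+\lb{a_1,b_1}$. Hence $\lb{A,B}_\star=\h\pb{a,b}_1+\h^2\(\pb{a,b}_2+\pb{a,b_1}_1+\pb{a_1,b}_1+\lb{a_1,b_1}\)+\cO(\h^3)$, so that $\lb{A,B}_\h=\frac1\h\lb{A,B}_\star$ has constant term $\pb{a,b}_1$ and linear term the parenthesised expression; applying $\pr$ then produces \eqref{eq:PB}. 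The one external input needed is that this constant term $\pb{a,b}_1$ indeed lies in $Z(\cA)$, which is exactly the content of Lemma \ref{lma:ch_Poisson}; the main obstacle is therefore organisational---keeping the order-$\h^2$ bookkeeping straight and applying centrality at the correct places---rather than conceptual.
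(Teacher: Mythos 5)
Your proposal is correct and follows essentially the same route as the paper: lift via the surjection $\pr$ of Proposition \ref{prp:pr_poisson} to the truncated representatives $a+\h a_1$ and $b+\h b_1$ in $\cH_\h$, expand $\star$ and $\LB_\star$ using \eqref{eq:prod_exp} and \eqref{eq:comm_exp} together with the centrality of $a$ and $b$, and project back. The paper merely states the resulting expansions without the order-by-order bookkeeping you spell out, so your write-up is a more detailed version of the same argument.
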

\begin{proof}
  Since $\pr$ is surjective, we can write $(a,\cl{a_1})=\pr(A)$ and $(b,\cl{b_1})=\pr(B)$, where $A=a+\h
  a_1+\h^2a_2+\cdots$ and $B=b+\h b_1+\h^2b_2+\cdots$ belong to $\cH_\h$. Then, using Proposition
  \ref{prp:pr_poisson},
  \begin{eqnarray*}
    (a,\cl{a_1})\cdot(b,\cl{b_1})
    &=&  \pr(A)\cdot\pr(B)=\pr(A\star B)= \pr((a+\h a_1)\star(b+\h b_1))\\
    &\stackrel{\eqref{eq:prod_exp}}= &\pr(ab+\h(ab_1+a_1b+(a,b)_1))=\(ab,\cl{ab_1+a_1b+(a,b)_1}\)\;,
\end{eqnarray*}
and
\begin{eqnarray*}
  \pb{(a,\cl{a_1}),(b,\cl{b_1})}
  &=& \pr\lb{a+\h a_1+\h^2 a_2,b+\h b_1+\h^2b_2}_\h\\
  &\stackrel{\eqref{eq:comm_exp}}=&\pr(\pb{a,b}_1+\h(\pb{a,b}_2+\pb{a_1,b}_1+\pb{a,b_1}_1+\lb{a_1,b_1}))\\
  &=&\(\pb{a,b}_1,\cl{\pb{a,b}_2+\pb{a_1,b}_1+\pb{a,b_1}_1+\lb{a_1,b_1}}\)\;.
\end{eqnarray*}
\end{proof}
\begin{remark}\label{exa:class_poisson}
When $\cA$ is commutative, $Z(\cA)=\cA$ and $\cH_\h=\cA[[\h]]$ so that, as Poisson algebras,
$$
  \Pi(\cA)\simeq\cH_\h/\h\cH_\h\simeq\cA[[\h]]/\h\cA[[\h]]\simeq\cA\;,
$$
and we recover the (commutative) Poisson algebra constructed in the commutative case, with Poisson bracket
$\PB=\PB_1$.
%
\end{remark}

\begin{example}\label{exa:trivial}
  Let $\cA$ be any associative algebra over $R$ and consider the trivial deformation of $\cA$: the product $\star$
  on $\cA[[\h]]$ is the $R[[\h]]$-linear extension of the product on $\cA$. Then $(a,b)_i=\pb{a,b}_i=0$ for
  $a,b\in\cA$ and $i\geqslant1$, since $a\star b=ab$ for all $a,b\in\cA$. Hence, the formulas \eqref{eq:prod} and
  \eqref{eq:PB} for the product and Poisson bracket on $\Pi(\cA)$ become
  \begin{equation}\label{eq:example}
      \(a,\cl{a_1}\)\cdot\(b,\cl{b_1}\)= \(ab,\cl{ab_1+a_1b}\)\;,\quad\hbox{and}\quad
     \pb{(a,\cl{a_1}),\(b,\cl{b_1}\)}=\(0,\cl{\lb{a_1,b_1}}\)\;.
  \end{equation}
\end{example}
It follows that for any associative algebra $\cA$, \eqref{eq:example} defines the structure of a commutative
Poisson algebra on $\Pi(\cA)$. The Poisson bracket in \eqref{eq:example} is trivial if and only if $\cA$ is 2-step
nilpotent, $\lb{\lb{\cA,\cA},\cA}=0$.

In order to define the $\Pi(\cA)$-Poisson module structure on $\cA$, we will need a slightly larger Poisson
algebra, which is non necessarily commutative, given by the following proposition. Its proof is very similar to the
proof of Proposition \ref{prop:poisson_from_deformation}.
\begin{prop}\label{prp:big_Poisson}
  Let $(\cA[[\h]],\star)$ be a deformation of an associative algebra $\cA$.
  \begin{enumerate}
    \item [(1)] $\h^2\cA[[\h]]$ is a Poisson ideal of $(\cH_\h,\LB_\h)$.
    \item [(2)] The quotient algebra $\cH_\h/\h^2\cA[[\h]]\simeq Z(\cA)\times\cA$ is a Poisson algebra.\qed
  \end{enumerate}
\end{prop}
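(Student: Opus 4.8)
The plan is to mirror the proof of Proposition~\ref{prop:poisson_from_deformation} almost verbatim, replacing the Poisson ideal $\h\cH_\h$ by the smaller ideal $\h^2\cA[[\h]]$. First I would verify that $\h^2\cA[[\h]]$ is genuinely contained in $\cH_\h$: indeed $\h^2\cA[[\h]]\subset\h\cA[[\h]]\subset\cH_\h$, since every element of $\h^2\cA[[\h]]$ has vanishing $\h$-independent term (which trivially lies in $Z(\cA)$). Next I would check that $\h^2\cA[[\h]]$ is an associative ideal of $\cH_\h$: for this it suffices to observe that $\cH_\h\subset\cA[[\h]]$ and that $\h^2\cA[[\h]]$ is already an ideal of the larger algebra $(\cA[[\h]],\star)$, because $\star$ preserves the $\h$-adic filtration, so $\cA[[\h]]\star\h^2\cA[[\h]]\subset\h^2\cA[[\h]]$ and likewise on the other side.

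For part~(1) the real content is showing $\h^2\cA[[\h]]$ is a Lie ideal for $\LB_\h$. This follows from the computation
\begin{equation*}
  \lb{\h^2\cA[[\h]],\cH_\h}_\h=\frac1\h\lb{\h^2\cA[[\h]],\cH_\h}_\star
  \subset\frac1\h\(\h^2\cA[[\h]]\)=\h\cA[[\h]]\;,
\end{equation*}
where I used that $\LB_\star$ preserves the $\h$-adic filtration so that $\lb{\h^2\cA[[\h]],\cA[[\h]]}_\star\subset\h^2\cA[[\h]]$. This only gives containment in $\h\cA[[\h]]$, so one must sharpen it to land back in $\h^2\cA[[\h]]$. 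The sharpening comes from Lemma~\ref{lma:ch_Poisson}: since $\cH_\h$ is a Lie subalgebra and the bracket $\LB_\star$ restricted to $\cH_\h$ takes values in $\h\cH_\h$, for $A\in\h^2\cA[[\h]]$ and $B\in\cH_\h$ we have $\lb{A,B}_\star\in\h^3\cA[[\h]]$ by filtration-counting (one factor of $\h^2$ from $A$ and the commutator structure supplies an extra $\h$), whence $\lb{A,B}_\h=\frac1\h\lb{A,B}_\star\in\h^2\cA[[\h]]$. Thus $\h^2\cA[[\h]]$ is a Lie ideal, hence a Poisson ideal of $(\cH_\h,\LB_\h)$.

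For part~(2), once $\h^2\cA[[\h]]$ is a Poisson ideal, the quotient $\cH_\h/\h^2\cA[[\h]]$ is automatically a Poisson algebra by the general quotient construction recalled in the second Example after the definition of Poisson bracket. It remains to identify this quotient with $Z(\cA)\times\cA$ as an $R$-module: an element $A=a_0+\h a_1+\cO(\h^2)$ of $\cH_\h$ is determined modulo $\h^2\cA[[\h]]$ by the pair $(a_0,a_1)$ with $a_0\in Z(\cA)$ and $a_1\in\cA$ (with no quotient taken on the $a_1$ slot, in contrast to $\Pi(\cA)$), giving the isomorphism $\cH_\h/\h^2\cA[[\h]]\simeq Z(\cA)\times\cA$.

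The main obstacle I anticipate is the filtration bookkeeping in the Lie-ideal verification: the naive estimate only yields $\h\cA[[\h]]$, and getting the extra power of $\h$ needed to close back into $\h^2\cA[[\h]]$ requires invoking the refined output of Lemma~\ref{lma:ch_Poisson} rather than a bare filtration argument. Everything else — the associative ideal property and the module-level identification of the quotient — is routine, which is presumably why the authors remark that the proof closely follows that of Proposition~\ref{prop:poisson_from_deformation} and mark the statement with \qed.
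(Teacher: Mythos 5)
Your proof is correct and takes essentially the same route the paper intends: the paper gives no explicit proof, remarking only that the argument mirrors that of Proposition~\ref{prop:poisson_from_deformation}, which is exactly what you carry out. One minor point of precision: the extra power of $\h$ in your Lie-ideal step really rests on the containment $\lb{\cA[[\h]],\cH_\h}_\star\subset\h\cA[[\h]]$ (valid because the $\h$-independent term of any element of $\cH_\h$ is central in $\cA$), which is the mechanism used \emph{inside} the proof of Lemma~\ref{lma:ch_Poisson} rather than its statement --- the statement alone, applied to $A\in\h^2\cA[[\h]]\subset\cH_\h$ and $B\in\cH_\h$, would only give $\lb{A,B}_\star\in\h\cH_\h$, not the $\h^3\cA[[\h]]$ you need.
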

By a similar computation as in the proof of Proposition \ref{prp:PiA_products}, the product and the Poisson bracket
of $(a,a_1)$, $(b,b_1)\in Z(\cA)\times {\cA}$, again denoted by $\cdot$ and $\PB$, take the following form:
\begin{align}\label{eq:2prod_ZAA}
  (a,a_1)\cdot(b,b_1)&=(ab,ab_1+a_1b+(a,b)_1)\;,\nonumber\\
  \pb{(a,a_1),(b,b_1)}&=(\pb{a,b}_1,\pb{a,b}_2+\pb{a_1,b}_1+\pb{a,b_1}_1+\lb{a_1,b_1})\;.
\end{align}
It is clear that, alternatively, the commutative Poisson algebra $\Pi(\cA)$ can be constructed as a quotient of the
Poisson algebra $(Z(\cA)\times{\cA},\PB)$ by considering the Poisson ideal $\set0\times Z(\cA)$ of
$Z(\cA)\times{\cA}$.

The Poisson algebra $Z(\cA)\times\cA$ is not commutative when $\cA$ is not commutative. According to \cite{FL}, the
Poisson bracket of any prime Poisson algebra that is not commutative is a multiple of the commutator $a\cdot
b-b\cdot a$, where $\cdot$ denotes the (associative) product of the Poisson algebra. This result does not apply to
$Z(\cA)\times\cA$ because it is not prime. In fact, in this case the Poisson bracket is not a multiple of the
commutator.

\subsection{Poisson modules from deformations}
We first recall the definition of a Poisson module over a (not necessarily commutative) Poisson algebra.
\begin{defn}
  Let $(\cA,\cdot,\PB)$ be a Poisson algebra over $R$ and let $M$ be an $R$-module. Then~$M$ is said to be a
  $\cA$-\emph{Poisson module} (or Poisson module \emph{over} $\cA$ or \emph{over} $(\cA,\PB)$) when $M$ is both a
  $(\cA,\cdot)$-bimodule and a $(\cA,\PB)$-Lie module, satisfying the following derivation properties: for all
  $a,b\in\cA$ and $m\in M$,
  \begin{align}
    \pbm a{b\cdot m}&=\pb{a,b}\cdot m+b\cdot \pbm am\;,\label{eq:Leibniz_module_1}\\
    \pbm a{m\cdot b}&=m\cdot \pb{a,b}+\pbm am\cdot b\;,\label{eq:Leibniz_module_2}\\
     \pbm{a\cdot b}m&=a\cdot\pbm bm+\pbm am\cdot b\;.\label{eq:Leibniz_module_3}
  \end{align}
\end{defn}
In the above formulas, the three actions of $\cA$ on $M$ have been written $a\cdot m$, $m\cdot a$ and $\pbm am$ for
$a\in\cA$ and $m\in M$. In this notation, the fact that $M$ is a $\cA$-bimodule (respectively a $(\cA,\PB)$-Lie
module), takes the form
\begin{gather}
  a\cdot(b\cdot m)=(a\cdot b)\cdot m\;,\qquad (m\cdot a)\cdot b=m\cdot(a\cdot b)\;,\qquad
  a\cdot(m\cdot b)=(a\cdot m)\cdot b\;,\label{eq:def_poisson_module_1}\\
  \pbm{\pb{a,b}}m=\pbm a{\pbm bm}-\pbm b{\pbm
    am}\;,\label{eq:def_poisson_module_2}
\end{gather}
for $a,b\in\cA$ and $m\in M$. When $\cA$ is commutative and the left and right actions of $\cA$ on $M$ coincide,
\eqref{eq:Leibniz_module_1} and \eqref{eq:Leibniz_module_2} are equivalent, just like the first two conditions in
\eqref{eq:def_poisson_module_1}. The properties \eqref{eq:def_poisson_module_1} and \eqref{eq:def_poisson_module_2}
are similar to the associativity and Jacobi identity in $\cA$, while the properties \eqref{eq:Leibniz_module_1} --
\eqref{eq:Leibniz_module_3} are similar to the Leibniz identity in $\cA$. In the example that follows, they are
exactly these properties.

\begin{example}\label{exa:algebra=module}
  It is well-known that every associative algebra is a bimodule over itself and that every Lie algebra is a Lie
  module over itself, in both cases in a natural way. When $(\cA,\PB)$ is a Poisson algebra this leads to a natural
  $\cA$-bimodule structure on $\cA$, given by left and right multiplication, as well as a $(\cA,\PB)$-Lie module
  structure, given by taking the Poisson bracket. Then $\PBm=\PB$ and each one of the properties
  \eqref{eq:Leibniz_module_1} -- \eqref{eq:Leibniz_module_3} is equivalent to the Leibniz identity in
  $(\cA,\PB)$. It follows that every Poisson algebra is in a natural way a Poisson module over itself.
\end{example}

\begin{prop}\label{prp:poisson_module}
  Let $(\cA[[\h]],\star)$ be a deformation of an associative algebra $\cA$. Consider the Poisson algebra
  $(Z(\cA)\times\cA,\cdot,\PB)$ (Proposition \ref{prp:big_Poisson}).
  \begin{itemize}
  \item[(1)] $\cA$ is a Poisson module over $(Z(\cA)\times\cA,\cdot,\PB)$.
  \item[(2)] $\cA$ is a Poisson module over $(\Pi(\cA),\cdot,\PB)$, with actions given for
    $a\in Z(\cA)$ and $a_1,b\in \cA$~by
    \begin{equation}\label{eq:final_module}
      (a,\cl{a_1})\cdot b=ab\;,\quad
      b\cdot(a,\cl{a_1})=ba=ab\;,\quad
      \pbm{(a,\cl{a_1})}b=\pb{a,b}_1+\lb{a_1,b}\;.
  \end{equation}
  \end{itemize}
\end{prop}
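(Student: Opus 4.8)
The plan is to follow the route indicated in the introduction: exhibit $\cA$ as a Poisson submodule of a Poisson algebra acting on itself, and then reduce by a Poisson ideal that acts trivially. For part (1), I would start from Example \ref{exa:algebra=module}, which makes the Poisson algebra $\cB:=\cH_\h/\h^2\cA[[\h]]\simeq Z(\cA)\times\cA$ (Proposition \ref{prp:big_Poisson}) into a Poisson module over itself, the three actions being left multiplication, right multiplication and the bracket $\PB$. I then identify $\cA$ with the $R$-submodule $\h\cA[[\h]]/\h^2\cA[[\h]]$, which under $\cB\simeq Z(\cA)\times\cA$ is $\set0\times\cA$, and check that $\set0\times\cA$ is stable under the three actions. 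Using the explicit formulas \eqref{eq:2prod_ZAA} together with the bilinearity of $(\cdot,\cdot)_1$ and of the $\PB_i$ (so that any term containing a zero entry drops out), one gets for $(a,a_1)\in\cB$ and $(0,b)\in\set0\times\cA$
\begin{equation*}
(a,a_1)\cdot(0,b)=(0,ab)\;,\quad (0,b)\cdot(a,a_1)=(0,ba)\;,\quad \pb{(a,a_1),(0,b)}=(0,\pb{a,b}_1+\lb{a_1,b})\;.
\end{equation*}
All three results lie in $\set0\times\cA$, so it is a Poisson submodule; the bimodule, Lie-module and derivation axioms \eqref{eq:Leibniz_module_1}--\eqref{eq:def_poisson_module_2} are then automatically inherited from the self-action of $\cB$. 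This proves (1), and reading the displayed formulas under $\cA\simeq\set0\times\cA$ gives the three actions $(a,a_1)\cdot b=ab$, $b\cdot(a,a_1)=ba$ and $\pbm{(a,a_1)}b=\pb{a,b}_1+\lb{a_1,b}$.

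For part (2) I would pass to the quotient. As noted after \eqref{eq:2prod_ZAA}, $\Pi(\cA)$ is the quotient of $\cB$ by the Poisson ideal $\cJ:=\set0\times Z(\cA)$ (the image of $\h Z(\cA)$). To transport the module structure of (1) to a $\Pi(\cA)=\cB/\cJ$-module structure on $\cA$, the decisive step is to show that $\cJ$ annihilates $\cA\simeq\set0\times\cA$ in all three actions. Specialising the formulas above to $z\in Z(\cA)$ gives $(0,z)\cdot(0,b)=0$, $(0,b)\cdot(0,z)=0$ and $\pb{(0,z),(0,b)}=(0,\lb{z,b})$; the last vanishes because $z$ is central, so $\lb{z,b}=0$. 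Hence $\cJ\cdot\cA=\cA\cdot\cJ=\pbm{\cJ}{\cA}=0$, and the $\cB$-Poisson-module structure descends to a $\Pi(\cA)$-Poisson-module structure. Concretely, replacing $a_1$ by $a_1+z$ with $z\in Z(\cA)$ changes $\lb{a_1,b}$ by $\lb{z,b}=0$, so the actions of (1) depend only on $\cl{a_1}=a_1+Z(\cA)$ and are well defined on $\Pi(\cA)=Z(\cA)\times(\cA/Z(\cA))$; this yields exactly \eqref{eq:final_module}.

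I do not expect a genuine obstacle here: once Example \ref{exa:algebra=module} is invoked, both statements reduce to short closure and annihilation computations with \eqref{eq:2prod_ZAA}, and the derivation identities \eqref{eq:Leibniz_module_1}--\eqref{eq:def_poisson_module_2} never need to be verified directly, since they are inherited first by the submodule $\set0\times\cA$ and then by the quotient $\cB/\cJ$. The one thing requiring care is purely bookkeeping: keeping track of which entry of a pair is central. It is the centrality of the first entry that forces the two $\cdot$-actions to agree ($ba=ab$), and the centrality of $z$ that makes $\cJ$ act by zero and hence makes the reduction legitimate; skipping these observations and attempting instead to verify \eqref{eq:Leibniz_module_1}--\eqref{eq:def_poisson_module_2} by hand would be longer and would hide the structural reason the construction works.
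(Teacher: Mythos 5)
Your proof is correct and follows essentially the same route as the paper's: invoke Example \ref{exa:algebra=module} to make $Z(\cA)\times\cA$ a Poisson module over itself, specialise \eqref{eq:2prod_ZAA} to see that $\set0\times\cA$ is a Poisson submodule, and then check that the Poisson ideal $\set0\times Z(\cA)$ acts trivially (using centrality of $z$) so the structure descends to $\Pi(\cA)$. Nothing is missing.
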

\begin{proof}
It is clear from Example \ref{exa:algebra=module} that the Poisson algebra $(Z(\cA)\times\cA,\PB)$ is a Poisson
module over itself.  The formulas for the product and bracket are given by~\eqref{eq:2prod_ZAA}. Let
$(a,a_1),(0,b)\in Z(\cA)\times\cA$. Then ~\eqref{eq:2prod_ZAA} specializes to
\begin{equation}\label{eq:sub_module}
  (a,a_1)\cdot(0,b)=(0,ab)\;,\quad
  (0,b)\cdot(a,a_1)=(0,ba)\;,\quad
  \pb{(a,a_1),(0,b)}=\(0,\pb{a,b}_1+\lb{a_1,b}\)\;,
\end{equation}
so that $\cdot$ and $\PB$ can be restricted to $\set0\times\cA$, making the latter into a Poisson module over
$(Z(\cA)\times\cA,\PB)$.  Under the identification $\set0\times\cA\simeq\cA$ we get (1). We use these formulas to
prove~(2), where we need to show that the restriction to the Poisson ideal $\set0\times Z(\cA)$ of both actions of
$Z(\cA)\times\cA$ on $\cA$ is trivial. Let $a\in Z(\cA)$ and $b\in\cA$. Then~\eqref{eq:sub_module} becomes
\begin{equation}\label{eq:sub_module_1}
  (0,a)\cdot(0,b)=(0,0)=(0,b)\cdot(0,a)\;,\hbox{ and }
  \pb{(0,a),(0,b)}=(0,\lb{a,b})=(0,0)\;,
\end{equation}
where we have used that $a\in Z(\cA)$. In terms of the notation that we have introduced for the elements of
$\Pi(\cA)$, upon identifying $\set0\times\cA$ with $\cA$ and writing $\PBm$ for the induced Lie action, we get
\eqref{eq:final_module} from \eqref{eq:sub_module}.
\end{proof}

We stress that the Poisson module $\cA$ that we have constructed is a Poisson module over the \emph{commutative}
Poisson algebra $\Pi(\cA)$. Notice also that the left and right actions of $Z(\cA)\times\cA$, and hence of
$\Pi(\cA)$, on $\cA$ are the same (see \eqref{eq:final_module} and \eqref{eq:sub_module_1}).

\begin{example}
As we have already pointed out, when $\cA$ is commutative, \hbox{$\Pi(\cA)\simeq\cA$} as a Poisson algebra, where
both algebras have the rescaled commutator $\LB_\h$ as Poisson bracket. Under this identification, the
$\cA$-Poisson module structure of $\cA$ constructed in Proposition \ref{prp:poisson_module} is precisely the
canonical Poisson module structure of $\cA$ as a Poisson module over itself (cfr.\ Example
\ref{exa:algebra=module}).
\end{example}


%
\subsection{Hamiltonian derivations from deformations}\label{par:heisenberg}
We now show how any element $\bH$ of $\Pi(\cA)$ leads to a \emph{derivation} $\p_\bH$ on $\cA$, i.e., a linear map
$\cA\to\cA$ satisfying the Leibniz identity. We call them \emph{Hamiltonian derivations} in analogy with the
terminology used in the classical case, i.e., when~$\cA$ is commutative, so that $\Pi(\cA)\simeq\cA$. For $a\in\cA$
, we define $\p_\bH a:=\pbm{\bH}a$. We show in the following proposition that $\p_\bH$ is a derivation of $\cA$.

\begin{prop}\label{prp:derivation}
  $\p_\bH$ is a derivation of $\cA$.
\end{prop}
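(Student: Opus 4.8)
The plan is to show that $\partial_\bH:=\pBm\bH:\cA\to\cA$ satisfies the Leibniz identity with respect to the associative product of $\cA$, using the fact that $\cA$ is a Poisson module over the commutative Poisson algebra $\Pi(\cA)$ (Proposition \ref{prp:poisson_module}(2)). Write $\bH=(a,\cl{a_1})$ with $a\in Z(\cA)$ and $a_1\in\cA$. For $b,c\in\cA$ I need to establish that
\begin{equation*}
  \partial_\bH(bc)=(\partial_\bH b)\,c+b\,(\partial_\bH c)\;,
\end{equation*}
where now the products $(\partial_\bH b)\,c$ and $b\,(\partial_\bH c)$ are ordinary products in $\cA$.

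First I would rewrite the left-hand side by recognising the product $bc$ in $\cA$ as the module action coming from $\Pi(\cA)$. The key observation is the compatibility of the two structures on $\cA$: the associative multiplication of $\cA$ and its module structure over $\Pi(\cA)$ interact through the Poisson-module axioms \eqref{eq:Leibniz_module_1}--\eqref{eq:Leibniz_module_3}. The cleanest route is to work one level up, in the (noncommutative) Poisson algebra $Z(\cA)\times\cA$, where $\cA\cong\set0\times\cA$ sits as a Poisson submodule (Proposition \ref{prp:poisson_module}(1)). There the relevant products are given by \eqref{eq:sub_module}: for $(0,b),(0,c)\in\set0\times\cA$ one has $(0,b)\cdot(0,c)=(0,bc)$, and the Lie action of $(a,a_1)$ is $\pB{(a,a_1),(0,b)}=(0,\pB{a,b}_1+\lb{a_1,b})$.

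Now I would apply the module Leibniz rule \eqref{eq:Leibniz_module_1} (together with \eqref{eq:Leibniz_module_2}, using that the left and right module actions agree) to the element $(a,a_1)$ acting on the product $(0,b)\cdot(0,c)=(0,bc)$. Concretely, \eqref{eq:Leibniz_module_1} gives
\begin{equation*}
  \pB{(a,a_1),(0,b)\cdot(0,c)}=\pB{(a,a_1),(0,b)}\cdot(0,c)+(0,b)\cdot\pB{(a,a_1),(0,c)}\;.
\end{equation*}
Translating each term through \eqref{eq:sub_module} and the product formula $(0,u)\cdot(0,v)=(0,uv)$, the left-hand side becomes $(0,\partial_\bH(bc))$ and the right-hand side becomes $(0,(\partial_\bH b)c+b(\partial_\bH c))$, which is exactly the Leibniz identity for $\partial_\bH$ after projecting to $\cA$. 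I would then note that this computation descends to $\Pi(\cA)$: since the module action of the Poisson ideal $\set0\times Z(\cA)$ on $\cA$ is trivial \eqref{eq:sub_module_1}, the action of $\bH=(a,\cl{a_1})\in\Pi(\cA)$ is well defined and given by \eqref{eq:final_module}, and the Leibniz identity passes unchanged to the quotient.

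I do not expect a serious obstacle here, since the statement is essentially a reformulation of one of the Poisson-module axioms; the only point requiring care is the \emph{well-definedness} of $\partial_\bH$ on $\Pi(\cA)$, namely that replacing $a_1$ by $a_1+z$ with $z\in Z(\cA)$ does not change $\partial_\bH b$. This follows because $\lb{z,b}=0$ for $z\in Z(\cA)$, so the formula $\partial_\bH b=\pB{a,b}_1+\lb{a_1,b}$ from \eqref{eq:final_module} depends on $a_1$ only through its class $\cl{a_1}\in\cA/Z(\cA)$. Alternatively, and perhaps more transparently for the reader, one can give a direct proof using \eqref{eq:final_module}: expand $\partial_\bH(bc)=\pB{a,bc}_1+\lb{a_1,bc}$, then use that $\PB_1$ is a biderivation (the Leibniz identity \eqref{eq:comm_exp} at order $\h$, which holds because $\LB_\star$ satisfies Leibniz) together with the fact that $\lb{a_1,\cdot}$ is an inner derivation of $\cA$, and recombine the four resulting terms into $(\partial_\bH b)c+b(\partial_\bH c)$.
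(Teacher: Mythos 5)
Your main route contains a genuine error. In $Z(\cA)\times\cA\simeq\cH_\h/\h^2\cA[[\h]]$ the product of $(0,b)$ and $(0,c)$ is \emph{not} $(0,bc)$ but $(0,0)$: these elements correspond to $\h b$ and $\h c$, whose $\star$-product lies in $\h^2\cA[[\h]]$; equivalently, specializing \eqref{eq:2prod_ZAA} (or \eqref{eq:sub_module} with $a=0$, $a_1=b$) gives $(0,b)\cdot(0,c)=(0,0\cdot c)=(0,0)$. Consequently the associative product of $\cA$ is \emph{not} the restriction to $\set0\times\cA$ of the module action of $Z(\cA)\times\cA$ (nor of $\Pi(\cA)$): the action of $(a,\cl{a_1})$ on $b$ is $ab$, which sees only the central component. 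Applying \eqref{eq:Leibniz_module_1} to $(0,b)\cdot(0,c)$ therefore yields only the vacuous identity $0=0$ and says nothing about $\p_\bH(bc)$. This is precisely why the proposition is not a formal consequence of the Poisson-module axioms and requires a separate argument.

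Your closing ``alternative'' sketch is, in substance, the paper's actual proof: write $\p_\bH(bc)=\pb{H_0,bc}_1+\lb{H_1,bc}$, observe that $\lB{H_1}$ is an inner derivation, and show that $\pb{H_0,\cdot}_1$ is a derivation. But that last point is where all the work lies, and your justification (``$\PB_1$ is a biderivation \dots because $\LB_\star$ satisfies Leibniz'') is too quick: for noncommutative $\cA$ the map $\PB_1$ is in general \emph{not} a biderivation of the product of $\cA$, because $\LB_\star$ satisfies Leibniz with respect to $\star$, and the $\cO(\h)$ discrepancy between $\star$ and the product of $\cA$ feeds into the order-$\h$ term of \eqref{eq:comm_exp} through the zeroth-order commutator. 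The identity $\pb{H_0,bc}_1=\pb{H_0,b}_1c+b\pb{H_0,c}_1$ holds only because $H_0\in Z(\cA)$, which kills those correction terms; the paper establishes it by comparing two expansions of $\lb{H_0,bc}_\star$ to first order in $\h$. If you carry out that computation, your alternative becomes a complete proof along the paper's lines; the module-theoretic route does not work.
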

\begin{proof}
Recall from \eqref{eq:final_module} that
\begin{equation}\label{eq:final_module_recall}
  \pbm{\bH}a=\pb{H_0,a}_1+\lb{H_1,a}\;.
\end{equation}
It is clear that $\p_\bH$ is a linear map, so we only need to establish that $\pBm{\bH}$ satisfies the Leibniz
identity. Now $\lB{H_1}$ satisfies the Leibniz identity, so according to \eqref{eq:final_module_recall} we only
need to check that for any $a,b\in\cA$, $\pb{H_0,ab}_1=a \pb{H_0,b}_1+\pb{H_0,a}_1b$. This follows by comparing the
following two expressions for $\lb{H_0,ab}_\star$:
\begin{align*}
  \lb{H_0,ab}_\star&=\lb{H_0,ab}+\h\pb{H_0,ab}_1+\cO\(\h^2\)=a\lb{H_0,b}+\lb{H_0,a}b+\h\pb{H_0,ab}_1+\cO\(\h^2\)\;,\\
  \lb{H_0,ab}_\star&=a\lb{H_0,b}_\star+\lb{H_0,a}_\star b=a\lb{H_0,b}+a\h\pb{H_0,b}_1+\lb{H_0,a}b+\h\pb{H_0,a}_1b+
  \cO\(\h^2\)\;.
\end{align*}
\end{proof}

The Hamiltonian derivation $\p_\bH$ on $\cA$ should not be confused with the Hamiltonian derivation $\p'_\bH$ on
$\Pi(\cA)$, which is defined for $\bF\in\Pi(\cA)$ by $\p'_\bH\bF:=\pb{\bH,\bF}$, where we recall that the latter
Poisson bracket is explicitly given by \eqref{eq:PB}. When $\cA$ is commutative, $\p_\bH$ and $\p_\bH'$ are both
derivations of $\cA$ (under the obvious identifications) and they coincide. When $\cA$ is not commutative, these
derivations are defined on different algebras and though they are related, none of the two determines the other
one.

The following proposition generalizes a well-known property from the commutative case:
\begin{prop}\label{prp:invol}
  Suppose that $\bF=(F_0,\cl{F_1}),\bG=(G_0,\cl{G_1})\in\Pi(\cA)$. Then $\lb{\p_\bF,\p_\bG}=\p_{\pb{\bF,\bG}}$.  In
  particular, if $\bF$ and $\bG$ are in involution, $\pb{\bF,\bG}=0$, their associated derivations $\p_\bF$
  and~$\p_\bG$ of $\cA$ commute and both $F_0$ and $G_0$ are first integrals of them.
\end{prop}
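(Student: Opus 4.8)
The plan is to prove the identity $\lb{\p_\bF,\p_\bG}=\p_{\pb{\bF,\bG}}$ by a direct computation, using the explicit formula \eqref{eq:final_module_recall} for the Hamiltonian action together with the fact that $\cA$ is a Poisson module over $\Pi(\cA)$. The most conceptual route is to exploit the Lie module property \eqref{eq:def_poisson_module_2}, which states precisely that $\pbm{\pb{\bF,\bG}}a=\pbm\bF{\pbm\bG a}-\pbm\bG{\pbm\bF a}$ for all $a\in\cA$. Since $\p_\bF a=\pbm\bF a$ by definition, the right-hand side is $(\p_\bF\p_\bG-\p_\bG\p_\bF)a=\lb{\p_\bF,\p_\bG}a$ and the left-hand side is $\p_{\pb{\bF,\bG}}a$, so the identity is immediate. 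Thus the first thing I would do is simply invoke that $\cA$ is a Poisson module over $\Pi(\cA)$ (Proposition \ref{prp:poisson_module}), for which property \eqref{eq:def_poisson_module_2} holds by definition.

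The one subtlety I would flag is that the commutator $\lb{\p_\bF,\p_\bG}$ of two derivations must itself be a derivation, which is standard: the commutator of two derivations of an associative algebra is again a derivation, the second-order terms cancelling by the Leibniz rule. Since each $\p_\bF,\p_\bG$ is a derivation by Proposition \ref{prp:derivation}, and $\p_{\pb{\bF,\bG}}$ is a derivation for the same reason (as $\pb{\bF,\bG}\in\Pi(\cA)$), both sides of the claimed equality are \emph{a priori} derivations, so establishing the identity on generators would already suffice; but here the module identity gives it on all of $\cA$ at once.

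For the ``in particular'' part, when $\pb{\bF,\bG}=0$ one reads off $\lb{\p_\bF,\p_\bG}=\p_0=0$, i.e.\ $\p_\bF$ and $\p_\bG$ commute. To see that $F_0$ and $G_0$ are first integrals, recall that a function is a first integral of a derivation $\p$ when it is annihilated by it. From \eqref{eq:final_module_recall}, $\p_\bG F_0=\pb{G_0,F_0}_1+\lb{G_1,F_0}$. Writing $\bH:=\pb{\bF,\bG}$, its first component is, by \eqref{eq:PB}, exactly $\pb{F_0,G_0}_1$; since $\bH=0$ forces this first component to vanish, we get $\pb{F_0,G_0}_1=0$, whence $\pb{G_0,F_0}_1=0$ by skew-symmetry. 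Moreover $F_0\in Z(\cA)$, so $\lb{G_1,F_0}=0$, and therefore $\p_\bG F_0=0$; the same argument with $\bF,\bG$ interchanged gives $\p_\bF G_0=0$. Since $\p_\bF$ and $\p_\bG$ commute, each $\p_\bF,\p_\bG$ preserves the other's first integrals as well, completing the claim.

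I do not anticipate a genuine obstacle here: the heart of the statement is just the defining Lie-module axiom \eqref{eq:def_poisson_module_2}, and the only point requiring a little care is tracking which component of $\pb{\bF,\bG}$ controls $\pb{F_0,G_0}_1$, using that the center elements $F_0,G_0$ commute with everything in $\cA$.
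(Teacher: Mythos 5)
Your proposal is correct and follows essentially the same route as the paper: the identity $\lb{\p_\bF,\p_\bG}=\p_{\pb{\bF,\bG}}$ is exactly the Lie-module axiom \eqref{eq:def_poisson_module_2} for the $\Pi(\cA)$-Poisson module $\cA$, and the first-integral claim is obtained, as in the paper, from \eqref{eq:final_module_recall} by noting that $\pb{F_0,G_0}_1$ vanishes when $\pb{\bF,\bG}=0$ and that the commutator term vanishes because $F_0,G_0\in Z(\cA)$. The extra remarks about both sides being derivations are correct but not needed.
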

\begin{proof}
For the first statement, we need to prove that for any $a\in\cA$,
\begin{equation*}
  \p_\bF(\p_\bG a)-\p_\bG(\p_\bF a)=\p_{\pb{\bF,\bG}}a\;,
\end{equation*}
i.e., that
\begin{equation*}
  \pbm{\bF}{\pbm{\bG}a}-\pbm{\bG}{\pbm{\bF}a}=\pbm{\pb{\bF,\bG}}a\;.
\end{equation*}
This is precisely the property that $\cA$ is a Lie module over $(\Pi(\cA),\PB)$ (see
\eqref{eq:def_poisson_module_2}). For the last statement, it remains to be shown that $G_0$ is a first integral of
$\p_\bF$, i.e., that $\p_\bF G_0=0$. According to \eqref{eq:final_module_recall},
$\p_\bF G_0=\pbm\bF{G_0}=\pb{F_0,G_0}_1+\lb{F_1,G_0}$ and both terms in the sum are zero, the first one because $\bF$ and $\bG$
are in involution (see \eqref{eq:PB}) and the second one because $G_0\in Z(\cA)$.
\end{proof}

The proposition also shows that the commutator of two Hamiltonian derivations of $\cA$ is a Hamiltonian derivation
of $\cA$ and the Poisson algebra $\Pi(\cA)$ provides a tool to compute a corresponding Hamiltonian. Notice that,
contrarily to the classical case where $\Pi(\cA)\simeq\cA$, in the general case Hamiltonians are pairs of
$Z(\cA)\times\frac{\cA}{Z(\cA)}$, while first integrals are elements of $\cA$, so they live on different spaces.

%
The derivation $\p_\bH$ of $\cA$, given by \eqref{eq:final_module_recall}, is naturally obtained as the limit of a
Heisenberg derivation of $\cA[[\h]]$, an observation which has been our original motivation for the construction of
the Poisson algebra $\Pi(\cA)$ and the Poisson module~$\cA$ over it. Indeed, let $H\in\cH_\h$ be any element with
$H=H_0+\h H_1+\h^2 H_2+\cdots$, and denote as before $\bH=(H_0,\cl{H_1})$. The \emph{Heisenberg derivation}
associated with $H$, which we denote by $\delta_H$, is by definition the derivation of $\cA[[\h]]$, given for
$a\in\cA\subset\cA[[\h]]$ by
\begin{equation*}
  \delta_Ha=\frac1{\h}\lb{H,a}_\star\;.
\end{equation*}
We get, using \eqref{eq:comm_exp} and \eqref{eq:final_module_recall},
\begin{equation*}
  \delta_Ha=\frac1{\h}\lb{H_0+\h H_1+\cdots,a}_\star=\pb{H_0,a}_1+\lb{H_1,a}+\cO(\h)=\p_\bH a+\cO(\h)\;,
\end{equation*}
so that in the limit $\h\to0$ we get indeed the Hamiltonian derivation $\p_\bH$ of~$\cA$.

\begin{example}
When $\cA$ is commutative one recovers the well-known fact that the limit of the Heisenberg derivation associated
with $H_0$, where $\bH=\(H_0,\cl{H_1}\)=(H_0, \cl0)$, is the Hamiltonian derivation associated with $H_0$ by means
of the Poisson structure, associated with the deformation since $\p_\bH a=\pb{H_0,a}_1+\lb{H_1,a}=\pb{H_0,a}$,
since in the commutative case the commutator vanishes and $\PB_1$ is the Poisson bracket (see
Remark~\ref{exa:class_poisson}).
\end{example}


\subsection{The reduced Poisson algebra}\label{sec:reduced}

Let $(\cA,\star)$ be a deformation of an associative algebra $\cA$ and recall that $\Pi(\cA)=Z(\cA)\times
({\cA}/{Z(\cA)})$ is the associated Poisson algebra. Explicit formulas for the product and bracket on $\Pi(\cA)$
are given in \eqref{eq:prod} and \eqref{eq:PB}. It is clear from these formulas that $Z(\cA)\times\set{\cl{0}}$ is
in general neither a subalgebra nor a Lie subalgebra of $\Pi(\cA)$. Yet, we show that, by reduction, $Z(\cA)$ is
also a (commutative) Poisson algebra. In fact, $\set0\times ({\cA}/{Z(\cA)})$ is both an ideal and a Lie ideal of
$\Pi(\cA)$, since according to \eqref{eq:prod} and \eqref{eq:PB},
\begin{equation*}
  (0,\cl{a_1})\cdot\(b,\cl{b_1}\)=\(0,\cl{a_1b}\)\;,\quad
  \pb{(0,\cl{a_1}),\(b,\cl{b_1}\)}  =\(0,\cl{\pb{a_1,b}_1+\lb{a_1,b_1}}\)\;.
\end{equation*}
It follows that $\set0\times (\cA/Z(\cA))$ is a Poisson ideal of $\Pi(\cA)$, so that
$\Pi(\cA)/(\set0\times(\cA/Z(\cA))) \simeq Z(\cA)$ is a Poisson algebra, where the latter isomorphism is according
to \eqref{eq:prod} not just an isomorphism of modules but of (associative) algebras. The induced Poisson structure
on $Z(\cA)$ is given for $a,b\in Z(\cA)$ by $\pb{a,b}_1$ since according to \eqref{eq:PB},
\begin{equation*}
  \pb{\(a,\cl0\),\(b,\cl0\)}=  \(\pb{a,b}_1,\cl{\pb{a,b}_2}\)\;.
\end{equation*}
We call $(Z(\cA),\PB_1)$ the \emph{reduced Poisson algebra (associated to the deformation)}. Notice that it follows
that $\PB_1$ satisfies the Jacobi identity when restricted to $Z(\cA)$, though in general $\PB_1$ does not satisfy
the Jacobi identity on $\cA$.

\begin{example}
  According to Example \ref{exa:trivial}, the reduced Poisson structure associated with a trivial deformation is trivial.
\end{example}

\subsection{Functoriality}

In deformation theory, the natural notion of isomorphism is the one of equivalence. We will show in this subsection
that equivalent deformations of an algebra lead to isomorphic Poisson algebras. We first recall the definition of
equivalence of deformations (see for example \cite[Ch.\ 13]{PLV}).

\begin{defn}
  Two deformations $(\cA[[\h]],\star)$ and $(\cA[[\h]],\star')$ of an associative algebra $\cA$ are said to be
  \emph{equivalent} if there exists a morphism of $R[[\h]]$-algebras $F:(\cA[[\h]],\star)\to(\cA[[\h]],\star')$
  such that $F(a)=a+\cO(\h)$ for all $a\in\cA$. Then $F$ is called an \emph{equivalence (of deformations)}.
\end{defn}

Notice that $F$ is automatically an isomorphism and that $F$ can be viewed as a deformation of the identity map on
$\cA[[\h]]$. More precisely, expanding $F(a)$ for all $a\in\cA$ as a formal power series in $\h$,
\begin{equation*}
  F(a)=a+\h F_1(a)+\h^2 F_2(a)+\cdots\;,
\end{equation*}
we get $R$-linear maps $F_i:\cA\to\cA$ and we can write
\begin{equation}\label{eq:F_exp}
  F=\Id_{\cA[[\h]]}+\h F_1+\h^2 F_2+\cdots
\end{equation}
where, by a slight abuse of notation, $F_i$ stands for the $R[[\h]]$-linear extension of $F_i$ to a morphism
$F_i:\cA[[\h]]\to\cA[[\h]]$. Formula \eqref{eq:F_exp} is convenient for computations. One computes for example
easily from it that $F^{-1}=\Id_{\cA[[\h]]}-\h F_1+\h^2(F_1^2-F_2)+\cdots$.

The following proposition shows that, under some condition which is automatically satisfied for equivalences, an
algebra homomorphism between deformations of two algebras induces a Poisson morphism between the two associated
Poisson algebras.

\begin{prop}
  Let $\cA$ and $\cA'$ be two associative algebras with deformations $(\cA[[\h]],\star)$ and $(\cA'[[\h]],\star')$,
  and let $F=F_0+\h F_1+\h^2F_2+\cdots:\cA[[\h]] \to\cA'[[\h]]$ be a morphism of $R[[\h]]$-algebras. If
  $F_0(Z(\cA))\subset Z(\cA')$ then $F$ restricts to a morphism of Poisson algebras
  $F:(\cH_\h,\LB_\h)\to(\cH_\h',\LB_\h')$ and induces a map $F_\Pi;\Pi(\cA)\to\Pi(\cA')$. Moreover, $F_\Pi$ and
  the map induced by it between the reduced Poisson algebras $(Z(\cA),\PB_1)$ and $(Z(\cA'),\PB_1')$ are
  morphisms of Poisson algebras.
\end{prop}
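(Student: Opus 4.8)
The plan is to exploit that $F$, being a morphism of $R[[\h]]$-algebras, automatically intertwines the two commutators, $F(\lb{A,B}_\star)=\lb{F(A),F(B)}_{\star'}$, and then to transport this compatibility down through the two successive quotients $\cH_\h\to\Pi(\cA)\to Z(\cA)$ by diagram chasing, using that every projection in sight is already known to be a Poisson morphism. The one genuinely delicate point I anticipate is the compatibility with the \emph{rescaled} bracket $\LB_\h$: here one must use not merely that $F$ is an algebra morphism but crucially that it is $R[[\h]]$-linear, so that it commutes with the formal division by $\h$ hidden in the definition of $\LB_\h$.

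First I would show $F(\cH_\h)\subset\cH_\h'$. For $A=a_0+\h a_1+\cdots\in\cH_\h$ with $a_0\in Z(\cA)$, the $\h$-independent term of $F(A)$ is $F_0(a_0)$, which lies in $Z(\cA')$ exactly by the hypothesis $F_0(Z(\cA))\subset Z(\cA')$; hence $F(A)\in\cH_\h'$. As $\cH_\h,\cH_\h'$ are subalgebras for $\star,\star'$ (Lemma \ref{lma:ch_Poisson}), $F$ restricts to an algebra morphism between them. For the brackets I would use that $R[[\h]]$-linearity makes $F$ commute with division by $\h$ on $\h\cA[[\h]]$ (if $C=\h C'$ then $F(C)=\h F(C')$); since $\lb{A,B}_\star\in\h\cH_\h$ for $A,B\in\cH_\h$ (Lemma \ref{lma:ch_Poisson}), this gives $F(\lb{A,B}_\h)=\frac1\h F(\lb{A,B}_\star)=\frac1\h\lb{F(A),F(B)}_{\star'}=\lb{F(A),F(B)}_\h'$, so $F\colon(\cH_\h,\LB_\h)\to(\cH_\h',\LB_\h')$ is a morphism of Poisson algebras.

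Next, $R[[\h]]$-linearity also gives $F(\h\cH_\h)=\h F(\cH_\h)\subset\h\cH_\h'$, so $\prp\circ F$ annihilates $\h\cH_\h$ and factors uniquely as $\prp\circ F=F_\Pi\circ\pr$ for a well-defined $R$-linear map $F_\Pi\colon\Pi(\cA)\to\Pi(\cA')$. That $F_\Pi$ is Poisson is then a one-line chase: $\pr$ is surjective and $\pr,\prp,F$ are Poisson morphisms (Proposition \ref{prp:pr_poisson}), so writing a bracket in $\Pi(\cA)$ as $\pb{\pr A,\pr B}=\pr\lb{A,B}_\h$ yields $F_\Pi\pb{\pr A,\pr B}=\prp F\lb{A,B}_\h=\prp\lb{FA,FB}_\h'=\pb{F_\Pi\pr A,F_\Pi\pr B}'$, and identically for the product.

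Finally, to pass to the reduced algebras I would compute $F_\Pi$ in coordinates. From $F=F_0+\h F_1+\cdots$ one reads off $F_\Pi(a_0,\cl{a_1})=(F_0(a_0),\cl{F_0(a_1)+F_1(a_0)})$, whence $F_\Pi(0,\cl{a_1})=(0,\cl{F_0(a_1)})$ because $F_0(0)=F_1(0)=0$. Thus $F_\Pi$ carries the Poisson ideal $\set0\times(\cA/Z(\cA))$ into $\set0\times(\cA'/Z(\cA'))$ and so descends to the quotients, inducing on the reduced algebras the map $Z(\cA)\to Z(\cA')$, $a_0\mapsto F_0(a_0)$. Since the projections onto the reduced Poisson algebras are Poisson morphisms (Section \ref{sec:reduced}) and $F_\Pi$ is one, the same diagram chase as in the previous paragraph shows this induced map is a morphism of Poisson algebras, completing the proof.
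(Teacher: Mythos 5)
Your proposal is correct and follows essentially the same route as the paper: restrict $F$ to $\cH_\h$ using the hypothesis on $F_0$, observe it intertwines the rescaled brackets, define $F_\Pi$ by the commutative square with $\pr$, $\prp$, and conclude by a diagram chase using surjectivity of the projections, then repeat for the reduced algebras via the explicit formula for $F_\Pi$. Your explicit remark that $R[[\h]]$-linearity is what permits dividing the identity $F(\lb{A,B}_\star)=\lb{F(A),F(B)}_{\star'}$ by $\h$ is a slightly more careful articulation of a step the paper leaves implicit, but it is not a different argument.
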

\begin{proof}
Suppose that $F_0(Z(\cA))\subset Z(\cA')$. Then
\begin{equation*}
  F(\cH_\h)=(F_0+\h F_1+\cdots)(Z(\cA)+\h\cA[[\h]])\subset Z(\cA')+\h\cA'[[\h]]=\cH_\h'\;.
\end{equation*}
It follows that $F$ can be restricted to a morphism $F:\cH_\h\to\cH_\h'$. Notice that since $F$ is a morphism of
$R[[\h]]$-algebras it also preserves the commutator, $F(\lb{A,B}_\star)=\lb{F(A),F(B)}_\star'$, and hence also the
rescaled bracket, $F(\lb{A,B}_\h)=\lb{F(A),F(B)}_\h'$, so that $F:(\cH_\h,\LB_\h)\to(\cH_\h',\LB_\h')$ is a
morphism of Poisson algebras. Consider the following diagram of Poisson algebras:
%
\begin{center}
  \begin{tikzcd}[row sep=6ex, column sep=12ex]
    (\cH_\h,\LB_\h)\arrow[swap]{d}{\pr}\arrow{r}{F}&(\cH'_\h,\LB'_\h)\arrow{d}{\prp}\\
    (\Pi(\cA),\PB)\arrow[swap]{r}{F_\Pi}&(\Pi(\cA'),\PB')
  \end{tikzcd}
\end{center}
%
As we just showed, $F$ is a morphism of Poisson algebras. According to Proposition \ref{prp:pr_poisson}, $\pr$ and
$\prp$ are also morphisms of Poisson algebras. Let us now consider $F_\Pi$; by surjectivity of $\pr$, if a map
$F_\Pi$ making the diagram commutative exists, it is unique. In fact, one establishes quite easily a formula for
$F_\Pi$: for $(a,\cl{a_1})\in\Pi(\cA)$,
\begin{align}
  F_\Pi(a,\cl{a_1})&=F_\Pi\pr(a+\h a_1+\cdots)  =\prp F(a+\h a_1+\cdots)\nonumber\\
  &=\prp(F_0(a)+\h(F_0(a_1)+F_1(a)))=\(F_0(a),\cl{F_0(a_1)+F_1(a)}\)\;.\label{eq:F_Pi}
\end{align}
This gives a formula for $F_\Pi$ and the above computation shows that it makes the diagram commutative. The
surjectivity of $\pr$ and the commutativity of the diagram imply that $F_\Pi$ is a morphism of Poisson
algebras. For example, to check that $F_\Pi$ is a morphism of Lie algebras, it suffices to check that $F_\Pi\pb{\pr
  A,\pr B}=\pb{F_\Pi\pr A,F_\Pi\pr B}'$ for all $A,B\in\cA[[\h]]$, which follows easily from the cited
properties:
\begin{align*}
  F_\Pi\pb{\pr A,\pr B}&=F_\Pi\pr\lb{A,B}_\h =\prp F\lb{A,B}_\h=\prp \lb{F(A),F(B)}_\h'\\
   &=\pb{\prp F(A),\prp F(B)}'=\pb{F_\Pi\pr A,F_\Pi\pr B}'\;.
\end{align*}
To finish to proof, we consider the corresponding reduced Poisson algebras (see Section
\ref{sec:reduced}). Consider the following diagram of Poisson algebras:
\begin{center}
  \begin{tikzcd}[row sep=6ex, column sep=12ex]
    (\Pi(\cA),\PB)\arrow[swap]{d}{}\arrow{r}{F_\Pi}&(\Pi(\cA'),\PB')\arrow{d}{}\\
    (Z(\cA),\PB_1)\arrow[swap]{r}{F_0}&(Z(\cA'),\PB'_1)
  \end{tikzcd}
\end{center}
%
The vertical arrows in this diagram are the reduction maps. The lower arrow has been labeled $F_0$, as it is just
the restriction of $F_0$ to the center of~$\cA$; it is obvious from \eqref{eq:F_Pi} that $F_0$ makes the diagram
commutative. Since $F_\Pi$ and the reduction maps are Poisson maps, the latter moreover being surjective, we may
conclude as above that $F_0$ is a Poisson map.
\end{proof}
When $F_0$ is surjective, the condition $F_0(Z(\cA))\subset Z(\cA')$ is automatically satisfied. In particular, the
proposition can be applied to equivalences of deformations and we get the following result:

\begin{cor}\label{cor:equivalent}
  Let $(\cA[[\h]],\star)$ and $(\cA[[\h]],\star')$ be equivalent deformations of an associative algebra~$\cA$. Then
  the Poisson algebras associated to these deformations are isomorphic, $(\Pi(\cA),\PB)\simeq (\Pi(\cA),\PB')$ ;
  the reduced Poisson algebras are also isomorphic, $(Z(\cA),\PB_1)\simeq (Z(\cA),\PB_1') $.\qed
\end{cor}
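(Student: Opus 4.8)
The plan is to deduce the corollary from the preceding proposition by applying it both to the equivalence $F$ and to its inverse, and then to invoke the functoriality of the assignment $F\mapsto F_\Pi$ to conclude that the two induced maps are mutually inverse morphisms of Poisson algebras. No new computation with the brackets is needed; everything is formal, once functoriality is in hand.

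First I would record what an equivalence provides. By definition $F(a)=a+\cO(\h)$ for all $a\in\cA$, so in the expansion $F=F_0+\h F_1+\cdots$ the leading term is $F_0=\Id_\cA$. In particular $F_0$ is surjective, so (as observed just after the proposition) the hypothesis $F_0(Z(\cA))\subset Z(\cA)$ is automatically satisfied and the proposition applies to $F$: it induces a morphism of Poisson algebras $F_\Pi:(\Pi(\cA),\PB)\to(\Pi(\cA),\PB')$, and its reduction $F_0|_{Z(\cA)}=\Id_{Z(\cA)}$ is a morphism of Poisson algebras $(Z(\cA),\PB_1)\to(Z(\cA),\PB_1')$. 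Next I would use that an equivalence is automatically an isomorphism of $R[[\h]]$-algebras (as noted after the definition of equivalence), with $F^{-1}=\Id_{\cA[[\h]]}-\h F_1+\cdots$. Since $F^{-1}(a)=a+\cO(\h)$ as well, $F^{-1}$ is again an equivalence, now from $(\cA[[\h]],\star')$ to $(\cA[[\h]],\star)$, so the proposition applies to it too, yielding a morphism of Poisson algebras $(F^{-1})_\Pi:(\Pi(\cA),\PB')\to(\Pi(\cA),\PB)$ and, at the reduced level, $\Id_{Z(\cA)}:(Z(\cA),\PB_1')\to(Z(\cA),\PB_1)$.

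Finally I would check that $F_\Pi$ and $(F^{-1})_\Pi$ are inverse to one another, which makes $F_\Pi$ an isomorphism of Poisson algebras and finishes the proof; the reduced case is handled identically. This is where the functoriality of the construction is the key point. The induced map is characterised by the commuting square $F_\Pi\circ\pr=\prp\circ F$ together with the surjectivity of $\pr$, which gives uniqueness. From this characterisation one reads off immediately that $(G\circ F)_\Pi=G_\Pi\circ F_\Pi$ for composable $R[[\h]]$-algebra morphisms and that $(\Id_{\cA[[\h]]})_\Pi=\Id_{\Pi(\cA)}$: indeed $(G\circ F)_\Pi$ and $G_\Pi\circ F_\Pi$ make the same square commute, so they coincide since $\pr$ is onto. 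Applying this with $G=F^{-1}$ shows that $F_\Pi\circ(F^{-1})_\Pi$ and $(F^{-1})_\Pi\circ F_\Pi$ are both the identity, and the same argument at the reduced level gives an isomorphism $(Z(\cA),\PB_1)\simeq(Z(\cA),\PB_1')$.

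I expect no genuine obstacle here: the one point that deserves care is establishing the functoriality $F\mapsto F_\Pi$ (respecting composition and identities), but this is immediate from the uniqueness built into the defining square rather than something requiring a fresh calculation. The remaining effort is purely bookkeeping, namely keeping track of the source and target Poisson structures $\PB,\PB'$ and their restrictions $\PB_1,\PB_1'$ on $Z(\cA)$ in each of the two applications of the proposition.
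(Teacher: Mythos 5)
Your proposal is correct and follows essentially the route the paper intends: the corollary is stated with an immediate \qed because it is exactly the preceding proposition applied to an equivalence (where $F_0=\Id_\cA$ is surjective, so the hypothesis on centers holds automatically), and your extra step of applying the proposition to $F^{-1}$ and invoking the uniqueness in the commuting square to see that $F_\Pi$ and $(F^{-1})_\Pi$ are mutually inverse just fills in the detail the paper leaves implicit. One could alternatively read the bijectivity of $F_\Pi$ directly off the explicit formula $F_\Pi(a,\cl{a_1})=(a,\cl{a_1+F_1(a)})$, but your functorial argument is equally valid and no more work.
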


The converse is not true in general, as the Poisson algebras and reduced Poisson algebras only ``see'' the first
two terms of the deformation; replacing in a non-trivial deformation $\h$ by $\h^k$ leads to a non-trivial
deformation for which the first $k-1$ deformation terms are zero, hence leading to a trivial Poisson algebra
$\Pi(\cA)$ when $k>2$.

We show that under the hypothesis of Corollary \ref{cor:equivalent}, the Poisson module structure on $\cA$ which is
induced by the two deformations is the same, i.e., that for any $(a,\cl{a_1})\in\Pi(\cA)$ and $b$ in $\cA$, one has
$F_\Pi(a,\cl{a_1})\cdot' b=(a,\cl{a_1})\cdot b$ and $\pbm{F_\Pi(a,\cl{a_1})}b'=\pbm{(a,\cl{a_1})} b$, where
$\cdot'$ denotes the action of $\Pi(\cA)$ on $\cA$ coming from the deformation $(\cA[[\h]],\star')$. Now
$F_\Pi\(a,\cl{a_1}\)=\(a,\cl{F_1(a)+a_1}\)$, as follows from \eqref{eq:F_Pi}, so that according to
\eqref{eq:final_module} we need to show that
\begin{equation}\label{eq:equiv_mod}
  \(a,\cl{a_1+F_1(a)}\)\cdot' b=(a,\cl{a_1})\cdot b\;,\quad\hbox{and}\quad
  \pbm{\(a,\cl{a_1+F_1(a)}\)}b'=\pbm{(a,\cl{a_1})}b\;,
\end{equation}
for any $a\in Z(\cA)$ and $a_1,b\in\cA$. The first equality in \eqref{eq:equiv_mod} holds because both sides
evaluate to $ab$ (see \eqref{eq:final_module}). Proving the second equality amounts in view of the formulas
\eqref{eq:final_module} to showing that $\pb{a,b}_1=\pb{a,b}_1'+\lb{F_1(a),b}$ for all $a\in Z(\cA)$ and
$b\in\cA$. To prove the latter, we expand both sides of $\lb{F(a),F(b)}_\star'=F\lb{a,b}_\star$ using
\eqref{eq:F_Pi}. First,
\begin{align*}
  \lb{F(a),F(b)}_\star'&=\lb{a+\h F_1(a),b+\h F_1(b)}_\star'+\cO\(\h^2\)\\
    &=\lb{a,b}+\h(\pb{a,b}_1'+\lb{a,F_1(b)}+\lb{F_1(a),b})+\cO\(\h^2\)=\h(\pb{a,b}_1'+\lb{F_1(a),b})+\cO\(\h^2\)\;,
\end{align*}
where we have used twice that $a\in Z(\cA)$. Similarly,
\begin{align*}
  F\lb{a,b}_\star&=\lb{a,b}_\star+\h F_1\lb{a,b}_\star+\cO\(\h^2\)=\lb{a,b}+\h(\pb{a,b}_1+F_1\lb{a,b})+\cO\(\h^2\)
  =\h\pb{a,b}_1+\cO\(\h^2\)\;.
\end{align*}
Comparing these two results yields the required equality, proving the second equality in \eqref{eq:equiv_mod}.

\section{Free quotients of free algebras}\label{sec:ideals}
The associative algebra underlying a quantum system usually depends on a parameter, the Planck constant, which is
small and the algebra becomes commutative when the parameter is set to 0. These algebras can (under some
assumptions) naturally be viewed as deformations (in the sense of Definition \ref{def:defo}) of that commutative
algebra and one speaks of \emph{deformation quantisation}. We are interested here in the more general case in which
the limiting algebra is not necessarily commutative. Moreover, in view of the examples which we will treat, we may
be interested in other values of the parameter, such as roots of unity, and we may want to deal with several
parameters. We will describe in this section a natural and quite general setup in which we can view these more
general algebras as deformations, so that we can apply the techniques and results of the previous sections to them,
as we will do in Sections \ref{sec:ex1} and \ref{sec:ex2}. We denote by $\bbK$ any commutative field of
characteristic 0, which in the examples in the next sections will be taken equal to $\bbC$.

\subsection{Quantisation ideals and quantum algebras}

We first recall the notion of quantisation ideal and of quantum algebra which were first introduced in
\cite{AvM20}. Let $x_1,x_2,\dots$ be a (possible infinite, but at most countable) collection of independent
variables. We denote $\fA=\bbK\langle{x_1,\;x_2,\;\dots\rangle}$ the free associative (unital) $\bbK$-algebra on
these variables. Elements of $\fA$ are finite $\bbK$-linear combinations of words in $x_1,x_2,\dots,$ and the
product of two words is their concatenation. Assume that~$\fA$ is equipped with a derivation $\p:\fA\to\fA$.

\begin{defn}\label{def:quantum_algebra}
A two-sided ideal $\cI$ of $\fA$ is said to be a \emph{quantisation ideal} for $(\fA,\p)$ if it satisfies the
following two properties:
\begin{enumerate}
  \item[(1)] The ideal $\cI$ is $\p$-stable: $\p(\cI)\subset\cI;$
  \item[(2)] The quotient $\fA/\cI$ admits a basis $\cB$ of normally ordered monomials\footnote{Our convention is
  that a monomial in the generators $x_1,\,x_2,\,x_3,\dots$ is normally ordered when it is of the form
  $x_{m_1}^{n_1}x_{m_2}^{n_2}\dots x_{m_s}^{n_s}$ where $m_1,m_2,\cdots,m_s$ is strictly increasing, all $n_i$ are
  (usually positive) integers and $s\geqslant0$. Notice that it is not required that all such elements are in the
  basis~$\cB$, i.e.\ that $\cB$ is a PBW basis.} in $x_1,\;x_2,\;\dots$.
\end{enumerate}
 The quotient algebra $\fA/\cI$ is then said to be a \emph{quantum algebra} (over $\bbK$).
\end{defn}
The first condition implies that $\p$ descends to a derivation of $\fA/\cI$; we will come back to this in Section
\ref{sec:ex2}. In the absence of a derivation, we can still speak of a quantisation ideal and of a quantum algebra
by considering the trivial (zero) derivation of $\fA$; then (1) is automatically satisfied.

The generators of the quantisation ideals that define quantum algebras often depend on one or several parameters
$\q=(q_1,q_2,\dots)$, which can be thought of as being elements of the field~$\bbK$. Moreover, in all relevant
examples this dependency is rational. It is then natural to think of the family of quantum $\bbK$-algebras
$\fA/\cI$, depending on $\q$, as being the quantum $\bbK(\q)$-algebra $\cA_{\q}:=\fA(\q)/\cI_{\q}$, where~$\fA(\q)$
is a shorthand for $\bbK(\q)\langle{x_1,\;x_2,\;\dots\rangle}$ and~$\cI_{\q}$ stands for the ideal of $\fA(\q)$,
with the elements of $\cI$ being considered as elements of $\fA(\q)$. The basis $\cB$ of normally ordered monomials
is then a basis for $\cA_{\q}$ as a $\bbK(\q)$-module and will be simply referred to as a \emph{monomial basis} for
$\cA_{\q}$. The product and commutator of elements $A,B$ of~$\cA_\q$ will be denoted $AB$ and $\lb{A,B}$
respectively.
\begin{example}\label{exa:typical}
Let $\fA=\bbK\langle{x_1,\;x_2,\;\dots\;,x_N}\rangle$ be a free associative algebra as above and consider the ideal
$\cI_\q$ of $\fA(\q)$, generated by
\begin{equation*}
  \set{x_ix_j-q_{i,j}x_jx_i \mid 1\leqs j<i\leqs N}\;,
\end{equation*}
where $\q=(q_{i,j})_{1\leqs j<i\leqs N}$ are the parameters. Then it is clear that a monomial basis $\cB$ for
$\cA_\q=\fA(\q)/\cI_\q$ is given by
\begin{equation}\label{eq:basis}
   \cB=\set{x_1^{i_1}x_2^{i_2}\cdots x_N^{i_N}  \mid i_1,\;i_2,\;\dots,i_N\in\bbN}\;,
\end{equation}
where we have used the same notation $x_i$ for the generators of $\fA$ as for their images in $\cA_\q$. Notice that
in $\cA_\q$ we have $x_ix_j=q_{i,j}x_jx_i$ for $i>j$, which can be written equivalently as
$\lb{x_i,x_j}=(q_{i,j}-1)x_jx_i$, in particular the commutator in $\cA_\q$ of any two generators $x_i$ is a multiple
of their product; more generally, the commutator in $\cA_\q$ of two monomials in $x_1,\dots,x_N$ is a multiple of
their product.  A basis of the form \eqref{eq:basis} is called a \emph{PBW basis}.
\end{example}

\begin{example}
  With $\fA$ as in the previous example, consider now the ideal $\cI_\q$ of $\fA(\q)$, generated by
\begin{equation*}
  \set{x_ix_j-x_jx_i-q_{i,j} \mid 1\leqs j<i\leqs N}\;,
\end{equation*}
where $\q=(q_{i,j})_{1\leqs j<i\leqs N}$ are again the parameters. Then $\cB$, as given by \eqref{eq:basis}, is
again a PBW basis for $\cA_\q$. However, in this case the commutator in $\cA_\q$ of two monomials in $x_1,\dots,x_N$
is in general not a multiple of their product. Both examples can of course be combined to produce more general
quantisation ideals and quantum algebras with a PBW basis. See \cite{Levandovskyy} for a very large class of ideals
whose associated quotient algebra has a basis of normally ordered monomials, making it into a quantum algebra.
\end{example}

\subsection{Free quotients and deformations}\label{par:quotients_and_deformations}
We now show how a quantum algebra which depends on one or several parameters can be turned into a (formal)
deformation of an algebra (Definition~\ref{def:defo}). This will allow us to apply the results and methods of
Section \ref{sec:poisson}. For clarity, and in view of the examples which we will treat, let us assume that there
is only one parameter, $\q=q$; see Remark~\ref{rem:several_parameters} below for the case of several parameters. We
would like to specialize $q$ to a value $q_0$, but such that~$\cB$ is still a basis after specialization. Notice
that since $\cB$ is a basis for $\cA_q=\fA(q)/\cI_q$, we can write for any $j>i$ the product $x_jx_i$ as a finite
linear combination of elements of $\cB$, with as coefficients rational fractions in $q$.  When none of these
fractions has a pole at $q_0$, we will say that $q_0$ is a \emph{regular} value of $\cB$. When $q_0$ is a regular
value of $\cB$, the expression for $x_jx_i$ in terms of $\cB$ can be evaluated at $q_0$. This shows that $\cB$
remains a (monomial) basis after specialization.  In fact, these expressions for $x_jx_i$ with $j>i$ can be taken
as generators of $\cI_q$; they will be used as such in what follows.  Suppose that $q_0$ is a regular value of
$\cB$ and set $q(\h)=q_0+c_1\h+c_2\h^2+\cdots$, any polynomial or formal power series in~$\h$ with $c_1\neq0$; in
practice we will take $q(\h)=q_0+\h$, see Remark~\ref{rem:rescaling} below.  Elements of $\bbK(\q)$ which do not
have a pole at $q_0$ are then formal power series in $\h$.  We consider in
$\fA[[\h]]:=\bbK[[\h]]\langle{x_1,\;x_2,\;\dots\rangle}$ the closed ideal~$\cI_{q(\h)}$ generated by the generators
of $\cI_q$ in which every occurrence of $q$ has been replaced by $q(\h)$, and denote
$\cA_\h:=\fA[[\h]]/\cI_{q(\h)}$. We consider in particular the $\bbK$-algebra $\cA:=\cA_{q_0}:=\fA/\cI_{q_0}$. Then
$\cB$ is a $\bbK$-basis of $\cA$ and is a $\bbK[[\h]]$-basis of $\cA_\h$. Notice that since $q_0$ is a regular
value of $\cB$, the elements of $\cI_{q(\h)}$ are formal power series in $\h$.

\goodbreak

\begin{prop}\label{prp:iso_to_formal}
  $\cA_\h$ is isomorphic, as a $\bbK[[\h]]$-module, to~$\cA[[\h]]$.
\end{prop}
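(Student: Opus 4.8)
The plan is to build an explicit $\bbK[[\h]]$-linear bijection out of the single fact established just before the statement: the set $\cB$ of normally ordered monomials is simultaneously a $\bbK$-basis of $\cA=\cA_{q_0}$ and a $\bbK[[\h]]$-basis of $\cA_\h=\fA[[\h]]/\cI_{q(\h)}$. Since the assertion is only an isomorphism of $\bbK[[\h]]$-modules (not of algebras), I may ignore the multiplications entirely and argue that both modules are ``the same free module on $\cB$'' in the appropriate $\h$-adic sense.

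First I would fix coordinate descriptions of both sides in terms of $\cB$. Because $\cB$ is a $\bbK$-basis of $\cA$, any $a_i\in\cA$ is a finite combination $\sum_{b\in\cB}\lambda_{i,b}\,b$, so a general element $\sum_{i\geqs0}\h^i a_i$ of $\cA[[\h]]$ is encoded by the family of power series $c_b(\h):=\sum_i\lambda_{i,b}\h^i\in\bbK[[\h]]$ indexed by $b\in\cB$, subject to the finiteness constraint that for every $n$ only finitely many $c_b$ are nonzero modulo $\h^n$ (equivalently, only finitely many $b$ occur across $a_0,\dots,a_{n-1}$). Thus $\cA[[\h]]$ is canonically the $\h$-adic completion of the free $\bbK[[\h]]$-module on $\cB$. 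The hypothesis that $\cB$ is a $\bbK[[\h]]$-basis of $\cA_\h$ says precisely that $\cA_\h$ carries the identical description: each of its elements is uniquely an $\h$-adically convergent sum $\sum_{b\in\cB}c_b(\h)\,\bar b$, with the same finiteness constraint on the coefficients, where $\bar b\in\cA_\h$ is the image of the monomial $b$.

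I would then define $\phi\colon\cA[[\h]]\to\cA_\h$ by $\sum_b c_b(\h)\,b\mapsto\sum_b c_b(\h)\,\bar b$. This map is $\bbK[[\h]]$-linear by construction; its well-definedness is the only point needing care, and it follows from the finiteness constraint, which guarantees that only finitely many summands contribute to each power of $\h$, so the partial sums are $\h$-adically Cauchy and converge in $\cA_\h$. Injectivity of $\phi$ is exactly the $\bbK[[\h]]$-linear independence of $\cB$ in $\cA_\h$, and surjectivity is the statement that $\cB$ spans $\cA_\h$ topologically; both are contained in the phrase ``$\cB$ is a $\bbK[[\h]]$-basis of $\cA_\h$''. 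Hence $\phi$ is the desired isomorphism.

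The main obstacle is purely one of bookkeeping with infinitely many generators: when $\cB$ is infinite, neither $\cA[[\h]]$ nor $\cA_\h$ is free on $\cB$ in the naive algebraic sense, and ``basis'' must be read in the topological ($\h$-adically complete) sense together with the convergence of infinite $\bbK[[\h]]$-combinations. I expect the cleanest rigorous route is to avoid manipulating infinite sums directly and instead pass to truncations: using that $\cB$ is a basis on each side, one obtains compatible $\bbK[\h]/(\h^n)$-linear isomorphisms
\begin{equation*}
  \cA_\h/\h^n\cA_\h\;\cong\;\cA[\h]/(\h^n)\;\cong\;\cA[[\h]]/\h^n\cA[[\h]]
\end{equation*}
sending $\cB$ to $\cB$, and then takes the inverse limit over $n$. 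This yields $\cA_\h\cong\cA[[\h]]$ once one notes that both modules are $\h$-adically complete and separated --- for $\cA[[\h]]$ this is immediate, and for $\cA_\h$ it follows from $\fA[[\h]]$ being complete and $\cI_{q(\h)}$ being a \emph{closed} ideal.
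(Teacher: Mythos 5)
Your proposal is correct and hinges on exactly the same fact as the paper's proof: the set $\cB$ is simultaneously a $\bbK$-basis of $\cA$ and a topological $\bbK[[\h]]$-basis of $\cA_\h$, so the isomorphism is just a matching of coefficients. The two arguments are organized in opposite directions, and the difference is worth noting. The paper works downward from $\fA[[\h]]$: it writes each coefficient $a_i\in\fA$ (an honest finite combination of words) uniquely as a normal-form part with coefficients in $\Span_\bbK\cB$ plus an element of $\cI_{q(\h)}$, sums over $i$, and uses the \emph{closedness} of $\cI_{q(\h)}$ to see that the accumulated ideal parts still lie in the ideal; the resulting map $\fA[[\h]]\to\cA[[\h]]$ then visibly has kernel $\cI_{q(\h)}$. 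The payoff is that the basis property is only ever invoked for elements of $\fA$, where ``unique finite expression'' is unambiguous, and the infinite-sum bookkeeping is delegated entirely to the closedness of the ideal. Your direct map $\cA[[\h]]\to\cA_\h$ instead presupposes that \emph{every} element of $\cA_\h$ admits a unique $\h$-adically convergent expansion $\sum_b c_b(\h)\bar b$ subject to the finiteness constraint --- the strong, topological reading of ``basis'', which is true here but is essentially the content being established; you correctly flag this, and your truncation/inverse-limit variant (using completeness and separatedness of $\cA_\h$, again via closedness of $\cI_{q(\h)}$) is a legitimate way to close that circle, at the cost of a short detour through the quotients $\cA_\h/\h^n\cA_\h$. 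Either route is sound; the paper's is marginally more economical because it never has to make the topological notion of basis precise for the quotient.
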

\begin{proof}
We use the basis $\cB$ construct a surjective morphism $\fA[[\h]]\to\cA[[\h]]$ with kernel $\cI_{q(\h)}$. Let
$A=\sum_{i=0}^\infty a_i\h^i$ be any element of $\fA[[\h]]$, where $a_i\in\fA$ for all $i$. Since $\cB$ is a basis
of $\cA_\h$, each coefficient $a_i$ can be written uniquely as $a_i=\sum_j\alpha_{i,j}\h^j+r_i$, where the sum is
finite, all $\alpha_{i,j}$ belong to $\Span_\bbK\cB$ and $r_i\in\cI_{q(\h)}$. Summing up, we can write $A$ uniquely
as
\begin{equation}\label{eq:A_decomp}
  A=\sum_{i=0}^\infty a_i\h^i=\sum_{i,j}\alpha_{i,j}\h^{i+j}+\sum_{i=0}^\infty r_i\h^i
  =\sum_{k=0}^\infty\beta_k\h^k+\sum_{i=0}^\infty r_i\h^i\;,
\end{equation}
where the first term belongs to $\cA[[\h]]$, with all $\beta_k$ belonging to $\Span_\bbK\cB$, while the second term
belongs to $\cI_{q(\h)}$, since all $r_i$, which are formal power series in $\h$, belong to it (recall that the
ideal $\cI_{q(\h)}$ is closed). The assignment $A\mapsto\sum_{k=0}^\infty\beta_k\h^k$ defines a surjective
$\bbK[[\h]]$-linear map $\fA[[\h]]\to\cA[[\h]]$. Since the decomposition \eqref{eq:A_decomp} of $A$ is unique, its
kernel is $\cI_{q(\h)}$. It follows that $\fA[[\h]]/\cI_{q(\h)}$ and $\cA[[\h]]$ are isomorphic, as was to be
shown.
\end{proof}

In view of the proposition, using the monomial basis $\cB$ of $\cA_\h$ we may identify $\cA_\h$ with~$\cA[[\h]]$ as
$\bbK[[\h]]$-modules and the product in $\cA_\h$ yields a product $\star$ on $\cA[[\h]]$, making
$(\cA[[\h]],\star)$ into a (formal) deformation of $\cA$. Under this identification we may write any $A\in\cA_\h$
uniquely as
\begin{equation}\label{eq:norm}
  A=\mathbf{n}_0(A)+\h\mathbf{n}_1(A)+\h^2\mathbf{n}_2(A)+\cdots\;.
\end{equation}
Here, every $\mathbf{n}_i(A)$ belongs to $\Span_\bbK\cB$ and the maps~$\mathbf{n}_i$ are linear maps $\cA_\h\to\cA$.

\begin{remark}\label{rem:several_parameters}
The adaptation to the case of several parameters is straightforward: one puts $\q(\h)=\q_0+\h\q_1+\cO\(\h^2\)$, where
$\q_0$ is a regular value for $(q_1,\dots,q_\ell)$ and $\q_1$ is any non-zero vector. \emph{Regular} means as in
the single parameter case that $\q_0$ is a regular value for all the rational coefficients that appear when writing
$x_jx_i$ for $j>i$ in terms of the $\bbK(\q)$-basis $\cB$ of $\cA_\q$.  The algebra $\cA_\h$ is constructed as
before and from there on everything proceeds as in the one-parameter case.
\end{remark}

\begin{remark}\label{rem:rescaling}
In the examples we will always set $q(\h)=q_0+\h$. In fact, the higher order terms in $\h$ do not play a role in
the construction of the Poisson structure; also, in the one-parameter case the value of the coefficient of $\h$ is
not very important since it amounts to rescaling $\h$, so we will always pick it equal to $1$.
\end{remark}

\section{Examples and applications}\label{sec:ex1}

In this section we show on a few different families of examples how to determine explicitly the Poisson algebra,
the reduced Poisson algebra and the Poisson module, associated with a (formal) deformation; all our examples are
obtained from quantum algebras, as explained in the previous section. As a first application, we use the (reduced)
Poisson algebra to show that two particular deformations of some algebra which is not commutative are not
equivalent. As a second application, we will show in the next section how the Poisson module is used for obtaining
a Hamiltonian formulation of nonabelian systems.

\subsection{Computing the Poisson brackets}\label{par:compute}

We first explain a few general facts on the computation of the Poisson brackets $\PB$ and $\PB_1$ on $\Pi(\cA)$ and
on $Z(\cA)$ respectively, and on the Lie action $\PBm$ of $\Pi(\cA)$ on $\cA$. Recall that these are associated
with a deformation $(\cA[[\h]],\star)$ of some (not necessarily commutative) algebra $\cA$ and that the deformation
itself is associated with a quantum algebra $\cA_\q$ and a regular value $\q_0$ of $\q$; in the examples that
follow there is a single parameter, $\q=q$. Also, in these examples we always take as base ring $R$ the field of
complex numbers~$\bbC$ so that $q_0$ is just a complex number.

In each one of the examples, the Poisson brackets $\PB$ and $\PB_1$ will be computed explicitly for generators of
$\Pi(\cA)$ and of $Z(\cA)$ respectively; also, the Lie action $\PBm$ will be computed for generators of $\Pi(\cA)$
and of~$\cA$. In view of the Leibniz identity, this yields the Poisson bracket for any pair of elements of
$\Pi(\cA)$; see section \ref{par:limit} for the case of the Lie action. In the case of the Poisson brackets this
can be done by using a formula that generalizes the classical formula for computing Poisson brackets on $\bbC^M$ in
terms of the Poisson brackets between the coordinate functions on $\bbC^M$. If, say, $X_1,X_2,\dots,X_M$ are
generators of a commutative Poisson algebra and $P=P(X_1,X_2,\dots,X_M)$ and $Q=Q(X_1,X_2,\dots,X_M)$ are two
elements expressed in terms of these generators, then
\begin{equation}\label{eq:PQ_bracket}
  \pb{P,Q}=\sum_{i,j=1}^M\pp P{X_i}\pp Q{X_j}\pb{X_i,X_j}\;.
\end{equation}
For a more formal statement and a proof, see \cite[Prop.\ 1.9]{PLV}; notice that the cited proposition says in
particular that the right hand side of \eqref{eq:PQ_bracket} is independent of the way in which $P$ and $Q$ are
expressed in terms of the generators. The skew-symmetric matrix $\left(\pb{X_i,X_j}\right)_{1\leqslant i,j\leqslant
  M}$, which in view of \eqref{eq:PQ_bracket} completely determines the Poisson bracket, is called the
\emph{Poisson matrix of} $\PB$ \emph{with respect to the generators $X_1,\dots,X_M$}.

We will need to determine the center $Z(\cA)$ of $\cA$ and find generators for it. In general this can be very
complicated, but in most 
examples that follow we have for every $1\leqslant j<i\leqslant N$ a relation in $\cA$ of the form
$x_ix_j=q_{i,j}x_jx_i$, with $q_{i,j}\in\bbC^*$. As we already pointed out in Example \ref{exa:typical}, on the one
hand this implies that we have a monomial basis of $\cA$ whose elements are of the form $x_1^{i_1}x_2^{i_2}\dots
x_N^{i_N}$, and on the other hand that the commutator of two such monomials is a multiple (element of $\bbC$) of
their product. This in turn implies that when an element of the center is written in the monomial basis, each one
of its terms belongs to the center. The center of $\cA$ is therefore generated by monomials in $x_1,\dots,x_N$. The
same argument shows that for such commutation relations the center of $\cA_\q$ is generated by monomials (the
multiple is then an element of $\bbC(\q)$).

In order to write explicit formulas for the Poisson brackets and for the Lie action, we will need to find algebra
generators for $\Pi(\cA)=Z(\cA)\times (\cA/Z(\cA))$, where we recall that the associative product in $\Pi(\cA)$ is
denoted by $\cdot$ and is given by~\eqref{eq:prod}. Such generators can be chosen in the union of
$Z(\cA)\times\set{\cl0}$ and $\set0\times\cA/Z(\cA)$ since $\Pi(\cA)$ is the direct sum of these subspaces. For
such generators, \eqref{eq:prod} simplifies to
\begin{gather}
  (a,\cl0)\cdot(b,\cl0)=(ab,\cl{(a,b)_1})
  \;,\label{eq:prod_simp_1}\\
  (0,\cl{a_1})\cdot(0,\cl{b_1})=\(0,\cl0\)\;,\label{eq:prod_simp_2}\\
  (a,\cl0)\cdot(0,\cl{b_1})=\(0,\cl{ab_1}\)\;,\label{eq:prod_simp_3}
\end{gather}
where we recall that $(a,b)_1$ is the coefficient in $\h$ of $a\star b$ (see \eqref{eq:prod_exp}).  Notice that
when $(a,b)_1\in Z(\cA)$, \eqref{eq:prod_simp_1} simplifies further to
\begin{equation}\label{eq:prod_simp_4}
  (a,\cl0)\cdot(b,\cl0)=\(ab,\cl0\)\;.
\end{equation}
This happens for example when the commutator of $a$ and $b$ in $\cA_\h$ is a constant (element of $R[[\h]]$), or is
a multiple of their product $ab$; for the latter, see again Example \ref{exa:typical}. In general, a generating set
of $\Pi(\cA)$ can be constructed using the following proposition:

\begin{prop}\label{prop:generators}
  Suppose that $z_1,\dots,z_k$ are generators of $Z(\cA)$ as a (unital) algebra and that
  $\cl{t_1},\dots,\cl{t_\ell}$ are generators for $\cA/Z(\cA)$ as a $Z(\cA)$-module. Denote $Z_i:=(z_i,\cl0)$ and
  $T_j:=(0,\cl{t_j})$ for $i=1,\dots,k$ and $j=1,\dots,\ell$. Then $Z_1\dots,Z_k,T_1,\dots,T_\ell$ are algebra
  generators of $(\Pi(\cA),\cdot)$.
\end{prop}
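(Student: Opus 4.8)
The plan is to show that every element of $\Pi(\cA)=Z(\cA)\times(\cA/Z(\cA))$ can be written as a $\bbC$-linear combination of $\cdot$-products of the proposed generators $Z_1,\dots,Z_k,T_1,\dots,T_\ell$. Since $\Pi(\cA)$ is the direct sum of the subspaces $Z(\cA)\times\set{\cl0}$ and $\set0\times(\cA/Z(\cA))$, it suffices to generate each summand separately, and then to note that a general element $(a,\cl{a_1})=(a,\cl0)+(0,\cl{a_1})$ is the $\cdot$-sum of two such pieces (here I use that $\cdot$ is just the algebra product, so linear combinations of products of generators form the subalgebra they generate).

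First I would handle the summand $Z(\cA)\times\set{\cl0}$. Given $a\in Z(\cA)$, write $a$ as a polynomial in the algebra generators $z_1,\dots,z_k$. The key computational input is formula \eqref{eq:prod_simp_4}: when $a,b\in Z(\cA)$ the product $(a,b)_1$ lies in $Z(\cA)$ (as follows from Lemma \ref{lma:ch_Poisson}), so $(a,\cl0)\cdot(b,\cl0)=(ab,\cl0)$, i.e.\ the map $a\mapsto(a,\cl0)$ from $Z(\cA)$ into $(\Pi(\cA),\cdot)$ is an algebra homomorphism. Therefore any product $z_{i_1}\cdots z_{i_r}$ in $Z(\cA)$ equals the $\cdot$-product $Z_{i_1}\cdot\cdots\cdot Z_{i_r}$ up to the second-coordinate projection, and taking $\bbC$-linear combinations I recover every $(a,\cl0)$ from the $Z_i$ and the unit $(1,\cl0)$.

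Next I would handle the summand $\set0\times(\cA/Z(\cA))$. By hypothesis $\cl{t_1},\dots,\cl{t_\ell}$ generate $\cA/Z(\cA)$ as a $Z(\cA)$-module, so any $\cl{a_1}$ is a $Z(\cA)$-linear combination $\sum_j \cl{z^{(j)}\, t_j}$ with $z^{(j)}\in Z(\cA)$. The relevant formula here is \eqref{eq:prod_simp_3}, which gives $(z,\cl0)\cdot(0,\cl{t_j})=(0,\cl{z\,t_j})$; combining this with the already-established fact that each $(z,\cl0)$ is generated by the $Z_i$, I obtain each $(0,\cl{z^{(j)} t_j})$ as a $\cdot$-combination of the $Z_i$ and $T_j$, and summing gives an arbitrary $(0,\cl{a_1})$.

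The main obstacle, and the only point requiring genuine care, is the interplay between the two coordinates of the product \eqref{eq:prod}: a priori $(a,\cl0)\cdot(b,\cl0)$ has a nonzero second component $\cl{(a,b)_1}$, so one cannot naively treat the first coordinate in isolation. The resolution is precisely \eqref{eq:prod_simp_4}, valid because $\pb{a,b}_1\in Z(\cA)$ (hence $(a,b)_1$ differs from $(b,a)_1$ by a central element, and more to the point the whole decomposition is engineered so the second slot vanishes mod $Z(\cA)$), guaranteeing that the assignment $a\mapsto(a,\cl0)$ really is an algebra map. Once this is granted the argument is a routine bookkeeping of linear combinations, and I would close by remarking that since the three product formulas \eqref{eq:prod_simp_1}--\eqref{eq:prod_simp_3} express every product of generators back inside $\Span_\bbC$ of products of generators, the subalgebra generated by $Z_1,\dots,Z_k,T_1,\dots,T_\ell$ is all of $\Pi(\cA)$.
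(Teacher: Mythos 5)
There is a genuine gap at the heart of your argument: you claim that for $a,b\in Z(\cA)$ one has $(a,b)_1\in Z(\cA)$, ``as follows from Lemma \ref{lma:ch_Poisson}'', so that \eqref{eq:prod_simp_4} always applies and $a\mapsto(a,\cl0)$ is an algebra homomorphism. Lemma \ref{lma:ch_Poisson} does not give this: it shows only that the \emph{antisymmetric} part $\pb{a,b}_1=(a,b)_1-(b,a)_1$ is central, and says nothing about $(a,b)_1$ itself. The paper is explicit that \eqref{eq:prod_simp_4} holds only \emph{when} $(a,b)_1\in Z(\cA)$, which is a special feature of certain examples (e.g.\ monomial relations as in Example \ref{exa:typical}), not a general fact; indeed, Section \ref{sec:reduced} states outright that $Z(\cA)\times\set{\cl0}$ is in general \emph{not} a subalgebra of $\Pi(\cA)$, which directly contradicts your assertion that $a\mapsto(a,\cl0)$ is an algebra map. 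Your closing remark that ``the whole decomposition is engineered so the second slot vanishes mod $Z(\cA)$'' is not an argument and is false in general. So the first half of your proof, generating the summand $Z(\cA)\times\set{\cl0}$ in isolation, does not go through.

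The repair is exactly what the paper's proof does, and your second half already contains the needed ingredient. Writing $Z=P(z_1,\dots,z_k)$, one only concludes that $P(Z_1,\dots,Z_k)=(Z,\cl T)$ for \emph{some} $T\in\cA$ (correct first coordinate, uncontrolled second coordinate, by \eqref{eq:prod_simp_1}--\eqref{eq:prod_simp_3}). The error $\cl{A-T}$ is then absorbed using the $Z(\cA)$-module generators $\cl{t_j}$: since $(a,\cl{a_1})\cdot(0,\cl{b_1})=(0,\cl{ab_1})$ irrespective of $a_1$, one gets $(Z,\cl A)=P(Z_1,\dots,Z_k)+\sum_j P_j(Z_1,\dots,Z_k)\cdot T_j$. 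In other words, the two summands of $\Pi(\cA)$ cannot be generated independently; the second summand must be used to correct the spillover produced while building the first.
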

\begin{proof}
  Let $(Z,\cl A)\in\Pi(\cA)$. Since $z_1,\dots,z_k$ are generators of $Z(\cA)$, there exists a polynomial $P$ such
  that $P(z_1,\dots,z_k)=Z$.  In view of \eqref{eq:prod_simp_1} -- \eqref{eq:prod_simp_3}, $P(Z_1,\dots,Z_k)=(Z,\cl
  T)$ where $T\in\cA$. Since $\cA/Z(\cA)$ is generated by $\cl{t_1},\dots,\cl{t_\ell}$ as a $Z(\cA)$-module, there
  exist $\alpha_1,\dots,\alpha_\ell\in Z(\cA)$ such that $\cl{A-T}=\sum_{j=1}^\ell\alpha_j\cl{t_j}$. Writing each
  of these $\alpha_j$ as $\alpha_j=P_j(z_1,\dots,z_k)$, where each $P_j$ is a polynomial, we get
  \begin{equation*}
    (Z,\cl A)=P(Z_1,\dots,Z_k)+(0,\cl{A-T})=P(Z_1,\dots,Z_k)+\sum_{j=1}^\ell P_j(Z_1,\dots,Z_k)\cdot
    T_j\;,
  \end{equation*}
  which expresses explicitly $(Z,\cl A)$ in terms of the elements $Z_1,\dots,Z_k$ and $T_1,\dots,T_\ell$, which
  shows that the latter are generators of $\Pi(\cA)$.
\end{proof}
Using the algebra generators $Z_1,\dots,Z_k$ and $T_1,\dots,T_\ell$ of $\Pi(\cA)$, we get a surjective algebra
homomorphism $\bbC[Z_1,\dots,Z_k,T_1,\dots,T_\ell]\to\Pi(\cA)$ and we can describe $(\Pi(\cA),\cdot)$ as the
quotient $\bbC[Z_1,\dots,Z_k,T_1,\dots,T_\ell]/\cJ$, where $\cJ$ is the kernel of the homomorphism; we will
explicitly compute this kernel in our examples, providing thereby the algebra structure of $\Pi(\cA)$.

The Poisson brackets of the chosen generators can then be explicitly computed from the following formulas, which
are a specialization of \eqref{eq:PB}:
\begin{gather}
  \pb{(a,\cl0),(b,\cl0)}=\(\pb{a,b}_1,\cl{\pb{a,b}_2}\) \;,\label{eq:bra_1}\\
  \pb{(0,\cl{a_1}),(0,\cl{b_1})}=\(0,\cl{\lb{a_1,b_1}}\)\;,\label{eq:bra_2}\\
  \pb{(a,\cl0),(0,\cl{b_1})}=\(0,\cl{\pb{a,b_1}_1}\)\;,\label{eq:bra_3}
\end{gather}
where $a,b\in Z(\cA)$ and $a_1,b_1\in\cA$, and so we only need to compute $\pb{a,b}_1$, $\pb{a,b}_2$,
$\lb{a_1,b_1}$ and $\pb{a,b_1}_1$ for such elements to determine these Poisson brackets. Recall from
\eqref{eq:comm_exp} that $\pb{a,b}_1$ and $\pb{a,b}_2$ are the coefficients in $\h$ and $\h^2$ of $a\star b$.  As
before, when $\pb{a,b}_2\in Z(\cA)$, \eqref{eq:bra_1} simplifies further to
\begin{equation}\label{eq:bra_4}
  \pb{(a,\cl0),(b,\cl0)}=\(\pb{a,b}_1,\cl0\)\;.
\end{equation}
Similarly the Lie action of $\Pi(\cA)$ on $\cA$ is given for $a\in Z(\cA)$ and $a_1,b\in\cA$ by the following
formulas, which are a specialization of \eqref{eq:final_module}:
\begin{equation}\label{eq:mod_simp}
\pbm{(a,\cl0)}b=\pb{a,b}_1\quad\hbox{ and }\quad \pbm{(0,\cl{a_1})}b=\lb{a_1,b}\;.
\end{equation}
In view of the following proposition, we will also be interested in the center $Z(\cA_\q)$ of the quantum algebra
$\cA_\q$.

\begin{prop}\label{prp:Cas_from_center}
  Suppose that $X=X_0+\h X_1+\cdots$ is a central element of $\cA[[\h]]$. Then $\(X_0,\cl{X_1}\)$ is a Casimir of
  $(\Pi(\cA),\PB)$ and belongs to the annihilator of $\cA$.
\end{prop}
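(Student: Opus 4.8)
The plan is to derive both assertions from the single observation that $X$ is a Casimir of the rescaled Poisson algebra $(\cH_\h,\LB_\h)$, after which everything follows by transport along the projection $\pr$ and by reading off one coefficient of a power series. First I would expand the centrality hypothesis $\lb{X,b}_\star=0$, valid for every $b\in\cA$, as a power series in $\h$. Writing $\lb{X,b}_\star=\lb{X_0,b}_\star+\h\lb{X_1,b}_\star+\cdots$ and applying \eqref{eq:comm_exp} to the leading term, the $\h^0$-coefficient equals $\lb{X_0,b}$; its vanishing for all $b$ shows $X_0\in Z(\cA)$. Hence $X\in Z(\cA)+\h\cA[[\h]]=\cH_\h$ and $\pr(X)=(X_0,\cl{X_1})$ is a genuine element of $\Pi(\cA)$.

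Since $X$ is central in all of $\cA[[\h]]$, in particular $\lb{X,B}_\star=0$ for every $B\in\cH_\h$, so $\lb{X,B}_\h=\frac1\h\lb{X,B}_\star=0$; that is, $X$ is a Casimir of $(\cH_\h,\LB_\h)$. Now I invoke Proposition \ref{prp:pr_poisson}: $\pr:(\cH_\h,\LB_\h)\to(\Pi(\cA),\PB)$ is a \emph{surjective} morphism of Poisson algebras. Given any $\bF\in\Pi(\cA)$ I write $\bF=\pr(B)$ and compute $\pb{(X_0,\cl{X_1}),\bF}=\pb{\pr(X),\pr(B)}=\pr(\lb{X,B}_\h)=0$. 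Thus $(X_0,\cl{X_1})$ is a Casimir of $\Pi(\cA)$, which is the first claim.

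For the second claim I read ``belongs to the annihilator of $\cA$'' as the vanishing on $\cA$ of the Lie action $\pBm{(X_0,\cl{X_1})}$, equivalently of the Hamiltonian derivation $\p_{(X_0,\cl{X_1})}$. By \eqref{eq:final_module} this action is $\pbm{(X_0,\cl{X_1})}b=\pb{X_0,b}_1+\lb{X_1,b}$, and this is exactly the $\h^1$-coefficient in the expansion of $\lb{X,b}_\star$ used above. Since $\lb{X,b}_\star$ vanishes identically, so does that coefficient, giving $\pbm{(X_0,\cl{X_1})}b=0$ for all $b\in\cA$.

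I do not expect a genuine obstacle: both claims issue from the same $\h$-expansion, and the computations are routine. The points that need care are interpretive and clerical. The annihilator must be understood in the Lie sense, since the $\cdot$-action $(X_0,\cl{X_1})\cdot b=X_0b$ does \emph{not} vanish in general (for $X=1$ it is the identity), so the statement cannot concern it. On the bookkeeping side, the tails $X_2,X_3,\dots$ of $X$ enter only at orders $\h^2$ and higher and so are irrelevant to the $\h^0$ and $\h^1$ coefficients, and \eqref{eq:comm_exp} is applied only after restricting the second argument to $\cA$, where the maps $\pb{\cdot,\cdot}_i$ are defined.
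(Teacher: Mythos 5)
Your proof is correct and follows essentially the same route as the paper's: the Casimir claim is obtained by transporting the centrality of $X$ through the surjective Poisson morphism $\pr$ of Proposition \ref{prp:pr_poisson}, and the annihilator claim by reading off the $\h^1$-coefficient of $\lb{X,b}_\star=0$ via \eqref{eq:final_module}. Your preliminary check that $X_0\in Z(\cA)$ (so that $(X_0,\cl{X_1})$ is indeed an element of $\Pi(\cA)$) and your remark that the annihilator must be meant in the Lie sense are points the paper leaves implicit, but they do not change the argument.
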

\begin{proof}
  Let $\(Y_0,\cl{Y_1}\)$ be any element of $\Pi(\cA)$ and denote $Y=Y_0+\h Y_1\in\cA[[\h]]$. According to
  Proposition \ref{prp:pr_poisson},
  \begin{equation*}
    \pb{\(X_0,\cl{X_1}\),\(Y_0,\cl{Y_1}\)}=\pr{\lb{X_0+\h X_1,Y_0+\h Y_1}_\h}=\pr\lb{X,Y}_\h=0\;,
  \end{equation*}
  since $X$ belongs to the center of $\cA[[\h]]$. This shows that $\(X_0,\cl{X_1}\)$ Poisson commutes with all
  elements of $\Pi(\cA)$, i.e., is a \emph{Casimir} of $\Pi(\cA)$. Similarly, for $a\in\cA$ one has
  $\pbm{\(X_0,\cl{X_1}\)}a=0$ since $\lb{X,a}_\star=0$ for any $a\in\cA\subset\cA[[\h]]$.
\end{proof}
Let $A=A(\q)\in Z(\cA_\q)$, which we may assume to depend polynomially on $\q$. For any regular value $\q_0$ of
$\q$, and any non-zero $\q_1$, expand $A(\q(\h))=A(\q_0+\h\q_1)=X_0+\nu X_1+\cO(\h)$, which is a central element of
$\cA[[\h]]$, hence leads in view of the proposition to the Casimir $\(X_0,\cl{X_1}\)$ of $\Pi(\cA)$. In general,
not all Casimirs of $\Pi(\cA)$ are obtained in this way.

\subsection{The quantum plane}\label{par:quantum_plane}
As a first example, we consider the (complex) quantum plane, which is defined as being the noncommutative algebra
\begin{equation}\label{eq:quantum_plane}
  \cA_q:=\bbC_q[x,y]=\frac{\bbC(q)\langle x,y\rangle}{\langle yx-qxy\rangle}\;.
\end{equation}
As a basis $\cB$ for this quantum algebra we take the normally ordered monomials $x^my^n,$ $m,n\in\bbN$. It is a
PBW basis and any $q_0\in\bbC$ is a regular value of it. In $\cA_q$ we have $yx=qxy$, or equivalently
$\lb{x,y}=(1-q)xy$. As we already pointed out in Section \ref{par:compute}, this implies that the center of $\cA_q$
is generated by monomials, from which it is clear that the center of $\cA_q$ consists of constants only,
$Z(\cA_q)=\bbC(q)$.

The evaluation of $\cA_q$ at $q=1$ is the polynomial algebra $\bbC[x,y]$, which is commutative, so let us consider
its evaluation at $q=-1$ to illustrate how to obtain the Poisson and reduced Poisson algebras associated with the
deformation; see below for other values of $q$. As explained in Section~\ref{par:quotients_and_deformations}, we
set $q(\h)=-1+\h$ and consider
\begin{equation*}
  \cA_\h=\frac{\bbC[[\h]]\langle x,y\rangle}{\langle yx-(\h-1)xy\rangle}\simeq\cA[[\h]]\;,\hbox{ where }
  \cA:=\frac{\bbC\langle x,y\rangle}{\langle yx+xy\rangle}\;.
\end{equation*}
The product $\star$ on $\cA[[\h]]$ is induced by the above isomorphism and is completely specified by $y\star
x=(\h-1)x\star y$. In $\cA$ we have $yx=-xy$, or equivalently $\lb{x,y}=2xy$, so that $Z(\cA)$ is also generated by
monomials.  Since $x$ and $y$ anticommute in $\cA$, the center $Z(\cA)$ consists of all polynomials that are even
in $x$ and in~$y$, hence is generated by $x^2$ and $y^2$. Then $\cA/Z(\cA)$ is generated as a $Z(\cA)$-module by
$\cl x$, $\cl y$ and $\cl{xy}$. According to Proposition \ref{prop:generators}, the following~$5$ elements are
algebra generators of~$\Pi(\cA)$:
\begin{equation*}
  X=\(x^2,\cl0\)\;,\  Y=\(y^2,\cl0\)\;,\ U=\(0,\cl x\)\;,\ V=\(0,\cl y\)\;,\ W=\(0,\cl{xy}\)\;,
\end{equation*}
with product $\cdot$ given by \eqref{eq:prod_simp_2} -- \eqref{eq:prod_simp_4}.  The natural algebra homomorphism
$\bbC[X,Y,U,V,W]\to \Pi(\cA)$ has as kernel the ideal $\mathcal J$ of~$\Pi(\cA)$ generated by $U^2,V^2,W^2,U\cdot
V,V\cdot W,U\cdot W$. To show this, first notice that the kernel clearly contains these elements, while the
elements $X^i\cdot Y^j\cdot U^{\epsilon_1}V^{\epsilon_2}\cdot W^{\epsilon_3}$ with at most one $\epsilon_i$ equal
to 1 and the others equal to 0, are all linearly independent. Indeed, they are given by
\begin{equation*}
  X^i\cdot Y^j=\(x^{2i}y^{2j},\cl0\)\;,\quad X^i\cdot Y^j\cdot U^{\epsilon_1}\cdot V^{\epsilon_2}\cdot
  W^{\epsilon_3}=\(0,\cl{x^{2i+\epsilon_1+\epsilon_3}y^{2j+\epsilon_2+\epsilon_3}}\)\;,
\end{equation*}
where exactly one of the $\epsilon_i$ is equal to $1$.  Therefore, as an algebra,
$(\Pi(\cA),\cdot)\simeq\bbC[X,Y,U,V,W]/{\mathcal J}$, while $Z(\cA)\simeq\bbC[X,Y]$.
The Poisson brackets $\PB$ between the generators $X,Y,\dots,W$ are given
in Table \ref{tab:vol_1}.
\begin{table}[h]
  \def\arraystretch{1.8}
  \setlength\tabcolsep{0.4cm}
  \centering
\begin{tabular}{c|ccccc}
   $\PB$&$X$&$Y$&$U$&$V$&$W$\\
  \hline
  $X$&$0$&$4X\cdot Y$&$0$&$2X\cdot V$&$2X\cdot W$\\
  $Y$&$-4X\cdot Y$&$0$&$-2U\cdot Y$&$0$&$-2Y\cdot W$\\
  $U$&$0$&$2U\cdot Y$&0&$2W$&$2X\cdot V$\\
  $V$&$-2X\cdot V$&$0$&$-2W$&$0$&$-2Y\cdot U$\\
  $W$&$-2X\cdot W$&$2Y\cdot W$&$-2X\cdot V$&$2Y\cdot U$&0\\
\end{tabular}
\bigskip
\caption{}\label{tab:vol_1}
\end{table}
They were computed using \eqref{eq:bra_1} -- \eqref{eq:bra_3}. Let us show for example that
$\pb{X,Y}=\(4x^2y^2,\cl0\)=4X\cdot Y$. For the first equality, one needs to verify in view of \eqref{eq:bra_4} that
$\pb{x^2,y^2}_1=4x^2y^2$. To do this, we first compute, in $\cA_\h$,
\begin{equation*}
  \lb{x^2,y^2}=\(1-q(\h)^4\)x^2y^2=\(1-(\h-1)^4\)x^2y^2=4\h x^2y^2+\cO\(\h^2\)\;,
\end{equation*}
from which it follows that $\lb{x^2,y^2}_\star=4\h x^2y^2+\cO\(\h^2\),$ as we needed to show.  For the second
equality, it suffices to notice that the product $X\cdot Y$ is given by \eqref{eq:prod_simp_4} because
$\(x^2,y^2\)_1=0$.  Similarly, $\pb{U,V}=2W$ since $[x,y]=2xy$. As a last example,
$\pb{X,W}=\(0,2\cl{x^3y}\)=X\cdot W$ because $\pb{x^2,xy}_1=2x^3y$, which follows from
\begin{equation*}
  \lb{x^2,xy}=\(1-q(\h)^2\)x^3y=\(1-(\h-1)^2\)x^3 y=2\h x^3y+\cO\(\h^2\)\;.
\end{equation*}
It is clear from the table that the reduced Poisson algebra $(Z(\cA),\PB_1)$ can be described as the polynomial
algebra $\bbC[X,Y]$, with Poisson bracket $\pb{X,Y}_1=4XY$. The action and Lie action of the generators of
$\Pi(\cA)$ on the generators $x$ and $y$ of $\cA$ is given in Table \ref{tab:vol_2}.
\begin{table}[h]
  \def\arraystretch{1.8}
  \setlength\tabcolsep{0.4cm}
\begin{tabular}{c|cc}
   $\cdot$&$x$&$y$\\
  \hline
  $X$&$x^3$&$x^2y$\\
  $Y$&$xy^2$&$y^3$\\
    $U$&$0$&$0$\\
  $V$&$0$&$0$\\
$W$&$0$&$0$\\
\end{tabular}
\qquad\qquad
\begin{tabular}{c|cc}
   $\PBm$&$x$&$y$\\
  \hline
  $X$&$0$&$2x^2y$\\
  $Y$&$-2xy^2$&$0$\\
    $U$&$0$&$2xy$\\
  $V$&$-2xy$&$0$\\
$W$&$-2x^2y$&$2xy^2$\\
\end{tabular}
\bigskip
\caption{}\label{tab:vol_2}
\end{table}
The entries of the rightmost table are computed from \eqref{eq:mod_simp}. For example, $\pbm
Xy=\pbm{\(x^2,\cl0\)}y=\pb{x^2,y}_1=2x^2y$, where the last equality is obtained from
\begin{equation*}
  \lb{x^2,y}=(1-q(\h)^2)x^2 y=\(1-(1-\h)^2\)x^2y=2\h x^2y+\cO(\h^2)\;.
\end{equation*}
Also, $\pbm Uy=\pbm{\(0,\cl x\)}y=\lb{x,y}=2xy$, since in $\cA$, $x$ and $y$ anticommute.

We now consider arbitrary values of $q$. Since $\lb{x^iy^j,x}=(q^j-1)x^{i+1}y^j$ in $\cA_q$, the center of $\cA$
will be $\bbC$, unless $q$ is a root of unity. Let us therefore consider a primitive $n$-th root of unity $\xi$
(where $n>1$). Setting $q(\h)=\xi+\h$, the above considerations and computations for $n=2$ are easily
generalized. First, since we know that $Z(\cA)$ is generated by monomials, it is easily checked as above that $x^n$
and $y^n$ generate $Z(\cA)$ and from it that $X:=\(x^n,\cl0\)$, $Y:=\(y^n,\cl0\)$ and $W_{i,j}:=\(0,\cl{x^iy^j}\)$,
where $0\leqslant i,j< n,\ i+j\neq0$ generate~$\Pi(\cA)$. As an algebra,
$(\Pi(\cA),\cdot)\simeq\bbC[X,Y,W_{i,j}]/\langle W_{i,j}W_{k,\ell}\rangle$, where the indices $i,j,k,\ell$ are in
the range $0,\dots,n-1$, with $i+j\neq0$ and $k+\ell\neq0$. The Poisson brackets of these generators are computed
as above and are given in Table \ref{tab:intro_1} in the introduction.

Notice that $W_{i+k,j+\ell}$ is for large values of $i,j,k,\ell$ not one of the chosen generators of $\Pi(\cA)$ but
can easily be rewritten in terms of these generators using $W_{\alpha n+\beta,\alpha'n+\beta'}=X^\alpha \cdot
Y^{\alpha'}\cdot W_{\beta,\beta'}$.
The reduced Poisson algebra $(Z(\cA),\PB_1)$ can be described as the polynomial algebra $\bbC[X,Y]$, with Poisson
bracket $\pb{X,Y}_1=-\xi^{-1}nXY$.
The action and Lie action of $\Pi(\cA)$ on $\cA$ are given by Table~\ref{tab:intro_2}.
%

\begin{remark}\label{rem:quantum_torus}
The quantum plane \eqref{eq:quantum_plane} is closely related to the (two-dimensi\-onal) quantum torus, which is
defined as the noncommutative algebra
\begin{equation}\label{eq:quantum_torus}
  \cA_q:=\bbT_q[x,y]=\frac{\bbC(q)\langle x,y,x^{-1},y^{-1}\rangle}{\langle yx-qxy\rangle}\;.
\end{equation}
The above considerations and computations are easily adapted to this case. Still considering the case of $q$ being
an $n$-th root of unity, $Z(\cA)$ is now generated by $x^n,\ x^{-n},\ y^{n}$ and $y^{-n}$
and~$Z(\cA)\simeq\bbC[X,Y,X^{-1},Y^{-1}]$, where $X^{-1}:=\(x^{-n},\cl0\)$ and $Y^{-1}:=\(y^{-n},\cl0\)$ are two
extra generators for $\Pi(\cA)$. The ideal $\cJ$ has two extra generators $XX^{-1}-1$ and $YY^{-1}-1$. The above
tables containing the Poisson brackets of $\Pi(\cA)$ and the actions of $\Pi(\cA)$ on $\cA$ in the case of the
quantum plane still contain all information for the case of the quantum torus because by the Leibniz identity,
$\pB{X^{-1}}=-X^{-2}\pB{X}$ and $\pBm{X^{-1}}=-X^{-2}\cdot\pBm{X}$, and similarly for the brackets with
$Y^{-1}$. In more formal terms, for any value of $q\in\bbC^*$, the Poisson algebra $\Pi(\cA)$ for the quantum torus
is the localization of the Poisson algebra $\Pi(\cA)$ of the quantum plane, with respect to the multiplicative
system of $\cA$ generated by $X$ and~$Y$ (see~\cite[Section~2.4.2]{PLV}). The algebra $\cA$, which is a Poisson
module over $\Pi(\cA)$, then becomes a Poisson module over this localization of $\Pi(\cA)$.
\end{remark}

\subsection{A quantum algebra related to the Volterra chain}\label{par:volterra_1}
We next consider a more elaborate example which is related to the $N$-periodic nonabelian Volterra chain (see
Section \ref{par:volterra}). The quantum algebra, which is a particular case of Example \ref{exa:typical}, is
generated by $x_1,\dots,x_N$, which all commute except the neighboring pairs $x_{i+1}x_i=qx_ix_{i+1}$ for $i=1,\dots,N$; in this
formula the index $i$ of $x_i$ is taken modulo $N$, so that $x_1x_N=qx_Nx_1$. For $1\leqslant i,j\leqslant N$ let
us denote their $N$-periodic distance by $d_N(i,j)$; so $d_N(i,j)=\min\set{\vert i-j\vert,N-\vert i-j\vert}$.  Then
$x_ix_j=x_jx_i$ when $d_N(i,j)\neq1$. We therefore consider
\begin{equation}\label{eq:quantum_volterra}
  \cA_q:=\frac{\bbC(q)\langle x_1,\dots,x_N\rangle}{\cI_q}\;,\hbox{ where } \cI_q={\langle
    x_{i+1}x_i-qx_ix_{i+1},x_ix_j-x_jx_i\rangle}_{d_N(i,j)\neq1}\;.
\end{equation}
An automorphism of order $N$ of $\cA_q$ is defined by $\cS(x_i):=x_{i+1}$ for all $i$.  We use as a basis $\cB$ of
$\cA_q$ the normally ordered monomials $x_1^{i_1}\dots x_N^{i_N}$ with $i_1,\dots,i_N\in\bbN$.  Notice that $q_0=0$
is not a regular value of $\cB$, since $x_Nx_1=q^{-1}x_1x_N$, but all other values are regular values.  Again,
$\cB$ is a PBW basis. As we explained in Section \ref{par:compute}, the center of~$\cA_q$ is in this case generated
by monomials. We use this fact to show that $Z(\cA_q)$ is generated by $x_1x_2\dots x_N$ when~$N$ is odd, while it
is generated by $x_1x_3\dots x_{N-1}$ and $x_2x_4\dots x_{N}$ when $N$ is even. A monomial $x_1^{i_1}x_2^{i_2}\dots
x_N^{i_N}$ of~$\cA_q$ belongs to the center if and only if its commutator (in $\cA_q$) with any~$x_\ell$
vanishes. From
\begin{equation}\label{eq:com_in_volterra}
  0=\lb{x_1^{i_1}\dots x_N^{i_N},x_\ell}=(q^{i_{\ell+1}-i_{\ell-1}}-1)x_\ell x_1^{i_1}\dots x_N^{i_N}
\end{equation}
it follows that if $x_1^{i_1}\dots x_N^{i_N}$ belongs to the center of $\cA_q$, then $i_k=i_{k+2}$ for
$k=1,\dots,N$, which yields the claim (recall that the indices of $x$ are $N$-periodic, so that also $i_{N+1}=i_1$
and $i_0=i_N$).  It is also clear from \eqref{eq:com_in_volterra} that for values of $q$ that are not roots of
unity, the center of the corresponding algebra $\cA$ will contain no other elements and hence
$Z(\cA)\times\set{\cl0}\subset \Pi(\cA)$ consists of Casimirs only (Proposition \ref{prp:Cas_from_center}) and the
only non-trivial Poisson brackets in $\Pi(\cA)$ are given by commutators (see \eqref{eq:bra_1} --
\eqref{eq:bra_3}). We will therefore consider the evaluation of $q$ at roots of unity only. Also, in view of the
difference between $N$ even and odd, we will first consider in detail the case that $N$ is odd and spell out
afterwards how to adapt the results in case~$N$ is even.

Let $N>2$ be odd and let $\xi$ denote a primitive $n$-th root of unity, $\xi^n=1$, where $n>1$; see Remark
\ref{rem:n=1} below for the case of $n=1$.  We set $q(\h)=\xi+\h$ and consider
\begin{equation*}
  \cA_\h=\frac{\bbC[[\h]]\langle x_1,\dots,x_N\rangle}{\cI_{q(\h)}}\simeq\cA[[\h]]\;,\hbox{ where }
  \cA:=\frac{\bbC\langle x_1,\dots,x_N\rangle}{\cI_\xi}\;.
\end{equation*}
It is clear that the center of $\cA$ contains $x_1^n,\dots,x_N^n$ and $x_1x_2\dots x_N$. We claim that $Z(\cA)$ is
generated by these elements. To show this, we look for monomials $x_1^{i_1}x_2^{i_2}\dots x_N^{i_N}$ in the center
of $\cA$, with $0\leqslant i_1,i_2,\dots, i_N<n$; as above, it follows however from \eqref{eq:com_in_volterra} with
$q=\xi$ that then all $i_k$ must be equal (recall that $N$ is assumed odd), and so the monomial is of the form
$x_1^k x_2^k\dots x_N^k$, for some $k$. This shows our claim. It is then also clear that $\cA$ is generated as a
$Z(\cA)$-module by the monomials $x_1^{i_1}x_2^{i_2}\dots x_N^{i_N},$ where $0\leqslant i_1,\dots,i_N<n$ are not
all zero and at least one of them is zero. In view of Proposition \ref{prop:generators}, $\Pi(\cA)$ is generated by
\begin{equation*}
  X_1:=(x_1^n,\cl0)\;,\dots, X_N:=(x_N^n,\cl0)\;,\ X:=(x_1x_2\dots x_N,\cl0) \hbox{ and }
  W_{i_1,\dots,i_N}:=\(0,\cl{x_1^{i_1}x_2^{i_2}\dots x_N^{i_N}}\)\;,
\end{equation*}
where $0\leqslant i_1,\dots,i_N<n$ are not all zero and at least one of them is zero. Since $\Pi(\cA)$ is generated
by these elements, we have a surjective morphism $\bbC[X_i,X,W_{i_1,\dots,i_N}]\to\Pi(\cA)$, with kernel
\begin{equation}\label{eq:cI_n}
  \cJ=\langle{X_1\cdot X_2\cdots X_N-X^n,W_{i_1,\dots,i_N}\cdot X-W_{i_1+1,\dots,i_N+1},W_{i_1,\dots,i_N}\cdot
    W_{j_1,\dots,j_N}}\rangle\;,
\end{equation}
where the indices $i_1,\dots,i_N$ and $j_1,\dots,j_N$ are as above. It is understood that when one of the indices
of the term $W_{i_1+1,\dots,i_N+1}$ in $\eqref{eq:cI_n}$ is at least $n$, it is rewritten in terms of the
generators, for example $W_{n,i_2,\dots,i_N}=X_1W_{0,i_2,\dots,i_N}$ and $W_{n,0,0,\dots,0}=0$. It follows that, as
algebras, $\Pi(\cA)\simeq\bbC[X_i,X,W_{i_1,\dots,i_N}]/\cJ$ and $Z(\cA)\simeq\bbC[X_i,X]/\langle X_1X_2\dots
X_N-X^n\rangle$.

In terms of these generators, the Poisson bracket $\PB$ of $\Pi(\cA)$ is given by Table \ref{tab:vol_5}, in which
$I=(i_1,\dots,i_N)$ and $J=(j_1,\dots,j_N)$.
%
%
%
\begin{table}[h]
  \def\arraystretch{1.8}
  \setlength\tabcolsep{0.4cm}
  \centering
\begin{tabular}{c|ccc}
   $\PB$&$X_\ell$&$W_J$\\
  \hline
  $X_k$&$n^2(\delta_{k,\ell+1}-\delta_{k,\ell-1})\xi^{-1}X_k\cdot X_\ell$&$({j_{k-1}}-{j_{k+1}})\xi^{-1}nX_k\cdot W_J$\\
  $W_I$&$({i_{\ell+1}}-{i_{\ell-1}})\xi^{-1}nX_\ell\cdot W_I$& $\(\xi^{\eta(I,J)}-\xi^{\eta(J,I)}\)W_{I+J}$\\
\end{tabular}
\bigskip
\caption{}\label{tab:vol_5}
\end{table}
In this table, the indices of $i$ and $j$ are again taken modulo~$N$, so that $i_{N+1}=i_1$ and $i_0=i_N$, and
similarly for $j$. Notice that in view of the automorphism $\cS$ we may assume that $k\neq1,N$, or that
$\ell\neq1,N$, when verifying the entries of the table. Also, $\eta(I,J)$ is a shorthand for
$\sum_{s=1}^{N-1}i_{s+1}j_s-i_Nj_1$ and is computed from
\begin{equation*}
  (x_1^{i_1}\dots x_{N}^{i_{N}})(x_1^{j_1}\dots x_{N}^{j_{N}})=\xi^{\eta(I,J)}x_1^{i_1+j_1}\dots
  x_{N}^{i_{N}+j_{N}}\;,
\end{equation*}
using the relations in $\cA$ which say that all variables $x_i$ commute, except the neighboring pairs $x_{i+1}x_i=\xi x_ix_{i+1}$ for all $i\,\mbox{mod}\,N$.
As before, when $W_{I+J}$ is not one of the chosen generators of $\Pi(\cA)$ it is easily written as a product of
such generators. We did not put the generator $X$ in the table because it is a Casimir, so the brackets of $X$ with
any other generator (or element of $\Pi(\cA)$) is zero; it is a Casimir because $x_1x_2\dots x_N$ is a central
element of $\cA_q$, see Proposition~\ref{prp:Cas_from_center}. The reduced Poisson algebra $(Z(\cA),\PB_1)$ is
easily read off from the upper-left corner of Table \ref{tab:vol_5}. It is a so-called \emph{diagonal} Poisson
structure (see \hbox{\cite[Example 8.14]{PLV}}).

For future use, we also compute the Poisson module structure on $\cA$, still assuming $N$ odd. It is clear that the
action of $\Pi(\cA)$ on $\cA$ is given by $X_k\cdot x_\ell=x^n_kx_\ell$, $W_I\cdot x_\ell=0$ for all
$I,k,\ell$. According to \eqref{eq:final_module} and since $X$ is in the annihilator of $\cA$ (see
Proposition~\ref{prp:Cas_from_center}), $\pbm X{x_\ell}=0$. With $k,\ell=1,\dots,N$ and $I=(i_1,\dots,i_N)$
($0\leqslant i_1,\dots,i_N<n$),
\begin{equation}\label{eq:mod_volt}
  \pbm{X_k}{x_\ell}\stackrel{\eqref{eq:final_module}}=\pb{x_k^n,x_\ell}_1=(\delta_{k,\ell+1}-\delta_{k,\ell-1})
  \xi^{-1}nx_k^nx_\ell\;,
\end{equation}
and
\begin{equation}\label{eq:mod_volt_2}
  \pbm{W_I}{x_\ell}=\lb{x_1^{i_1}x_2^{i_2}\dots x_N^{i_N},x_\ell}=\alpha(I,\ell)x_1^{i_1}\dots
  x_\ell^{i_\ell+1}\dots x_N^{i_N}\;,
\end{equation}
where
\begin{equation*}
  \alpha(I,\ell)=\alpha(i_1,\dots,i_N,\ell)=\left\{
  \begin{array}{cl}
    \xi^{i_2-i_{N}}-1&\ell=1\;,\\
    \xi^{i_{\ell+1}}-\xi^{i_{\ell-1}}&1<\ell<N\;,\\
    1-\xi^{i_{N-1}-i_1}&\ell=N\;.
  \end{array}
  \right.
\end{equation*}
These results were obtained from \eqref{eq:mod_simp} using the commutation rules in $\cA_\h\simeq\cA[[\h]]$. For
example, $\lb{x_k^n,x_\ell}_\star=0$ when $d_N(k,\ell)\neq1$, since then $x_k$ and $x_\ell$ commute. Also, if
$k\neq N$ then $\lb{x_k^n,x_{k+1}}=(1-q^n)x_k^nx_{k+1}=(1-(\xi+\h)^n)x_k^nx_{k+1}= -\xi^{-1}n\h
x_k^nx_{k+1}+\cO\(\h^2\)$, so that $\lb{x_k^n,x_{k+1}}_\star= -\xi^{-1}n\h x_k^nx_{k+1}+\cO\(\h^2\)$ and hence
$\pbm{X_k}{x_{k+1}}=\pb{x_k^n,x_{k+1}}_1=-\xi^{-1}n\h x_k^nx_{k+1}$; this result is also valid for $k=N$ since
$x_N^nx_1=x_1x_N^n$ hence belongs to $\cB$. The formulas in \eqref{eq:mod_volt_2} are just commutators in $\cA$, so
they follow immediately from the relations in $\cA$, which are given by $x_{i+1}x_i=\xi x_ix_{i+1}$, and
$x_ix_j=x_jx_i$ when $d_N(i,j)\neq1$.

We now consider the minor adaptations to be done in case $N$ is even, $N>2$. As we have already shown, $Z(\cA_q)$
is generated by $x_1x_3\dots x_{N-1}$ and $x_2x_4\dots x_{N}$. By the same proof as above, $Z(\cA)$ is generated by
\begin{equation*}
  X_i:=(x_i^n,\cl0)\;,\ Y_1:=(x_1x_3\dots x_{N-1},\cl0)\;,\ Y_2:=(x_2x_4\dots x_{N},\cl0)\;,
\end{equation*}
where $i=1,\dots,N$, and
\begin{equation*}
  W_{i_1,\dots,i_N}:=\(0,\cl{x_1^{i_1}x_2^{i_2}\dots x_N^{i_N}}\)\;,\qquad 0\leqslant i_1,\dots,i_N<n\;,
\end{equation*}
where not all indices $i_k$ are zero, where at least one index $i_k$ with $k$ even is zero, as well as at least one
index $i_k$ with $k$ odd. As an algebra, $\Pi(\cA)\simeq\bbC[X_i,Y_1,Y_2,W_{i_1,\dots,i_N}]/\cJ'$, where
\begin{align*}
  \cJ'=\langle X_1\cdot X_3\cdots X_{N-1}-Y_1^n,X_2\cdot X_4\cdots X_{N}-Y_2^n,W_{i_1,\dots,i_N}\cdot Y_1-
  W_{i_1+1,i_2,\dots,i_N},\\
  W_{i_1,\dots,i_N}\cdot Y_2-W_{i_1,i_2+1,\dots,i_{N+1}},W_{i_1,\dots,i_N}\cdot W_{j_1,\dots,j_N}\rangle\;,\qquad\qquad
\end{align*}
and $Z(\cA)\simeq\bbC[X_i,Y_1,Y_2]/\langle X_1X_3\dots X_{N-1}-Y_1^n,X_2X_4\dots X_{N}-Y_2^n\rangle$.  The table of
Poisson brackets for the generators is again given by Table \ref{tab:vol_5}, where we now leave out the zero rows
and columns corresponding to the Casimirs $Y_1$ and $Y_2$. Also, the formulas for the actions of $\Pi(\cA)$ on
$\cA$ are the same. So finally the cases $N$ even and odd are formally not very different.

\begin{remark}\label{rem:n=1}
When $n=1$, $\xi=1$ and $\cA$ is commutative, $\cA=Z(\cA)$, so that we are in the classical case. Then $\cA/Z(\cA)$
is trivial and $\cA$ is generated by $X_1,\dots,X_N$, which now take the simple form $X_i=(x_i,\cl0)$,
i.e.\ $X_i=x_i$ under the natural identification of $\Pi(\cA)$ with $\cA$. The above computation of the Poisson
brackets is still valid and leads to the following non-zero Poisson brackets between the $x_i$:
\begin{equation}\label{eq:volt_class}
  \pb{x_i,x_j}=(\delta_{i,j+1}-\delta_{i,j-1})x_ix_j\;.
\end{equation}
It is the standard Poisson structure of the (periodic) Volterra chain.
\end{remark}

\begin{remark}\label{rem:non_periodic}
The infinite case ($N=\infty$) is not very different and is in a sense simpler. The formula for $\eta(I,J)$ in
Table \ref{tab:vol_5} simplifies to $\eta(I,J)=\sum i_{s+1}j_s$ and \eqref{eq:mod_volt_2} becomes
\begin{equation*}
  \pbm{W_I}{x_\ell}=(\xi^{i_{\ell+1}}-\xi^{i_{\ell-1}}) x_1^{i_1}\dots x_\ell^{i_\ell+1}\dots x_N^{i_N}\;.
\end{equation*}
The Poisson algebra $\Pi(\cA)$ does not have any Casimirs; it can be extended to a larger Poisson algebra for which
the infinite products $\prod\limits_{i\textrm{ odd}}x_i$ and $\prod\limits_{i\textrm{ even}}x_i$ are Casimirs.
\end{remark}


\subsection{Another quantum algebra related to the Volterra chain}\label{eq:volterra_second}
We now describe the Poisson algebra corresponding to another quantum algebra, also related to the periodic Volterra
chain. The algebra is given by
\begin{equation*}
  \cA_q:=\frac{\bbC(q)\langle x_1,\dots,x_{N}\rangle} {\langle
    x_{i+1}x_i-(-1)^iqx_ix_{i+1},x_ix_j+x_jx_i\rangle_{d_N(i,j)\neq1}}\;,
\end{equation*}
with $N$ even, $N=2M$.  As before, the indices are periodic modulo $N$ and $d_N(i,j)$ denotes the periodic distance
between $i$ and $j$. As a basis of $\cA_q$ we take the monomials $x_1^{i_1}\dots x_{N}^{i_{N}}$, with
$i_1,\dots,i_{N}\in\bbN$. It is a PBW basis and again only $q_0=0$ is not a regular value of $\cB$. Notice that
since $N$ is even we can speak unambiguously of even and odd indices. A $\bbC$-algebra automorphism $\cS_{q}$ of
$\cA_q$ is defined by $\cS_{q}(x_i)=x_{i+1}$ and $\cS_{q}(q)=-q$. In view of the commutation relations in $\cA_q$, the
center of $\cA_q$ is again generated by monomials. Since
\begin{equation}\label{eq:comm_in_second}
  \lb{x_1^{i_1}\dots x_{N}^{i_{N}},x_\ell}=\((-1)^{\sum_{k=1}^Ni_k-i_\ell-i_{\ell+(-1)^\ell}}
  q^{i_{\ell+1}-i_{\ell-1}}-1\)x_\ell x_1^{i_1}\dots x_{N}^{i_{N}}\;,
\end{equation}
$x_1^{i_1}\dots x_{N}^{i_{N}}$ belongs to $Z(\cA_q)$ if and only if (1) all exponents that correspond to even
indices are equal, (2) all exponents that correspond to odd indices are equal and (3) the sum
${\sum_{k=1}^Ni_k-i_\ell-i_{\ell+(-1)^\ell}}$ is even for all $\ell$. Since this sum is of the form
${\sum_{k=1}^Ni_k-i_\ell-i_{\ell\pm1}}$, it contains $M-1$ terms $i_k$ with $k$ even and $M-1$ terms $i_k$ with $k$
odd; therefore, when $M$ is odd, condition (3) is satisfied as soon as conditions (1) and (2) are satisfied, while
when $M$ is even, (3) imposes furthermore that all the exponents have the same parity. It follows that
\begin{equation*}
  \hbox{$Z(\cA_q)$ is generated by }
    \left\{
    \begin{array}{ll}
      \prod\limits_{i\textrm{ odd}}x_i^2\;,\ \prod\limits_{i\textrm{ even}}x_i^2\;,\ \prod\limits_{i}x_i\;,\qquad
      &M\textrm{ even},\\[10pt]
      \prod\limits_{i\textrm{ odd}}x_i\;,\ \prod\limits_{i\textrm{ even}}x_i\;,\qquad&M\hbox{ odd}.
    \end{array}
    \right.
\end{equation*}

We will consider the evaluation of $\cA_q$ at $q=1$ only; notice that in view of the automorphism $\cS_{q}$ the
Poisson algebras obtained for $q=1$ and $q=-1$ are isomorphic. When $q=1$ the commutation relations in $\cA$ take
the simple form
\begin{equation}\label{eq:comm_A}
  x_{i+1}x_i=(-1)^ix_ix_{i+1}\;,\quad x_jx_i=-x_ix_j\;,\ d_N(i,j)\neq1\;,
\end{equation}
so all variables anticommute, except for half of the neighbors that commute. From these relations and from the fact
that $Z(\cA)$ is generated by monomials, it follows that
\begin{equation*}
  \hbox{$Z(\cA)$ is generated by }
    \left\{
    \begin{array}{ll}
      x_k^2\;,x_{2j}x_{2j+1}\;,\qquad &M\textrm{ even},\\[10pt]
      x_k^2\;,x_{2j}x_{2j+1}\;,\ \prod\limits_{i\textrm{ odd}}x_i\;,\
      \prod\limits_{i\textrm{ even}}x_i\;,\qquad&M\hbox{ odd},
    \end{array}
    \right.
\end{equation*}
where $k=1,\dots,N$ and $j=1,\dots,M$.

From these generators we construct, using Proposition \ref{prop:generators}, generators for~$\Pi(\cA)$. Let us
denote $X_k:=\(x_k^2,\cl0\)$ for $k=1,\dots,N$, $Y_j:=\(x_{2j}x_{2j+1},\cl0\)$ for $j=1,\dots,M$,
$Y:=\(\prod_{i\textrm{ odd}} x_i,\cl0\)$, $Z:=\(\prod_{i\textrm{ even}} x_i,\cl0\)$ and
$W_{i_1,\dots,i_{N}}:=\(0,\cl{x_1^{i_1}\dots x_{N}^{i_{N}}}\)$. Then $\Pi(\cA)$ is generated by
$X_1,\dots,X_{N},Y_1,\dots,Y_M$ and
\begin{equation*}
  \left\{
  \begin{array}{ll}
    \; W_{i_1,\dots,i_{N}}, \textrm{ $i_k\in\set{0,1}$, not all $i_k=0$, $i_{2j}i_{2j+1}=0\;,$}
    \qquad &M\textrm{ even},\\ [5pt]
    Y,\; Z,\; W_{i_1,\dots,i_{N}}, \textrm{ $i_k\in\set{0,1}$, not all $i_k=0$,  $i_{2j}i_{2j+1}=0$, not all
      $\left[\begin{array}{c} \textrm{even}\\ \textrm{odd}\end{array}\right] i_k=1\;,$}
        \qquad &M\textrm{ odd}.
  \end{array}
  \right.
\end{equation*}
When $M$ is even, $(\Pi(\cA),\cdot)\simeq\bbC[X_i,Y_j,W_{i_1,\dots,i_{N}}]/\cJ$, where
\begin{align*}
  \cJ:=\langle
  &Y_j^2-X_{2j}\cdot X_{2j+1},Y_j\cdot W_{\dots,i_{2j-1},1,0,i_{2j+2},\dots}-X_{2j}\cdot W_{\dots,i_{2j-1},0,1,i_{2j+2},\dots},\\
  &\quad Y_j\cdot W_{\dots,i_{2j-1},0,1,i_{2j+2},\dots}-X_{2j+1}\cdot W_{\dots,i_{2j-1},1,0,i_{2j+2},\dots}
  ,W_{i_1,\dots,i_{N}}\cdot W_{j_1,\dots,j_{N}}\rangle\;,
\end{align*}
and similarly when $M$ is odd.  Setting $q(\h)=1+\h$ we can now determine the Poisson brackets between the above
generators. Since $Y$ and $Z$ are constructed from central elements of $\cA_q$, they are Casimirs of
$(\Pi(\cA),\PB)$ (Proposition \ref{prp:Cas_from_center}), so they have zero brackets with all elements
of~$\Pi(\cA)$.  We therefore do not add these generators to the table of Poisson brackets and can provide a table
which is valid for both $M$ even and $M$ odd; in the table, we use again the abbreviations $I=(i_1,\dots,i_{N})$
and $J=(j_1,\dots,j_N)$, and $1\leqslant k,\ell\leqslant N$ while $1\leqslant i,j\leqslant M$.

{\small \begin{table}[h]
  \def\arraystretch{1.8}
  \setlength\tabcolsep{0.4cm}
  \centering
\begin{tabular}{c|ccc}
   $\PB$&$X_\ell$&$Y_j$&$W_J$\\
  \hline
  $X_k$&$4\(\delta_{k,\ell+1}-\delta_{k,\ell-1}\)X_k\cdot X_\ell$&
  $2\(\delta_{j,\left\lbrack\frac{k-1}2\right\rbrack}-\delta_{j,\left\lbrack\frac{k+1}2\right\rbrack}\)X_k\cdot Y_j$
    &$2\({j_{k-1}}-{j_{k+1}}\)X_k\cdot W_J$\\
  $Y_i$&$2\(\delta_{i,\left\lbrack\frac{\ell+1}2\right\rbrack}-\delta_{i,\left\lbrack\frac{\ell-1}2\right\rbrack}\)X_i\cdot Y_\ell$
      &$\(\delta_{i,j+1}-\delta_{i,j-1}\)Y_i\cdot Y_j$&$\epsilon(J,i)Y_i\cdot W_J$\\
  $W_I$&$2\(i_{\ell+1}-i_{\ell-1}\)X_\ell\cdot W_I$&$-\epsilon(I,j)Y_j W_I$& $\((-1)^{\eta(I,J)}-(-1)^{\eta(J,I)}\)W_{I+J}$
\end{tabular}
\bigskip
\caption{}\label{tab:vol_6}
\end{table}}

\noindent In this table, $\epsilon(I,\ell):={i_{2\ell-1}}+i_{2\ell}-{i_{2\ell+1}}-i_{2\ell+2}$, and the exponent
$\eta(I,J)$, which can be reduced modulo $2$, is defined by the equality
\begin{equation*}
  (x_1^{i_1}\dots x_{N}^{i_{N}})(x_1^{j_1}\dots x_{N}^{j_{N}})=(-1)^{\eta(I,J)}\;x_1^{i_1+j_1}\dots
  x_{N}^{i_{N}+j_{N}}\;,
\end{equation*}
where the product in the left-hand side is the product in $\cA$. An explicit formula is computed from it using the
commutation relations \eqref{eq:comm_A} and is given by
$\eta(I,J)=\sum_{s=1}^{M}(i_{2s}+i_{2s+1})\sum_{t=1}^{2s-1}j_t$, where we have set $i_{N+1}:=0$. Notice that again
$W_{i_1+j_1,\dots,i_{N}+j_{N}}$ may not belong to the chosen generators of $\Pi(\cA)$ (for example when one of the
indices becomes $2$), but is then easily written as a product of such generators.

For future use, we also give the $\Pi(\cA)$-Poisson module structure of $\cA$ in terms of the generators
$X_1,\dots,X_N,Y_1,\dots,Y_M$ of $\Pi(\cA)$ (to which the Casimirs $Y$ and $Z$, which belong to the annihilator of
$\cA$ (see Proposition~\ref{prp:Cas_from_center}), need to be added when $M$ is odd) and the generators
$x_1,\dots,x_N$ of $\cA$. Let $1\leqslant k,\ell\leqslant N$, $1\leqslant i\leqslant M$ and
$I=(i_1,\dots,i_N)$. Then $X_k\cdot x_\ell=x_k^2x_\ell$, $Y_i\cdot x_\ell=x_{2i}x_{2i+1}x_\ell$ and $W_I\cdot
x_\ell=0$, while
\begin{align}
  \pbm{X_k}{x_\ell}&=\pb{x_k^2,x_\ell}_1=2(\delta_{k,\ell+1}-\delta_{k,\ell-1})x_k^2x_\ell\;,\label{eq:mod_volt_3}\\
  \pbm{Y_i}{x_\ell}&=\pb{x_{2i}x_{2i+1},x_\ell}_1=\(\delta_{i,\left\lbrack\frac{\ell+1}2\right\rbrack}-
        \delta_{i,\left\lbrack\frac{\ell-1}2\right\rbrack}\)x_{2i}x_{2i+1}x_\ell\;,\label{eq:mod_volt_3b}
\end{align}
and
\begin{align}\label{eq:mod_volt_4}
  \pbm{W_I}{x_\ell}=\lb{x_1^{i_1}\dots x_N^{i_N},x_\ell}
  =\left\{
  \begin{array}{rl}
    \((-1)^{i_{\ell+2}+\cdots+i_N}-(-1)^{i_1+\cdots+i_{\ell-1}}\)x_1^{i_1}\dots x_\ell^{i_\ell+1}\dots x_N^{i_N}&\ell\hbox{ even},\\
    \((-1)^{i_{\ell+1}+\cdots+i_N}-(-1)^{i_1+\cdots+i_{\ell-2}}\)x_1^{i_1}\dots x_\ell^{i_\ell+1}\dots x_N^{i_N}&\ell\hbox{ odd}.
  \end{array}
  \right.
\end{align}
It is understood that when for example $\ell=N$ then the sums $i_{\ell+1}+\cdots+i_N$ and $i_{\ell+2}+\cdots+i_N$
in \eqref{eq:mod_volt_4} have no terms, hence are equal to 0.
\begin{remark}
  A similar remark as Remark \ref{rem:non_periodic} applies here: the formulas for the infinite case are the same,
  only the formula for $\eta(I,J)$ needs to be adapted, and the Casimirs from the periodic case become infinite
  products to be dealt with appropriately.
\end{remark}

\subsection{A quantisation of the Grassmann algebra}\label{par:Grassmann}
We now consider the deformation of the (complex two-dimensional) Grassmann algebra, given by the following
commutation relations:
\begin{equation*}
  \begin{array}{ll}
    \lb{x,p}=\h\;,&\psi^2=(\phi)^2=0\;,\\
    \psi\phi+\phi\psi=\h\;,\quad&\lb{p,\psi}=\lb{x,\psi}=\lb{p,\phi}=\lb{x,\phi}=0\;.
  \end{array}
\end{equation*}
We will only be interested in the evaluation for $\h=0$, so we can consider right away the algebra
\begin{equation*}
  \cA_\h=\frac{\bbC[[\h]]\langle p,x,\psi,\phi\rangle}{\cI_{\h}}\simeq\cA[[\h]]\;,\hbox{ where }
  \cA:=\frac{\bbC\langle p,x,\psi,\phi\rangle}{\cI_0}\;,
\end{equation*}
and where $\cI_\h$ stands for the closed ideal of $\bbC[[\h]]$, generated by $\lb{x,p}-\h,\ \psi^2,\ (\phi)^2,$
$\psi\phi+\phi\psi-\h,\ \lb{p,\psi},\ \lb{x,\psi}$, $\lb{p,\phi},\ \lb{x,\phi},$ and $\cI_0$ stands for its
evaluation at $\h=0$. As a monomial basis for $\cA_\h$ we take the monomials $p^ix^j\psi^k(\phi)^\ell$ with
$i,j\in\bbN$ and $k,\ell\in\set{0,1}$. In view of the latter restrictions on $k$ and $\ell$ it is not a PBW
basis. Since in $\cA$ the commutator of two elements is proportional to their product, its center is generated by
monomials, from which it is clear that $Z(\cA)$ is generated by $p,\; x$ and $\psi\phi$ and that $\cA/Z(\cA)$ is
generated as a $Z(\cA)$-module by $\cl \psi$ and $\cl{\phi}$. According to Proposition \ref{prop:generators}, it
follows that $\Pi(\cA)$ is generated by
\begin{equation*}
  P:=\(p,\cl0\)\;,\quad  X:=\(x,\cl0\)\;,\quad  W:=\(\psi\phi,\cl{0}\)\;,\quad \Psi:=\(0,\cl \psi\)\;,\quad
  \Phi:=\(0,\cl{\phi}\)\;.
\end{equation*}
To determine the algebra structure of $\Pi(\cA)$ we consider the surjective morphism
$\bbC[P,X,W,\Psi,\Phi]\to\Pi(\cA)$, defined by these generators. Let $\cJ:=\langle
\Psi^2,(\Phi)^2,W^2,\Psi\cdot\Phi,\Psi\cdot W,\Phi\cdot W\rangle$. It is clear that~$\cJ$ is contained in the
kernel of this morphism; to show that it coincides with the kernel, it suffices to notice that
\begin{gather}
  P^i\cdot X^j=(p^ix^j,\cl0)\;,\qquad  P^i\cdot X^j\cdot W=(p^ix^j\psi\phi,\cl0)\;, \label{eq:gras_1} \\
  P^i\cdot X^j\cdot \Psi=(0,\cl{p^ix^j\psi})\;,\qquad  P^i\cdot X^j\cdot \Phi=(0,\cl{p^ix^j\phi})\;,
\end{gather}
which shows that the family $\{P^i\cdot X^j,\, P^i\cdot X^j\cdot W,\, P^i\cdot X^j\cdot \Psi,$ $P^i\cdot
X^j\cdot \Phi\mid i,j\in\bbN\}$ is linearly independent. To compute the two products in \eqref{eq:gras_1}, we have
used that $(p^i,x^j)_1=0$ and $(p^ix^j,\psi\phi)_1=0$, for $i,j\in\bbN$. It follows that
$\(\Pi(\cA),\cdot\)\simeq\bbC[P,X,W,\Psi,\Phi]/\cJ$.
%
%
In order to compute the Poisson brackets between the generators, we use the following non-trivial brackets:
$\pb{\phi,\psi}_1=1$, $\pb{x,p}_1=1$, $\pb{\psi\phi,\psi}_1=\psi$ and $\pb{\psi\phi,\phi}_1=-\phi$. From these formulas and \eqref{eq:bra_1} --
\eqref{eq:bra_3} the Poisson brackets between the generators of~$\Pi(\cA)$ are easily computed. They are given in
Table \ref{tab:gras_1}.
\begin{table}[h]
  \def\arraystretch{1.8}
  \setlength\tabcolsep{0.4cm}
  \centering
\begin{tabular}{c|ccccc}
   $\PB$&$P$&$X$&$W$&$\Psi$&$\Phi$\\
  \hline
  $P$&$0$&${-1}$&$0$&$0$&$0$\\
  $X$&${1}$&$0$&$0$&$0$&$0$\\
  $W$&$0$&$0$&$0$&$\Psi$&$-\Phi$\\
  $\Psi$&$0$&$0$&$-\Psi$&$0$&$0$\\
  $\Phi$&$0$&$0$&$\Phi$&$0$&$0$\\
\end{tabular}
\smallskip
\caption{}\label{tab:gras_1}
\end{table}

\noindent
For example
\begin{equation*}
  \pb{P,X}=\pb{(p,\cl0),(x,\cl0)}=(\pb{p,x}_1,\cl0)=({-1},\cl0)={-1}\;,
\end{equation*}
and
\begin{equation*}
  \pb{W,\Psi}=\pb{({\psi\phi},0),(0,\cl \psi)}=\(0,\cl{\pb{\psi\phi,\psi}_1}\)=\(0,\cl \psi\)=\Psi\;.
\end{equation*}
It follows that the reduced Poisson algebra $Z(\cA)$ is isomorphic to the polynomial algebra $\bbC[P,X,W]$ with $W$
as Casimir and $\pb{X,P}=1$. The Poisson module structure on $\cA$ is computed similarly and is given in Table
\ref{tab:gras_2}.

\begin{table}[h]
  \def\arraystretch{1.8}
  \setlength\tabcolsep{0.4cm}
\begin{tabular}{c|ccccc}
   $\cdot$&$p$&$x$&$\psi$&$\phi$\\
  \hline
  $P$&$p^2$&$px$&$p\psi$&$p\phi$\\
  $X$&$px$&$x^2$&$x\psi$&$x\phi$\\
  $W$&$p\psi\phi$&$x\psi\phi$&$0$&$0$\\
  $\Psi$&$0$&$0$&$0$&$0$\\
  $\Phi$&$0$&$0$&$0$&$0$\\
\end{tabular}
\qquad
\begin{tabular}{c|ccccc}
   $\PBm$&$p$&$x$&$\psi$&$\phi$\\
  \hline
  $P$&$0$&${-1}$&$0$&$0$\\
  $X$&${1}$&$0$&$0$&$0$\\
  $W$&$0$&$0$&$\psi$&$-\phi$\\
  $\Psi$&$0$&$0$&$0$&$2\psi\phi$\\
  $\Phi$&$0$&$0$&$2\psi\phi$&$0$\\
\end{tabular}
\medskip
\caption{}\label{tab:gras_2}
\end{table}
%
%
\subsection{The algebra $M_q(2)$}\label{par:M_q2}
The entries of the $2\times2$-matrices $\begin{pmatrix}a&b\\c&d\end{pmatrix}$ which preserve the quantum plane
$\bbC_q[x,y]$ satisfy the following relations (see \cite[Ch.\ 4]{kassel}):
\begin{equation*}\setlength\arraycolsep{20pt}
   \begin{array}{ccc}
       ba=qab\;, &db=qbd\;,&bc=cb\;,\\
       dc=qcd\;, &ca=qac\;,&\ \ \ \ \ ad-da=(q^{-1}-q)bc\;.
   \end{array}
\end{equation*}
We are therefore led to consider the following quotient algebra:
\begin{equation*}
  \cA_q=\frac{\bbC(q)\langle a,b,c,d\rangle}{\cI_q}\;,
\end{equation*}
where $\cI_q$ is the ideal of $\bbC(q)$ generated by the following polynomials:
\begin{equation}\label{eq:ideal_M2}
  ba-qab\;,db-qbd\;, bc-cb\;,dc-qcd\;,ca-qac\;,ad-da-(q^{-1}-q)bc\;.
\end{equation}
It is well-known (and easy to check) that $ad-q^{-1}bc=da-qbc$ is a central element of $\cA_q$.  As a
monomial basis for $\cA_q$ we choose the monomials $a^ib^jc^kd^\ell$, with $i,j,k$ and $\ell$ in $\bbN$.  It is a
PBW basis and $q_0=0$ is the only non-regular value of $\cB$.

We will first consider the evaluation of $\cA_q$ to $q=1$, which is commutative. We do it to show how the classical
case of deformations of commutative algebras is treated as a special case of our methods. We set $q(\h)=1-\h$ and
consider
\begin{equation*}
  \frac{\bbC[[\h]]\langle a,b,c,d\rangle}{\cI_{q(\h)}}\simeq\cA[[\h]]\;,\hbox{ where }
  \cA=\frac{\bbC\langle a,b,c,d\rangle}{\cI_1}\simeq\bbC[a,b,c,d]\;.
\end{equation*}
In this case, $Z(\cA)=\cA$ and we may identify $(\Pi(\cA),\cdot)$ with the polynomial algebra $\cA$. Under this
identification the Poisson algebra $(\Pi(\cA),\LB_\h)$ and the reduced Poisson algebra $(\cA,\PB_1)$ coincide and
only the brackets $\PB_1$ between the elements $a,b,c,d$ need to be computed. These brackets are given in Table
\ref{tab:M2_1}.

\begin{table}[h]
  \def\arraystretch{1.8}
  \setlength\tabcolsep{0.5cm}
  \centering
\begin{tabular}{c|ccccc}
   $\PB=\PB_1=\PBm$&$a$&$b$&$c$&$d$\\
  \hline
  $a$&$0$&$-ab$&$-ac$&$-2bc$\\
  $b$&$ab$&$0$&$0$&$-bd$\\
  $c$&$ac$&$0$&0&$-cd$\\
  $d$&$2bc$&$bd$&$cd$&0\\
\end{tabular}
\bigskip
\caption{}\label{tab:M2_1}
\end{table}

For example, $\pb{a,d}=\pb{a,d}_1=-2bc$ since
\begin{equation*}
  \lb{a,d}=ad-da=\(q^{-1}-q\)bc=((1-\h)-(1+\h))bc+\cO\(\h^2\)=-2\h bc+\cO\(\h^2\)\;.
\end{equation*}
This Poisson structure has rank 2. Since $ad-q^{-1}bc$ is a central element of $M_q(2)$, $ad-bc$ is a
Casimir. Since $b/c$ as also a (rational) Casimir, the Poisson structure can be described as a Nambu-Poisson
structure (see \cite[Ch.\ 8.3]{PLV}).

We now consider the case $q=-1$. We set $q(\h)=\h-1$ and consider
\begin{equation*}
  \frac{\bbC[[\h]]\langle a,b,c,d\rangle}{\cI_{q(\h)}}\simeq\cA[[\h]]\;,\hbox{ where }
  \cA=\frac{\bbC\langle a,b,c,d\rangle}{\cI_{-1}}\;.
\end{equation*}
Notice that  $\cA$ is not commutative since the following relations hold in $\cA$:
\begin{equation}\setlength\arraycolsep{20pt}\label{eq:com_mq}
   \begin{array}{ccc}
       ba=-ab\;, &db=-bd\;,&bc=cb\;,\\
       dc=-cd\;, &ca=-ac\;,&ad=da\;.
   \end{array}
\end{equation}
It is clear from these relations that $Z(\cA)$ contains the following $6$ elements:
\begin{equation}\label{eq:gen_center}
  a^2,\;  b^2,\;  c^2,\;  d^2,\;  ad,\;  bc\;.
\end{equation}
We show that these elements generate $Z(\cA)$. To do this, first notice that $Z(\cA)$ is generated by monomials,
since the commutator of any two monomials is a multiple of their product, as follows from \eqref{eq:com_mq}. Given
a monomial in $Z(\cA)$ we may strip off any even power of $a$, $b$, $c$ and $d$, which leaves us with a monomial of
degree 1 at most in each of these variables. Among the 15 possible monomials of this type, it is easily checked
that only $ad$, $bc$ and their product $abcd$ belong to the center. This shows that the six elements
\eqref{eq:gen_center} generate~$Z(\cA)$. It is then clear that $\cA/Z(\cA)$ is generated as a $Z(\cA)$-module by
the following $8$ elements: $\cl a,\dots,\cl d,\cl{ab},\cl{ac},\cl{bd},\cl{cd}$.  According to Proposition
\ref{prop:generators}, it follows that the following elements generate $(\Pi(\cA),\cdot)$:
\begin{gather}
  U_1=\(a^2,\cl0\)\;,U_2=\(b^2,\cl0\)\;,U_3=\(c^2,\cl0\)\;,U_4=\(d^2,\cl0\)\;,U_5=\(ad,\cl0\)\;,U_6=\(bc,\cl0\)\;,
  \nonumber\\
  V_1=\(0,\cl a\)\;,\  V_2=\(0,\cl b\)\;,\  V_3=\(0,\cl c\)\;,\  V_4=\(0,\cl d\)\;,\label{eq:geners}\\
  W_1=\(0,\cl{ab}\)\;,\  W_2=\(0,\cl{ac}\)\;,\  W_3=\(0,\cl{bd}\)\;,\  W_4=\(0,\cl{cd}\)\;.\nonumber
\end{gather}

We now have all elements needed to determine the algebra structure of $(\Pi(\cA),\cdot)$, to compute the Poisson
brackets of the 14 generators $U_1,\dots,W_4$ of $\Pi(\cA)$ and to compute the Lie action of these generators on
the four generators $a,b,c,d$ of $\cA$. This can be done as in the previous examples by setting $q(\h)=\h-1$ and
computing the first terms of the commutators $\LB_\star$ and $\LB$ and using \eqref{eq:bra_1} -- \eqref{eq:bra_3}
and \eqref{eq:mod_simp}. We will only describe here the reduced Poisson algebra $Z(\cA)$. To do this, let us write
$U_i=(u_i,\cl0)$ for $i=1,\dots,6$, so that $Z(\cA)$ is generated by $u_1,\dots,u_6$: as an associative algebra, it
is clear that $Z(\cA)\simeq\bbC[u_1,\dots,u_6]/\langle u_5^2-u_1u_4,u_6^2-u_2u_3\rangle$. The reduced brackets,
which are the brackets $\PB_1$ of $(Z(\cA),\PB_1)$, are given by Table \ref{tab:M2}.
\begin{table}[h]
  \def\arraystretch{1.8}
  \setlength\tabcolsep{0.2cm}
  \centering
\begin{tabular}{c|cccccc}
   $\PB_1$&$u_1$&$u_2$&$u_3$&$u_4$&$u_5$&$u_6$\\
  \hline
  $u_1$&$0$&$4u_1u_2$&$4u_1u_3$&$-8u_5u_6$&$-4u_1u_6$&$4u_1u_6$\\
  $u_2$&$-4u_1u_2$&$0$&$0$&$4u_2u_4$&$0$&$0$\\
  $u_3$&$-4u_1u_3$&$0$&0&$4u_3u_4$&$0$&$0$\\
  $u_4$&$8u_5u_6$&$-4u_2u_4$&$-4u_3u_4$&0&$4u_4u_6$&$-4u_4u_6$\\
  $u_5$&$4u_1u_6$&$0$&$0$&$-4u_4u_6$&$0$&$0$\\
  $u_6$&$-4u_1u_6$&$0$&$0$&$4u_4u_6$&$0$&$0$\\
\end{tabular}
\bigskip
\caption{}\label{tab:M2}
\end{table}

For example, $\pb{u_1,u_5}_1=-4u_1u_6$ since $\pb{a^2,ad}_1=-4a^2bc$, as follows from
\begin{align*}
  \lb{a^2,ad}&=a^3d-ada^2=a^3d-a(ad+2\h bc)a+\cO\(\h^2\)\\
  &=a^3d-a^2(ad+2\h bc)-2\h a^2bc+\cO\(\h^2\)=-4\h a^2bc+\cO\(\h^2\)\;.
\end{align*}

Since $ad-q^{-1}bc$ is a central element of $\cA_q$, $(ad+bc,\cl{bc/2})=(ad+bc,\cl0)=U_5+U_6$ is a Casimir
of~$\Pi(\cA)$. This is reflected in Table \ref{tab:M2} and it reduces the number of Poisson brackets to be
computed, since $\Pb{U_5}=-\Pb{U_6}$, so that $\Pb{u_5}_1=-\Pb{u_6}_1$.

On $\bbC^6$, with coordinates $u_1,\dots,u_6$, Table \ref{tab:M2} gives the Poisson matrix of a Poisson structure
of rank $2$, with Casimirs $C_1=u_5^2-u_1u_4$, $C_2=u_6^2-u_2u_3$, $C_3=u_2/u_3$ and $C_4=u_5+u_6$. The Poisson
structure on $Z(\cA)$ can therefore also be described as the Nambu-Poisson structure on $\bbC^6$ with respect to
these Casimirs, restricted to the subvariety defined by $C_1=C_2=0$.

\subsection{Nonequivalent deformations}

As it turns out, we have encountered in the above examples three isomorphic algebras $\cA$ and three deformations
of it. Indeed, taking $N=4$ and $q_0=-1$ in \eqref{eq:quantum_volterra}, $\cI_{q_0}=\langle
x_{i+1}x_i+x_ix_{i+1},x_ix_j-x_jx_i\rangle_{d_N(i,j)\neq1},$ so that $\cA:=\bbC\langle
x_1,\dots,x_4\rangle/\cI_{q_0}$ is defined by
\begin{equation*}\setlength\arraycolsep{20pt}
   \begin{array}{lll}
       x_2x_1=-x_1x_2\;, &x_3x_2=-x_2x_3\;,&x_4x_2=x_2x_4\;,\\
       x_4x_3=-x_3x_4\;, &x_4x_1=-x_1x_4\;,&x_3x_1=x_1x_3\;.
   \end{array}
\end{equation*}
These relations are the same relations as \eqref{eq:com_mq} under the correspondence $a\leftrightarrow x_1$,
$b\leftrightarrow x_2$, $c\leftrightarrow x_4$ and $d\leftrightarrow x_3$. We will use the reduced Poisson algebra
to show that the two corresponding deformations are not equivalent, as an application of Corollary
\ref{cor:equivalent}. See Remark \ref{rem:third_deformation} below for a third deformation of $\cA$. In order to
compare the two Poisson algebras it will we usefull to use the same notation for the generators of $Z(\cA)$, so we
will use $U_1,\dots,U_6$, as in \eqref{eq:geners}, which means that
\begin{equation*}
\begin{array}{lll}
  U_1=X_1=(x_1^2,\cl0)\;,  &U_2=X_2=(x_2^2,\cl0)\;,  &U_3=X_4=(x_4^2,\cl0)\;,\\
  U_4=X_3=(x_3^2,\cl0)\;,  &U_5=Y=(x_1x_3,\cl0)\;,  &U_6=Z=(x_2x_4,\cl0)\;.
\end{array}
\end{equation*}
We will also again write $U_i=(u_i,\cl0)$ for $i=1,\dots,6$, so that $u_1,\dots,u_N$ generates $Z(\cA)$.  As we
have seen in Section \ref{par:volterra_1}, $Y$ and $Z$ are Casimirs of $\Pi(\cA)$ and the non-zero brackets
between the $X_i$ are given by $\pb{X_k,X_{k\pm1}}=\pm 2X_kX_{k\pm1}$. In terms of $u_1,\dots,u_4$ it leads to the
Poisson brackets, given in Table \ref{tab:noneq}; we did not add the brackets with $u_5$ and $u_6$ to the table
because they are Casimirs.
\begin{table}[h]
  \def\arraystretch{1.8}
  \setlength\tabcolsep{0.2cm}
  \centering
\begin{tabular}{c|cccccc}
   $\PB'=\PB_1'$&$u_1$&$u_2$&$u_3$&$u_4$\\
  \hline
  $u_1$&$0$&$4u_1u_2$&$-4u_1u_3$&$0$\\
  $u_2$&$-4u_1u_2$&$0$&$0$&$4u_2u_4$\\
  $u_3$&$4u_1u_3$&$0$&0&$-4u_3u_4$\\
  $u_4$&$0$&$-4u_2u_4$&$4u_3u_4$&$0$\\
\end{tabular}
\bigskip
\caption{}\label{tab:noneq}
\end{table}

\noindent
We denote this Poisson structure on $Z(\cA)$ by $\PB_1'$.  In order to compare the Poisson structures $\PB_1$ and
$\PB_1'$ on $Z(\cA)$ we look at their singular locus, which is by definition the locus where the rank of the
Poisson structure drops. In our examples, the rank is two so the singular locus consists of the points where the
rank is zero, which amounts to considering in both cases the ideal generated by the entries of the table. We are
therefore led to consider the following two Poisson ideals of $Z(\cA)$:
\begin{align*}
  \cJ&:=\langle u_1u_2,u_1u_3,u_2u_4,u_3u_4,u_1u_6,u_4u_6,u_5u_6\rangle\;,\\
  \cJ'&:=\langle u_1u_2,u_1u_3,u_2u_4,u_3u_4\rangle\;.
\end{align*}
It is clear that $\cJ'$ is strictly contained in $\cJ$. Moreover, both ideals have the same radical, since
\begin{align*}
  (u_1u_6)^2&=u_1^2u_2u_3=(u_1u_2)(u_1u_3)\in\cJ'\;,\\
  (u_4u_6)^2&=u_4^2u_2u_3=(u_2u_4)(u_3u_4)\in\cJ'\;,\\
  (u_5u_6)^2&=u_1u_2u_3u_4\in\cJ'\;,
\end{align*}
where we have used several times that in $\cA$ the following relations hold: $u_2u_3=u_6^2$ and $u_1u_4=u_5^2$. It
follows that the two reduced Poisson algebras are not isomorphic. In view of Corollary \ref{cor:equivalent}, the
two deformations $(\cA[[\h]],\star)$ and $(\cA[[\h]],\star')$ of $\cA$ are non-equivalent.

\begin{remark}\label{rem:third_deformation}
Taking $n=4$ in Section \ref{eq:volterra_second} we get again the same algebra $\cA$, since we find the same
relations as \eqref{eq:com_mq} under the correspondence $a\leftrightarrow x_1$, $d\leftrightarrow x_2$,
$c\leftrightarrow x_3$ and $b\leftrightarrow x_4$. It can be shown using the same methods that the deformation of
$\cA$ is in this case again non-equivalent to the two other deformations of $\cA$ of which we have shown in this
section that they are non-equivalent.
\end{remark}

\section{Hamiltonian derivations}\label{sec:ex2}
Most of the quantum algebras that we have considered in the previous section appear in the literature as quantum
algebras for some non-trivial derivation, defining a nonabelian system on a free algebra. These derivations can in
general be written as Heisenberg derivations having non-trivial limits for certain values of the deformation
parameter $q$. We will show in this section that the corresponding limiting derivations are Hamiltonian derivations
with respect to the commutative Poisson algebra $(\Pi(\cA),\PB)$ that we have introduced in Section
\ref{sec:poisson}, and that they can be easily computed using the formulas for the actions of $\Pi(\cA)$ on the
Poisson module $\cA$ that we have computed for these examples in Section \ref{sec:ex1}. As in the previous section
we take $R=\bbC$ as our base ring.

\subsection{The limiting procedure}\label{par:limit}

In the paragraphs which follow we will apply the above results to several nonabelian chains. We outline the
procedure in separate steps.

\underline{Step 1} \quad Start from a nonabelian system (derivation) $\p:\fA\to\fA$ on a free associative algebra
$\fA=\bbC\langle{x_1,\;x_2,\;\dots\rangle}$. Many interesting such systems are known \cite{CMW22}. Often, they are
\emph{evolutionary}, which means that the derivation $\p$ is invariant under the shift $x_\ell\mapsto x_{\ell+k}$,
for a fixed $k$. In our case there will be a finite number of variables $x_1,\dots,x_N$ since the index $\ell$ of
$x_\ell$ is taken modulo $N$, so it still makes sense for $\p$ to be evolutionary.

\underline{Step 2} \quad Choose a quantisation ideal $\cI_q$ of $(\fA(q),\p)$, depending on a single parameter $q$,
where we recall that $\fA(q)=\bbC(q)\langle{x_1,\;x_2,\;\dots\rangle}$. Many such ideals are known
\cite{CMW22}. Denote by $\cB$ a basis of normally ordered monomials of the quantum $\bbC(q)$-algebra
$\fA(q)/\cI_q$. On this algebra, $\p$ induces a derivation which may be evolutionary or not, depending on whether
or not $\cI_q$ is invariant under the shift $x_\ell\mapsto x_{\ell+k}$ for fixed $k$.

\underline{Step 3} \quad Write the equation for $\p$ on $\cA_q=\fA(q)/\cI_q$ in the Heisenberg form
\begin{equation}\label{eq:comm_form}
  \p a=\frac1{\lambda(q)}\lb{\fH(q),a}\;,
\end{equation}
where $a\in\cA_\q$.  This is a non-trivial task but again many examples have been written in this form. In this
formula, $\fH(q)\in\cA_q$; it may be assumed that $\fH(q)$ and $\lambda(q)$ are polynomials in $q$ and have no
common non-constant factor.

\underline{Step 4} \quad We can specialize $q$ to any regular value $q_0$, but as we will see the most interesting
choice for $q_0$ is to choose a simple root of $\lambda(q)$. As before, we write $q(\h)=q_0+\h$. We recall that
$\cI_\h$ stands for the closed ideal of $\fA[[\h]]$ that corresponds with $\cI_{q(\h)}$. As we have shown in
Proposition \ref{prp:iso_to_formal}, $\cA_\h:=\fA[[\h]]/\cI_\h\simeq\cA[[\h]]$, where $\cA$ is the evaluation of
$\cA_q$ at $q_0$. Since $q_0$ is a simple root of $\lambda(q)$,
$\frac{\fH(q_0+\h)}{\lambda(q_0+\h)}\in\frac1\h\cA[[\h]]$ and \eqref{eq:comm_form} takes in terms of $\h$ the
Heisenberg form
\begin{equation}\label{eq:heisenberg}
  \delta_{H}a=\frac1\h\lb{H,a}_\star=\frac1\h\lb{H_0+\h H_1+\h^2H_2+\cdots,a}_\star\;,
\end{equation}
where $a,H\in\cA[[\h]]$, with ${H}=H_0+\h H_1+\h^2H_2+\cdots$, i.e., all $H_i$ are elements of $\cA$ which belong
to the $\bbC$-span of $\cB$. If $\p$ is evolutionary (on $\cA_\q$), then so is $\delta_H$ (on $\cA[[\h]]$).
Since the left-hand side of \eqref{eq:heisenberg} is a formal power series in $\h$, it follows that~$H_0$ commutes
with any element $a$ of $\cA\subset\cA[[\h]]$, hence $H_0\in Z(\cA)$ and $\bH:=\(H_0,\cl{H_1}\)\in\Pi(\cA)$.
Notice that, in order to determine $\bH$ it suffices to compute in \eqref{eq:heisenberg} the leading terms $H_0$
and $H_1$, the latter up to the center $Z(\cA)$ of $\cA$. We will therefore compute and write
\begin{equation}\label{eq:heisenberg_pre}
  \frac{\fH(q(\h))}{\lambda(q(\h))}=\frac1\h (H_0+\h H_1)\pmod{\cH_\h}\;,
\end{equation}
which suffices to determine $\bH$.

\underline{Step 5} \quad As we have shown in Section \ref{par:heisenberg}, the limit $\h\to0$ of the Heisenberg
derivation \eqref{eq:heisenberg} is the Hamiltonian derivation $\p_\bH$ on $\cA$, which can be computed directly
from $\p_\bH a=\pbm{\bH}a$, where we recall that $\PBm$ denotes the Lie action of $\Pi(\cA)$ on $\cA$. Let
$U_1,\dots,U_M$ denote a system of algebra generators of $\Pi(\cA)$ (recall that, as an algebra, $\Pi(\cA)$ is
commutative). We write $\bH$ in terms of the generators of $\Pi(\cA)$, $\bH=\bH(U_1,\dots,U_M)$. Since $\PBm$ is a
derivation in its first argument and since the left and right actions of $\Pi(\cA)$ on $\cA$ coincide (see
\eqref{eq:final_module}), $\p_\bH x_\ell$ can be computed from
\begin{equation}\label{eq:leibniz_for}
  \p_{\bH}x_\ell=\pbm\bH {x_\ell}=\sum_{i=1}^M\frac{\p\bH}{\p U_i}\cdot\pbm{U_i}{x_\ell}\;,
\end{equation}
where we recall that $\cdot$ denotes the left (= right) action of $\Pi(\cA)$ on $\cA$; notice that
\eqref{eq:final_module}) says in particular that the action $\cdot$ of $\{0\}\times\cA/Z(\cA)\subset\Pi(\cA)$ on
$\cA$ is trivial which permits to largely simplify the use of \eqref{eq:leibniz_for} in explicit computations. The
brackets $\pbm{U_i}{x_\ell}$ between the generators of $\Pi(\cA)$ and of $\cA$ have been computed for several
examples in Section \ref{sec:ex1}. Notice that the computations are done for the variables $x_\ell$ (and their
projections on the different quotient algebras), rather than for arbitrary elements $a$ of
$\fA=\bbC\langle{x_1,\;x_2,\;\dots\rangle}$ or of $\fA(q)=\bbC(q)\langle{x_1,\;x_2,\;\dots\rangle}$.  On the one
hand, it leads to simpler explicit formulas that are easier to compute and to present, while on the other hand
these formulas completely determine the derivation $\p_{\bH}$ of all of~$\cA$ (hence on $\cA[[\h]]$), because
$\p_{\bH}$ is a derivation of $\cA$ (Proposition \ref{prp:derivation}). Moreover, when $\delta_\bH$ is
evolutionary, say invariant for the shift $x_\ell\mapsto x_{\ell+k}$ then so is $\p_\bH$ and it suffices to compute
$\delta_\bH x_\ell$ for $k-1$ successive values of $\ell$ to know it for all $\ell$; as we will see this also
simplifies some of the computations.

\begin{remark}\label{hams}
  Suppose there exists an element $\hat{\fH}(q)\in\cA_q$ that commutes with $\fH(q)$ and is not in $Z(\cA(q))$. The
  corresponding derivation $\hat{\p} a = \frac1{\lambda(q)}\lb{\hat{\fH}(q),a}$ commutes with $\p$ and represents a
  symmetry for the quantum system (\ref{eq:comm_form}). Furthermore, the leading term in the expansion $
  \frac{\hat{\fH}(q(\h))}{\lambda(q(\h))}=\frac1\h (\hat{H}_0+\h \hat{H}_1)$ becomes a first integral of the
  Hamiltonian equation (\ref{eq:leibniz_for}). If $\hat{H}_0\ne 0$, then $\p_{\bH}(\hat{H}_0)=0$ (see Proposition
  \ref{prp:invol}). In the case where $\hat{H}_0= 0$, it follows that $ \p_{\bH}(\hat{H}_1)=0$. If $\hat{\fH}(q)\in
  Z(\cA_q)$, then $\hat{\bH}:=\(\hat{H}_0,\cl{\hat{H}_1}\)$ is a Casimir of the Poisson structure (see Proposition
  \ref{prp:Cas_from_center}), and therefore $\hat{H}_0$ is a first integral of (\ref{eq:leibniz_for}).
\end{remark}

\subsection{$N$-periodic nonabelian Volterra hierarchy}\label{par:volterra}

The \emph{nonabelian Volterra chain} \cite{CMW22} is the derivation of $\fA=\langle{x_\ell\;;\ell\in\bbZ\rangle}$,
given by
\begin{equation}\label{eq:volterra_chain}
  \p_1 x_\ell=x_{\ell}x_{\ell+1}-x_{\ell-1}x_{\ell}\;.
\end{equation}
It has an infinite family of commuting derivations $\p_2,\p_3,\dots$, forming the so-called \emph{nonabelian
Volterra hierarchy}. We consider here the $N$-periodic case, that is $N\geqslant3$ and $x_{N+\ell}=x_\ell$ for
all~$\ell$. It was shown in \cite{CMW22} that all members of this hierarchy admit a common quantisation ideal,
namely the ideal $\cI_q$ of $\fA(q)$ generated by all
\begin{equation*}
  x_{i+1}x_i-qx_ix_{i+1}\;,\ x_i x_j-x_j x_i\;, \quad(d_N(i,j)\neq1)\;.
\end{equation*}
It is the quantisation ideal which we studied in Section \ref{par:volterra_1}, see in particular
\eqref{eq:quantum_volterra}.  Notice that $\cI_q$ is invariant under the shift $x_\ell\mapsto x_{\ell+1}$. We
determine some non-trivial limits of the (evolutionary) derivations $\p_n$ of the quantum algebra
$\cA_q=\fA(q)/\cI_q$. Recall that we use the monomials $x_1^{i_1}\dots x_N^{i_N}$ with $i_1,\dots,i_N\in\bbN$ as a
monomial basis for $\cA_q$.

To do this we use the results from \cite{CMW24}, where the full hierarchy on $\cA_q$ is written in the Heisenberg
form
\begin{equation}\label{eq:volterra_comm_form}
  \p_n x_\ell=\frac1{q^n-1}\lb{\fH^{(n)},x_\ell}\;,\qquad \ell=1,\dots,N\;,
\end{equation}
and where the Hamiltonians $\fH^{(n)}$ are given explicitly for any $n$. Here we will use the first three
Hamiltonians
\begin{align*}
  \fH^{(1)}&=\sum_{k=1}^N x_k\;,\\
  \fH^{(2)}&=\sum_{k=1}^N x_k^2+(1+q)\sum_{k=1}^N x_kx_{k+1}\;,\\
  \fH^{(3)}&=\sum_{k=1}^N x_k^3+(1+q+q^2)\sum_{k=1}^N\(x_kx_{k+1}x_{k+2}+x_kx_{k+1}^2+x_k^2x_{k+1}\)\;.
\end{align*}
As we already pointed out in Remark \ref{rem:n=1}, when $q\to1$, the algebra $\cA$ is commutative and the Poisson
brackets are given by $ \pb{x_i,x_j}=(\delta_{i,j+1}-\delta_{i,j-1})x_ix_j,$ for $1\leqslant i,j\leqslant N$. We
are then in the classical case, see Remark \ref{exa:class_poisson}.

We therefore start with the case of $n=2$ and $q=-1$, so we put $q(\h)=\xi+\h=-1+\h$. Then
%
\begin{equation*}
  \frac{\H2{}}{q(\h)^2-1}=-\frac1{2\h}\(\sum_{k=1}^N x_k^2+{\h}\sum_{k=1}^{N} {x_kx_{k+1}}\)\pmod{\cH_h}\;.
\end{equation*}
Setting $\bH^{(2)}=\(H^{(2)}_0,\cl{H^{(2)}_1}\)$, where $H^{(2)}_0=-\frac12\sum_{k=1}^N x_k^2$ and
$H^{(2)}_1=-\frac12\sum_{k=1}^{N}{x_kx_{k+1}}$ , we need to compute $\pbm{\bH^{(2)}}{x_\ell}$ for $\ell=1,\dots,N$,
which we will do for $N>4$. Since $\p_2$ is evolutionary (under $x_\ell\mapsto x_{\ell+1}$), it suffices to do the
computation for a particular $\ell$, so we may assume that $2<\ell<N-1$.  To do this we first write~$\bH^{(2)}$ in
terms of the generators $X_k$ and $W_I$ for $\Pi(\cA)$ that we have constructed in Section~\ref{par:volterra_1}. It
will be convenient to write $W_k$ as a shorthand for $W_{0,\dots,0,1,1,0,\dots,0}$ where the two $1$'s are at
positions~$k$ and $k+1$ (with the understanding that when $k=N$ then they are at positions $N$ and $1$). Then
$\bH^{(2)}=-\frac12\sum_{k=1}^N (X_k+W_k)+W_N$. Since $n=2$ and $\xi=-1$, the Lie action of $\Pi(\cA)$ on $\cA$,
given in \eqref{eq:mod_volt} and \eqref{eq:mod_volt_2}, specialises for $1<\ell<N$ to
\begin{align*}
  \pbm{X_k}{x_\ell}&=2(\delta_{k,\ell-1}-\delta_{k,\ell+1})x_k^2x_\ell\;, \\
  \pbm{W_k}{x_\ell}&=((-1)^{i_{\ell+1}}-(-1)^{i_{\ell-1}})x_1^{i_1}\dots x_\ell^{i_\ell+1}\dots x_N^{i_N}\;,
\end{align*}
where $i_k=i_{k+1}=1$ and all other $i_s$ are zero. Notice that these brackets are zero when $k$ and $\ell$ are far
enough apart, and also that $\pbm{X_{\ell}}{x_\ell}=0$. It follows that, if $2<\ell<N-1$ then
\begin{align}\nonumber
  \p_{\bH^{(2)}}x_\ell
    &=-(\pbm{X_{\ell-1}+X_{\ell+1}}{x_\ell}-\pbm{W_{\ell-2}+W_{\ell-1}+W_\ell+W_{\ell+1}}{x_\ell})/2\\ \nonumber
    &=(x_{\ell+1}^2-x_{\ell-1}^2)x_\ell-(x_{\ell-2}x_{\ell-1}x_\ell+x_{\ell-1}x_\ell^2-x_\ell^2x_{\ell+1}-x_\ell
    x_{\ell+1}x_{\ell+2})\\ \label{v2m1}
    &= x_\ell x_{\ell+1}(x_{\ell+2}+x_{\ell+1}-x_{\ell})-(x_{\ell-2}+x_{\ell-1}-x_{\ell})x_{\ell-1}x_\ell\;,
\end{align}
where we have used in the last step the commutation relations $x_{\ell+1}x_\ell=-x_{\ell}x_{\ell+1}$ to write the
result in a symmetric form. Since $\p_2$ is evolutionary, this formula is valid for all $\ell$.

When $n=1$, the denominator of \eqref{eq:volterra_comm_form} does not vanish at $q=-1$, and setting $q=\h-1$ leads
to the limiting derivation $\p_{\bH^{(1)}}$ on $\cA$, where $\bH^{(1)}=\(0,\cl{H_1^{(1)}}\)$, with
$H_1^{(1)}=\sum_{k=1}^Nx_k$. It is given by $\p_{\bH^{(1)}}x_\ell=x_\ell x_{\ell+1}-x_{\ell-1}x_\ell$ for
$\ell=1,\dots,N$. Since $\p_1$ and $\p_2$ commute, so do $\p_{\bH^{(1)}}$ and $\p_{\bH^{(2)}}$; this follows also
from the fact that $\pb{\bH^{(1)},\bH^{(2)}}=0$, see Proposition \ref{prp:invol}. The same remark applies to all odd derivations $\p_{2m+1}$ and their limiting derivations $\p_{\bH^{(2m+1)}}$ on $\cA$, where $m\in\bbN$. The Hamiltonian system (\ref{v2m1}) has first integrals $H_1^{(2k-1)},\ H_0^{(2k)},\ k=1,2,\ldots$ (see Remark \ref{hams}).

We now consider $n=3$, with $\xi$ a primitive cubic root of unity. As above we will only do this for $N>6$. We
still use the results of Section \ref{par:volterra_1} and put $q(\h)=\xi+\h$. Then
\begin{equation*}
\frac1{q(\h)^3-1}=\frac\xi{3\h}+\cO(1)\;,\quad\hbox{and}\quad\frac1{q(\h)-1}=\frac1{\xi-1}+\cO(\h)\;,
\end{equation*}
and
\begin{align*}
  \frac{\H3{}}{q(\h)^3-1}&=\frac\xi{3\h}\sum_{k=1}^Nx_k^3
  +\frac1{\xi-1}\sum_{k=1}^N(x_kx_{k+1}x_{k+2}+x_kx_{k+1}^2+x_k^2x_{k+1}) \pmod{\cH_\h}\;,\\
\end{align*}
so that $\bH^{(3)}=\(H_0^{(3)},H_1^{(3)}\)$, with
\begin{equation}\label{eq:V_n=3}
  H^{(3)}_0=\frac\xi3\sum_{k=1}^N x_k^3\;,\quad\hbox{and}\quad
  H^{(3)}_1=\frac1{\xi-1}\sum_{k=1}^N(x_kx_{k+1}x_{k+2}+x_kx_{k+1}^2+x_k^2x_{k+1})\;.
\end{equation}
It suffices again to compute $\pbm{\bH^{(3)}}{x_\ell}$ for a particular value of $\ell$ since $\p_3$ is
evolutionary (under $x_\ell\mapsto x_{\ell+1}$). In order to take care of the terms in $H_1^{(3)}$, let us write
$W_k$ as a shorthand for $W_{0,\dots,0,1,1,1,0,\dots,0}$, where the three 1's are at positions $k-1,k,k+1$, $W_k'$
as a shorthand for $W_{0,\dots,0,1,2,0,\dots,0}$ where the~1 is at position $k$ and $W_k''$ for
$W_{0,\dots,0,2,1,0,\dots,0}$ where the $2$ is at position $k$. In terms of this notation,~$\bH^{(3)}$ can be
written as
\begin{equation*}
  \bH^{(3)}=\frac\xi3\sum_{k=1}^N X_k+\frac1{\xi-1}\(\sum_{k=1}^{N-2}(W_{k+1}+W_k'+W_k'')+\xi^{-1}(W_1+W_N)+W'_{N-1}+W''_{N-1}+
  \xi(W_N'+W''_N)\)\;.
\end{equation*}
By our choice of $\ell$ we will manage that the 6 boundary terms which appear above play no role in the
computation.  In order to compute $\p_{\bH^{(3)}}x_\ell=\pbm{\bH^{(3)}}{x_\ell}$ we need the following brackets,
which are a specialization of \eqref{eq:mod_volt} and \eqref{eq:mod_volt_2} for $1<\ell<N$,
\begin{align}
  \pbm{X_k}{x_\ell}&=3(\delta_{k,\ell+1}-\delta_{k,\ell-1})\xi^{-1}x_k^3x_\ell\;, \label{eq:X_k}\\
  \pbm{W_k}{x_\ell}&=(\xi^{i_{\ell+1}}-\xi^{i_{\ell-1}})x_1^{i_1}\dots x_\ell^{i_\ell+1}\dots x_N^{i_N}\;,\label{eq:W_k}
\end{align}
where the latter formula is also valid for $W_k'$ and $W_k''$, each time upon using the proper values for the
indices $i_\ell$; for example, in the case of $W_k$, all indices are zero except $i_{k-1}=i_k=i_{k+1}=1$. In view
of \eqref{eq:X_k} and \eqref{eq:W_k}, $\pbm{X_\ell}{x_\ell}=\pbm{W_\ell}{x_\ell}=0.$ Also, the only non-zero
brackets $\pbm{W'_k}{x_\ell}$ and $\pbm{W''_k}{x_\ell}$ are for $k=\ell-2,\dots,\ell+1$, while the only non-zero
brackets $\pbm{W_k}{x_\ell}$ are for $k=\ell\pm1$ and $k=\ell\pm2$.  Let $3<\ell<N-2$, where we recall that
$N>6$. Then
\begin{align*}
  \p_{\bH^{(3)}}x_\ell&=\frac{\xi}3\pbm{X_{\ell-1}+X_{\ell+1}}{x_\ell}+
  \frac1{\xi-1}\pbm{W_{\ell-2}+W_{\ell-1}+W_{\ell+1}+W_{\ell+2}}{x_\ell}\\
  &\quad+\frac1{\xi-1}\pbm{W'_{\ell-2}+W'_{\ell-1}+W'_{\ell}+W'_{\ell+1}+W''_{\ell-2}+W''_{\ell-1}+W''_{\ell}+W''_{\ell+1}}
        {x_\ell}\\
  &=\(x_{\ell+1}^3-x_{\ell-1}^3\)x_\ell +(1+\xi)(x_\ell
    x_{\ell+1}^2x_{\ell+2}-x_{\ell-2}x_{\ell-1}^2x_\ell+x_\ell^2x_{\ell+1}^2-x_{\ell-1}^2x_\ell^2)\\
  &\quad +x_\ell x_{\ell+1}x_{\ell+2}x_{\ell+3}-x_{\ell-3}x_{\ell-2}x_{\ell-1}x_\ell+
    x_\ell^2x_{\ell+1}x_{\ell+2}-x_{\ell-2}x_{\ell-1}x_\ell^2\\
  &\quad+x_\ell x_{\ell+1}x_{\ell+2}^2-x_{\ell-2}^2x_{\ell-1}x_\ell+x_\ell^3x_{\ell+1}-x_{\ell-1}x_\ell^3\;.
\end{align*}
The formulas that we have computed for the limiting derivations are also valid for the infinite (non-periodic) case
are the same, with the same proof (see Remark \ref{rem:non_periodic}).

\subsection{$2M$-periodic even nonabelian Volterra hierarchy: another quantisation}\label{par:another}

We now consider another quantisation ideal of the even elements $\delta_{n}=\delta_{2m}$ of the $N$-periodic
nonabelian Volterra hierarchy, in case $N>2$ is even, $N=2M$ (see \cite{CMW22}). The ideal $\cI_q$ of $\fA(q)$ is
now generated by all
\begin{equation}\label{eq:sec_comm}
  x_{i+1}x_i-(-1)^iqx_ix_{i+1}\;, \ x_i x_j+x_j x_i\;, \quad(d_N(i,j)\neq1)\;.
\end{equation}
We have already considered this quantisation ideal, the corresponding quantum algebras $\cA_q=\fA(q)/\cI_q$ and its
evaluation at $q=1$ in Section \ref{eq:volterra_second}; we will use here the Poisson brackets from that section to
obtain the limiting derivation of $\p_2$ (see \eqref{eq:volterra_comm_form}) when $q\to1$. The derivation $\p_2$ is
given on $\cA_q$ in Heisenberg form by
\begin{equation*}
  \p_2 x_\ell=\frac1{q^2-1}\lb{\fH,x_\ell}=\frac1{q^2-1}\lb{\sum_{k=1}^N\(x_k^2+(1+(-1)^kq)x_{k} x_{k+1}\),x_\ell}\;.
\end{equation*}
We put, as in Section \ref{eq:volterra_second}, $q(\h)=1+\h$. Then
\begin{equation*}
  \frac1{q(\h)^2-1}=\frac1{2\h}+\cO(\h^0)\;,\hbox{ and }\  1+(-1)^kq(\h)=
    \begin{cases}
      2+\h&k\hbox{ even,}\\
      -\h&k\hbox{ odd.}
    \end{cases}
\end{equation*}
and
\begin{equation*}
  \frac{\fH}{q(\h)^2-1}=\frac1{2\h}\(\sum_{k=1}^Nx_k^2+\sum_{j=1}^Mx_{2j}x_{2j+1}-
  \frac\h2\sum_{j=1}^Mx_{2j-1}x_{2j}\)\pmod{\cH_\h}\;,
\end{equation*}
so that
\begin{equation*}
  H_0=\frac12\sum_{k=1}^Nx_k^2+\sum_{j=1}^Mx_{2j}x_{2j+1}\;\hbox{ and }
  H_1=-\frac12\sum_{j=1}^Mx_{2j-1}x_{2j}\;.
\end{equation*}
Let us write $W_k$ as a shorthand for $W_{0,\dots,0,1,1,0,\dots,0}$ where the two $1$'s are at positions~$2k-1$ and
$2k$. Then
$$
  \bH=\frac12\sum_{k=1}^NX_k+\sum_{j=1}^{M-1}Y_j-Y_M-\frac12\sum_{j=1}^MW_j\;,
$$
where $X_1,\dots,X_{N},Y_1,\dots,Y_M$ are the generators of $\Pi(\cA)$, constructed in Section
\ref{eq:volterra_second}.  We need to compute $\p_\bH x_\ell$, which we do for even $\ell$ only, the computation
for odd $\ell$ being very similar. Then the only non-zero brackets $\pbm{X_k}{x_\ell}$, $\pbm{Y_j}{x_\ell}$ and
$\pbm{W_k}{x_\ell}$ are according to \eqref{eq:mod_volt_3} -- \eqref{eq:mod_volt_4} given by
\begin{equation*}\renewcommand*{\arraystretch}{1.5}
\begin{array}{rclrcl}
  \pbm{X_{\ell-1}}{x_\ell}&=&-2x_{\ell-1}^2x_\ell\;,&  \pbm{X_{\ell+1}}{x_\ell}&=&2x_{\ell}x_{\ell+1}^2\;,\\
  \pbm{Y_{\ell/2-1}}{x_\ell}&=&-x_{\ell-2}x_{\ell-1}x_\ell\;,&  \pbm{Y_{\ell/2}}{x_\ell}&=&x_{\ell}^2x_{\ell+1}\;,\\
  \pbm{W_{\ell/2-1}}{x_\ell}&=&2x_{\ell-1}x_\ell^2\;,&  \pbm{W_{\ell/2+1}}{x_\ell}&=&-2x_{\ell}x_{\ell+1}x_{\ell+2}\;.
\end{array}
\end{equation*}
Suppose that $1<\ell<N$ (recall that $\ell$ and $N$ are even). Then
\begin{align*}
  \p_{\bH} x_\ell&=\frac12\pbm{X_{\ell-1}+X_{\ell+1}}{x_\ell}+\pbm{Y_{\ell/2-1}+Y_{\ell/2}}{x_\ell}-
  \frac12\pbm{W_{\ell/2-1}+W_{\ell/2}}{x_\ell}\;\\
  &=x_\ell x_{\ell+1}^2-x_{\ell-1}^2x_\ell+x_\ell^2x_{\ell+1}-x_{\ell-2}x_{\ell-1}x_\ell+x_{\ell}x_{\ell+1}x_{\ell+2}-
  x_{\ell-1}x_\ell^2\;.
\end{align*}
Since the Volterra hierarchy and the ideal are invariant under the shift $x_i\mapsto x_{i+2}$, the above formula is
valid also for $\ell=N$. It is in fact valid for all $\ell$, even in the infinite ($N=\infty$) case.


\subsection{A system on the Grassmann algebra}\label{par:Grassmann_2}
We now consider some simple dynamics on the Grassmann algebra, which we already considered, together with its
deformation in Section \ref{par:Grassmann}. On $\cA_\h\simeq\cA[[\h]]$, consider the derivation defined for
$a\in\cA$ by
\begin{equation*}
  \p a=\frac1\h\lb{\fH,a}\;,\quad\hbox{where}\quad \fH=\frac12(p^2+x^2)+x\psi\phi\;.
\end{equation*}
It is already written in the Heisenberg form and the corresponding Hamiltonian $\bH\in\Pi(\cA)$ is given by
$\bH=\frac12(P^2+X^2)+XW$. The limiting derivation, for $\h\to0$, is given by
\begin{equation*}
  \p_{\bH}\psi=x\psi\;,\quad  \p_{\bH}\phi=-x\phi\;,\quad  \p_{\bH}p=x+\psi\phi\;, \quad \p_{\bH}x=-p\;.
\end{equation*}
This is an easy consequence of \eqref{eq:leibniz_for}, upon using Table \ref{tab:gras_2}. For example,
\begin{align*}
  \p_{\bH}\psi&=\pbm{\frac12(P^2+X^2)+XW}\psi=P\cdot\pbm P\psi+(X+W)\cdot\pbm X\psi+X\cdot\pbm W\psi=X\cdot
  \psi=x\psi\;,\\
  \p_{\bH}p&=\pbm{\frac12(P^2+X^2)+XW}p=P\cdot\pbm Pp+(X+W)\cdot\pbm Xp+X\cdot\pbm
  Wp=(X+W)\cdot 1=x+\psi\phi\;.
\end{align*}

\subsection{A hierarchy on the quantum plane}\label{simp}

We now consider an example related to the quantum plane which we considered in Section \ref{par:quantum_plane}. On
the free algebra $\bbC\langle x,y\rangle$, there is a hierarchy of commuting derivations $\p_n$, $n>0$, defined
by
\begin{equation}\label{uv}
  \p_nx = xy(y-x)^{n-1},\quad \p_ny = yx(y-x)^{n-1}\;;
\end{equation}
see \cite{MS2000CMP}.  The ideal $\cI_q:=\langle yx-qxy\rangle$ of $\bbC\langle x,y\rangle$ is a quantisation ideal
for each one of these derivations, since $\p_n(yx-qxy)\in\langle yx-qxy\rangle$. The derivations $\p_n$ therefore
descend to commuting derivations of the quantum plane $\frac{\bbC(q)\langle x,y\rangle}{\langle yx-qxy\rangle}$,
and they can be written in the Heisenberg form
\begin{equation}\label{eq:comm_form_1}
 \p_nx = \frac{1}{q^n-1}[\fH_n,x]\;,\quad \p_ny = \frac{1}{q^n-1}[\fH_n,y]\;,
\end{equation}
where $\fH^{(n)}=(y-qx)^n$. Notice that, using the $q$-binomial formula \cite[Prop.\ IV.2.2]{kassel},
\begin{equation*}
  \fH^{(n)}=(y-qx)^n=\sum_{k=0}^n\binom n k_q(-qx)^ky^{n-k}
    =y^n+(-q)^nx^n+\sum_{k=1}^{n-1}\binom n k_q(-qx)^ky^{n-k}\;,
\end{equation*}
where the $q$-binomial coefficients are given by
\begin{equation*}
  \binom n k_q=\frac{(q^n-1)(q^{n-1}-1)\dots(q^{n-k+1}-1)}{(q-1)(q^2-1)\dots(q^k-1)}\;.
\end{equation*}
It is easily checked that they satisfy the following well-known recursion relation
\begin{equation}\label{eq:binom_rec}
  \binom n k_q=\frac{q^n-1}{q^{n-k}-1}\binom {n-1}k_q\;,
\end{equation}
which we will use.  We consider the limiting derivation of \eqref{eq:comm_form_1} for $q\to\xi$, where $\xi$ is a
primitive $n$-th root of unity. We therefore set, as in Section \ref{par:quantum_plane}, $q(\h)=\xi+\h$. Since
$q(\h)^n-1=\xi^{-1}n\h+\cO(\h^2)$, while $\binom {n-1}k_q=\binom {n-1}k_\xi+\cO(\h)$, where the constant term is
non-zero, we get, using \eqref{eq:binom_rec}, for $0<k<n$,
\begin{equation*}
  \binom n k_{q(\nu)}=\frac{\xi^{-1}n\h}{\xi^{n-k}-1}\binom {n-1}k_\xi+\cO(\h^2)\;
\end{equation*}
and
\begin{equation}\label{eq:afs}
  \frac{\fH^{(n)}}{q(\h)^n-1}=\frac\xi{n\h}(y^n+(-1)^{n}x^n)+\sum_{k=1}^{n-1}\frac{(-\xi)^k\binom{n-1}k_\xi}
       {\xi^{n-k}-1}x^ky^{n-k}\pmod{\cH_\h}\;,
\end{equation}
so that
\begin{equation*}
  H_0^{(n)}=\frac\xi n(y^n+(-1)^nx^n)\;,\hbox{ and }
  H_1^{(n)}=\sum_{k=1}^{n-1}\frac{(-\xi)^k}{\xi^{n-k}-1}\binom{n-1}k_\xi x^ky^{n-k}\;.
\end{equation*}
Recall that the center of $\cA$ is generated by $X=(x^n,\cl0)$, $Y=(y^n,\cl0)$ and $W_{i,j}=\(0,\cl{x^iy^j}\)$,
with $0\leqslant i,j\leqslant n,\ i+j\neq0$ (see Section \ref{par:quantum_plane}). Setting
$\bH^{(n)}=(H^{(n)}_0,H^{(n)}_1)$ we can write $\bH^{(n)}$ in terms of these generators as
\begin{align*}
  \bH^{(n)}=\frac\xi n(Y+(-1)^nX)+\sum_{k=1}^{n-1}\frac{(-\xi)^k}{\xi^{n-k}-1}\binom{n-1}k_\xi W_{k,n-k}\;.
\end{align*}
Using Table \ref{tab:intro_2} the limiting derivation $\p_{\bH^{(n)}}=\pbm{\bH^{(n)}}{\cdot}$ is given by
\begin{align*}
  \p_{\bH^{(n)}}x&=\pbm{\bH^{(n)}}x=xy^n+\sum_{k=1}^{n-1}(-1)^k\xi^k \binom{n-1}k_\xi x^{k+1}y^{n-k}\;,\\
  \p_{\bH^{(n)}}y&=\pbm{\bH^{(n)}}y=(-1)^{n-1}x^ny+\sum_{k=1}^{n-1}(-1)^k\xi^{2k} \binom{n-1}k_\xi x^{k}y^{n-k+1}\;.
\end{align*}


\subsection{A hierarchy on the quantum torus}\label{par:quantum_torus_2}

We now consider an example related to the quantum torus (see Section~\ref{par:quantum_plane}, especially Remark
\ref{rem:quantum_torus}). According to \cite{Wolf2012}, Kontsevich considered on the algebra $\fA:=\bbC\langle
x,y,x^{-1},y^{-1}\rangle$ the derivation, defined by
\begin{equation}\label{ksym}
  \p_1x = -y^{-1}+xy-xy^{-1}\;,\quad
  \p_1y = x^{-1}-yx+yx^{-1}\;,
\end{equation}
together with a discrete symmetry, which we will not consider here, and conjectured the integrability of
\eqref{ksym} (and of the discrete symmetry). The integrability of (\ref{ksym}) was proven in \cite{Wolf2012}, where
a Lax representation for \eqref{ksym}, as well as a symmetry for it were found; the latter is given by
\begin{equation}\label{ksym2}
  \begin{array}{l}
    \p_2x =\phantom{-}xyx+xy^2-xy-(yx)^{-1}- y^{-2}- x^2y^{-1}+xyx^{ -1}-xy^{-2}-(yxy)^{-1}-x(yxy)^{-1}\;,\\
    \p_2y =-yxy-yx^2+yx+(xy)^{-1}+x^{-2}+y^2x^{-1}-yxy^{-1}+yx^{-2}+(xyx)^{-1}+y(xyx)^{-1}\;.
  \end{array}
\end{equation}
Since $\p_i(yx-qxy)\subset\langle yx-q xy\rangle\subset\fA(q)$, for $i=1,2$, both (\ref{ksym}) and \eqref{ksym2}
define a derivation of the quantum torus $\cA_q=\fA(q)/\langle yx-q xy\rangle$.  On $\cA_q$, (\ref{ksym}) can be
written in the Heisenberg form
\begin{equation}\label{eq:p1}
  \p_1x=\frac{1}{ q-1}[\fH^{(1)},x]\;,\quad \p_1y=\frac{1}{ q-1}[\fH^{(1)},y]\;,
\end{equation}
with $ \fH^{(1)}=qx^{-1} y^{-1}+q y^{-1}+y+qx+x^{-1}$. Taking $\fH^{(2)}:=\(\fH^{(1)}\)^2-(1+q)\fH^{(1)}-4q$,
%
%
we can recast (\ref{ksym2}) also in the Heisenberg form
\begin{equation}\label{eq:p2}
  \p_2x=\frac{1}{ q^2-1}[\fH^{(2)},x]\;,\quad
  \p_2y=\frac{1}{ q^2-1}[\fH^{(2)},y]\;.
\end{equation}
This gives an alternative proof that (\ref{ksym}) and \eqref{ksym2} are derivations of $\cA_q$. Expanded, and
using the commutation relation $yx=qxy$ (which implies for example that $y^{-1}x=q^{-1}xy^{-1}$), $\fH^{(2)}$ can
be written as
\begin{align}\label{eq:H2_expand}
  \fH^{(2)}&=y^2+q^2x^2+q^2y^{-2}+x^{-2}+q^3x^{-2}y^{-2}\nonumber\\
  &\quad+(q+1)(-y-qx+qxy+qxy^{-1}+q^{-1}x^{-1}y+q^2x^{-1}y^{-2}+qx^{-2}y^{-1})\;.
\end{align}

We first consider the limits of \eqref{eq:p1} and \eqref{eq:p2} when $q\to 1$. In this case
$\cA=\bbC[x,x^{-1},y,y^{-1}]$ is the algebra of Laurent polynomials in two variables, in particular it is
commutative and we are in the case of a classical limit. Setting $q(\h)=1+\h$, the limit $\p_{\bH^{(1)}}$ is given
by
\begin{equation*}
  \p_{\bH^{(1)}}x=\pb{x^{-1} y^{-1}+ y^{-1}+y+x+x^{-1},x}=x^{-1}\pb{y^{-1},x}+\pb{y^{-1}+y,x}=xy-(x+1)y^{-1}\;,
\end{equation*}
and similarly $\p_{\bH^{(1)}}y=xy+(y+1)x^{-1}.$ We can use this result for computing $\p_{\bH^{(2)}}$ since
$\fH^{(2)}$ is a polynomial in $\fH^{(1)}$. The result is that
\begin{equation*}
  \p_{\bH^{(2)}}x=(H-1)(xy-(x+1)y^{-1})\;,\qquad  \p_{\bH^{(2)}}y=(H-1)((y+1)x^{-1}-xy)\;,
\end{equation*}
where $H$ is $\fH^{(1)}$ evaluated at $q=1$, that is $H=x^{-1} y^{-1}+ y^{-1}+y+x+x^{-1}$.

To finish, we consider the limiting derivation of $\p_2$ when $q\to-1$, setting $q=-1+\h$. We easily get from
\eqref{eq:H2_expand}
\begin{equation*}
  \frac{\fH^{(2)}}{q(\h)^2-1}=\frac1{\h}\(H^{(2)}_0+\h H^{(2)}_1\)\pmod{\cH_\h}\;,
\end{equation*}
where
\begin{align*}
  H^{(2)}_0&=-\frac1{2}\(x^2+y^2+x^{-2}+y^{-2}-x^{-2}y^{-2}\)\;,\\
  H^{(2)}_1&=-\frac12\(x-y-xy-xy^{-1}-x^{-1}y+x^{-1}y^{-2}-x^{-2}y^{-1}\)\;.
\end{align*}
Let $\bH^{(2)}:=(H^{(2)}_0,H^{(2)}_1)$. Then, in terms of the generators of $\Pi(\cA)$,
\begin{equation*}
  \bH^{(2)}=-\frac12\(X+Y+X^{-1}+Y^{-1}-X^{-1}Y^{-1}+U-V-W-Y^{-1}W+X^{-1}Y^{-1}U-X^{-1}Y^{-1}V\)\;.
\end{equation*}
Using \eqref{eq:leibniz_for} and Table \ref{tab:vol_2},
\begin{equation*}
  \p_{\bH^{(2)}}x=
  \pbm{\bH^{(2)}}x=\pp{\bH^{(2)}}Y\cdot(-2xy^2)+\pp{\bH^{(2)}}V\cdot(-2xy^2)+\pp{\bH^{(2)}}W\cdot(-2x^2y)\;.
\end{equation*}
As we already pointed out, $U$, $V$ and $W$ act trivially on $\cA$ and we get
\begin{align*}
  \p_{\bH^{(2)}}x&=(1-Y^{-2}+X^{-1}Y^{-2})\cdot xy^2-(1+X^{-1}Y^{-1})\cdot xy-(1+Y^{-1}+X^{-1})\cdot x^2y\\
    &=-y-xy+xy^2-xy^{-2}+x^{-1}y^{-2}-x^{-1}y^{-1}-x^2y-x^2y^{-1}\;,
\end{align*}
and similarly
\begin{equation*}
  \p_{\bH^{(2)}}y=x-xy-x^{-1}y^{-1}+xy^2-x^2y+x^{-1}y^2+x^{-2}y-x^{-2}y^{-1}\;.
\end{equation*}
The derivations $\p_1$ and $\p_2$ admit higher symmetries $\p_n$ of the form
\begin{equation*}
  \p_nx=\frac{1}{1-q^n}[\fH^{(n)},x]\;\quad   \p_ny=\frac{1}{1-q^n}[\fH^{(n)},y]\;,
\end{equation*}
for which the limiting derivation as $q\to\xi$, with $\xi$ an $n$-th root of unity, can be obtained in the same
way.



\end{document}